\documentclass[11pt,letterpaper,english]{article}

\usepackage{babel}
\usepackage{authblk}

\usepackage{algorithm}
\usepackage{algorithmicx}
\usepackage{algpseudocode}
\usepackage{url}
\usepackage{caption}

\usepackage{hyperref}

\newcommand{\Input}{\item[\textbf{Input:}]}

\usepackage[toc,page]{appendix}

\usepackage{amssymb,amsmath,amsfonts,amsthm}

\usepackage[margin=2cm]{geometry}
\usepackage{amsfonts,amssymb,tabularx}
\usepackage{fancyhdr}
\usepackage{graphicx}

\usepackage{float}
\usepackage[nottoc,notlot,notlof]{tocbibind}
\usepackage{color}
\usepackage{enumerate}

\usepackage{cleveref}
\usepackage{algorithm,algpseudocode}
\usepackage{lscape}
\usepackage{multirow}
\usepackage{booktabs}
\usepackage{threeparttable}
\usepackage[rightcaption]{sidecap}
\usepackage{wrapfig,lipsum}
\usepackage{array}

\theoremstyle{remark}

\newtheorem{lemma}{Lemma}

\usepackage[
    backend=biber,
    style=apa,
    uniquename=false,
    uniquelist=false
]{biblatex}

\DeclareLanguageMapping{english}{english-apa}
\renewbibmacro{in:}{}
\DeclareFieldFormat{pages}{#1}

\usepackage{pict2e}

\usepackage{pgfkeys}
\usepackage{xcolor}

\usepackage{subcaption}
\usepackage{pgfgantt}

\usepackage{verbatim}
\usepackage{pgfplots}
\usepackage{etoolbox}

\usepackage{pstricks}
\usepackage{longtable}

\date{\normalsize \today}

\pgfplotsset{compat=1.18}
\begin{document}

\title{Waiting time analysis in a finite‑capacity multi‑server systems with dynamic priorities, dynamically evolving customer types, and abandonment}

\author[a,b]{M. Abdullah Khokhar\footnote{Corresponding author: Email address: MuhammadAbdullah.Khokhar@utas.edu.au (M. Abdullah Khokhar)}}
\author[a,c]{Ma{\l}gorzata M. O'Reilly}
\author[d]{Richard Turner}
\affil[a]{School of Natural Sciences, Discipline of Mathematics, University of Tasmania, Australia}
\affil[b]{email: MuhammadAbdullah.Khokhar@utas.edu.au}
\affil[c]{email: Malgorzata.OReilly@utas.edu.au}
\affil[d]{School of Medicine, University of Tasmania, Australia, email: Richard.Turner@utas.edu.au}
\maketitle

\begin{abstract} 
In many service systems, an estimation of customers' waiting times for the service can assist in decision making focused on enhancing the operational efficiency, improving the customers' experience, and ensuring efficient resource allocation. In this paper, we study the customers' waiting times in a finite-capacity service system with a finite number of parallel servers and a shared waiting area. We consider two customer types, Type~$1$ and Type~$2$, with dynamic admission priorities, dynamically evolving customer type, and abandonment.

We model the system under such assumptions using a continuous-time Markov chain (CTMC) and develop a methodology based on Krylov subspace {\em approximation} methods to evaluate the conditional waiting time distributions of Type~$1$ and Type~$2$ customers in the system. This methodology (CTMC-Krylov) offers a scalable computational approach that is well suited for analysing large complex systems. 

Next, we model this system using a quasi-birth-and-death (QBD) process and derive analytical expressions building on matrix-analytic methods to evaluate the conditional and long-run waiting time distributions using {\em recursion}.

We illustrate the practical applicability of our models in a hospital system through a suite of numerical examples based on a large dataset obtained from a tertiary referral hospital in Australia, considering two types of patients, complex (Type~$1$) and other (Type~$2$). We compare the conditional waiting time distributions of complex and other patients obtained from the CTMC-Krylov method with the QBD-based approach in the scenarios when complex patients have absolute priority for admission. The comparison shows that CTMC model together with Krylov subspace methods gives results similar to those obtained from the QBD model, with an error tolerance less that $10^{-7}$.

Further, in our examples, we evaluate the long-run waiting time distributions using the CTMC model and the stationary distribution obtained from the QBD model. We analyse the impact of admission priorities given to the patients in the waiting area and give valuable insights to support operational decision-making in large complex service systems, such as hospital.
\end{abstract}

\noindent \textbf{Keywords:} {Quasi-birth-and-death process; Markov chain;  waiting time; multi-server systems; dynamic priorities; dynamic customer type; abandonment; hospital system.}

\noindent{\bf Mathematics Subject Classification:}\quad 60K25 -- 60J22 -- 60J27 -- 60J28

\section{Introduction}\label{sec:introduction}

Waiting time is an important performance measure in many service systems and a useful indicator of system efficiency and quality of service. In many service systems, customers arrive randomly and compete for limited service resources. When all servers are occupied, customers wait in a queue, and their waiting time may depend on their position in the queue, the service discipline, and the overall system dynamics. Queues may grow excessively large, leading to unnecessarily long waiting times. According to~\textcite{komashie2015integrated}, long waiting times are directly associated with lower customers' satisfaction. In service systems such as healthcare, prolonged waiting times may lead to deterioration in patients’ conditions and adverse clinical outcomes.~\textcite{masoumi2022m} noted that even a marginal increase in waiting times can lead to disproportionately severe consequences, including patients' mortality. Therefore, efforts are made to estimate customers' waiting times when making decisions with the aim of optimising resource allocation, improving throughput, and enhancing customer experience, see~\textcite{worlitz2019waiting}.

The analysis of waiting times in multi-class, finite-capacity service systems with customer priorities has received considerable attention in the queueing theory literature. Many researchers focus on reducing or eliminating waiting times for customers requiring urgent service by modeling systems using preemptive priority queueing models. Under preemptive priority, a high‑priority customer receives service immediately upon arrival by interrupting the service of a low‑priority customer, if the system is full. This mechanism ensures that waiting times for high‑priority customers are minimally affected by the congestion.

Several studies have employed such preemptive priority multi-class queueing models to analyse waiting times in service systems. For example,~\textcite{Wang2015733} studied a two-class preemptive queueing system $M/M/c$, and developed analytical and numerical methods to compute waiting times distributions by transforming the original two-dimensional Markov process to one-dimensional through priority‑based state aggregation.~\textcite{Fajardo2017255} developed a Laplace–Stieltjes transform (LST) based approach using the maximal priority process to derive steady-state waiting time distributions in preemptive accumulating priority queues.~\textcite{ghanbari2022novel} modeled an emergency department as a preemptive priority queue $M/M/c$ and derived analytical expressions for waiting times of different priority classes of patients.

On the other hand, some authors argue that preemptive priority may be unreasonable in service systems where service interruption is costly, impractical, or undesirable, see~\textcite{hou2020using}. In such systems, non‑preemptive priority queueing models are often adopted, where priority may influence the order of service initiation but ongoing services are allowed to complete. For example,~\textcite{hou2020using} modeled a healthcare system as a non-preemptive priority queue in which patients are classified according to urgency, allowing high-priority patients to receive earlier access to service without interrupting patients already in treatment, and derived analytical expressions for waiting times for different priority classes.

While priority-based service disciplines have been widely adopted in waiting time analysis, they are not sufficient to fully capture the dynamics of real-world systems. In practice, customers may become impatient and abandon the queue if their waiting time is too long. The abandonment introduces additional complexity in the analysis of waiting time distributions. To address such complexities, many researchers have developed analytical and numerical approaches for evaluating waiting time distributions in service systems with abandonment~\parencite{Yoshiaki2019524, Evdokimova201833, Mandelbaum2002149}.

We note that a powerful approach to model multi‑class service systems with admission priorities and abandonment is the class of models referred to as the quasi‑birth‑and‑death (QBD) processes developed in the theory of matrix‑analytic methods, see~\textcite{Neuts_1981},~\textcite{he2014fundamentals}, and~\textcite{latouche1999introduction}. Matrix‑analytic methods are a collection of applied probability techniques for analysing structured Markov chains, particularly QBD processes, by expressing stationary and transient performance measures in terms of matrix equations and their numerical solutions. These methods are widely employed as computational tools in complex queueing models, see~\textcite{WU2026102555},~\textcite{2016JF},~\textcite{phung2010simpleNEW}, and~\textcite{aksamit2024sensitivities}.

QBD processes allow the system to be described in terms of both its size and its composition and for this reason have been used to analyse priority queueing systems and derive key performance measures . For instance,~\textcite{Rastpour20221693} modeled two-class $M/M/s+M(\text{Pr})$ queueing system with preemptive priorities and reneging, using level-dependent QBD process. They developed an algorithm to evaluate waiting time distributions for different priority classes, and demonstrate how priority rules influence waiting time.~\textcite{Drekic201587} modeled a deceased‑donor transplant waiting list using a QBD process and explicitly analysed patient waiting time distributions. In their analysis, they applied matrix‑analytic methods and provided insights into how system parameters and allocation rules affect waiting times of patients on the transplant waiting list.

Many studies that incorporate priority mechanisms, typically impose absolute priority rules, where one class of customers is always served before another. The queueing models with absolute priority rule for one class of customers have been studied by~\textcite{almehdawe2019optimization},~\textcite{Drekic201587},~\textcite{Rastpour20221693},~\textcite{d2025strategic},~\textcite{KhademiLiu2024}, and~\textcite{CaoXie2016}. Some of these studies also consider abandonment and customers' type change.

The studies on two-class service systems have shortcomings in the following main areas. 
\begin{itemize}
    \item Most studies that consider preemptive or non-preemptive priorities for the priority customers typically impose absolute priority in the sense that priority customers must be served before others. This assumption may be unreasonable. First, absolute priority rules can result in over-servicing priority customers while under-serving other, leading to inefficiencies and inequities in access, as argued by~\textcite{Stanford2014297}. Second, absolute priority rules may be appropriate for minimising overall mean waiting times, but they can lead to very long waiting times for the customers with low priority, as argued by~\textcite{Haviv2016505}. The long waiting times may be inappropriate for service systems such as healthcare, where delays in service for low priority patients can lead to deterioration in their health and can increases the number of patients leaving without being seen. 
    \item Many studies, such as~\textcite{CaoXie2016} and~\textcite{KhademiLiu2024} that consider customer class change during the waiting period, also assume fixed or absolute priority for high-priority customers. In such models, low-priority customers in the queue may be unable to receive service until they transition into a high-priority class. In a service system such as healthcare, this means that less critical patients may not receive timely care unless their health deteriorates to a critical level. Consequently, this can lead to increased resource consumption and longer service times, as patients may require more intensive treatment once their health condition worsens.
    \item In service systems such as healthcare, preemptive priority for critical patients may have unintended consequences for other patients. Although this approach helps ensure that critical patients receive timely care, it may leave other patients vulnerable to health deterioration, thereby increasing the likelihood of revisits and placing additional demand on the system.
\end{itemize}

In this paper, we study waiting times of customers in a two-class multi-server system with a finite waiting area. Unlike models that assume absolute service priority for a single customer class, here we consider general non‑preemptive admission priorities as well as dynamically evolving customer type and abandonment. We model this system using a continuous-time Markov chain (CTMC) and apply the Krylov subspace approximation methods (see~\textcite{van1996matrix} and~\textcite{saad2011numerical}) to evaluate the distributions of the customers' waiting times. Our approach offers a scalable computational framework that is well suited for analysing large-scale complex systems.

Next, we construct a level-dependent quasi-birth-and-death (QBD) model for this system and derive expressions for the LSTs of waiting time distributions. Then the numerical inversion algorithms by~\textcite{DenIseger_2006} and~\textcite{horvath2020numerical} can be applied to obtain probability density functions. This approach to evaluate the waiting time distributions has been adopted by many researchers, see~\textcite{Kim2013286} and~\textcite{Fajardo2017255}. Our objective is to compare the distributions of waiting times obtained from the CTMC-Krylov approximation methods with the respective QBD-based distributions. This helps validate the accuracy of the distributions obtained from CTMC–Krylov approach. Further, we derive the expressions for computing the long-run waiting time distributions. Those expressions involve the long-run probabilities of observing specific number of customers in the system, which we evaluate as the stationary distribution of the QBD model. 

We note that the QBD model is not always computationally feasible for large-scale systems. The state space of the QBD grows rapidly with the system size, leading to a substantial increase in the dimensionality of the generator matrix. Consequently, the associated computations become increasingly demanding in terms of both memory and computational time. This limits the applicability of the QBD approach when analysing realistically sized service systems. In contrast, the CTMC–Krylov based approach is computationally efficient and significantly reduces both memory requirements and computational complexity. Moreover, the computational cost of the Krylov-based approach scales favourably with the size of the state space, making it feasible to analyse systems of realistic size.

We illustrate the application potential of our models to multi-server systems by analysing the conditional and long-run waiting times of patients in an example of a hospital system. The model parameters we use in our analysis are based on a large dataset from a tertiary referral hospital in Australia. We analyse the impact of customer admission priorities on their waiting times and compare the results of the QBD and CTMC models. The comparison shows that CTMC model together with Krylov subspace method produces results similar to those obtained from the QBD model, with an error tolerance less that $10^{-7}$. We give useful insights to support decision-making in hospital management.

In summary, the following are the key contributions of this paper.
\begin{itemize}
\item We consider a two-class finite-capacity multi-server system with dynamic customers' priorities, dynamically evolving customer type, and abandonment. 

\item We define the LSTs of the distributions of the conditional and long-run waiting times.

\item We apply a CTMC to model this system, and describe a methodology based on Krylov subspace approach to evaluate the distributions of the conditional waiting times.

\item We also construct suitable level-dependent QBDs useful for the derivation of the waiting times. 

\item We illustrate the application potential of our methodology to multi-server systems through an example of a healthcare system, based on data obtained from a tertiary referral hospital in Australia.

\item We assess the accuracy of the CTMC-Krylov method by comparing with the conditional waiting time distributions obtained from the QBD-based approach.

\item We give valuable insights from the analysis to support the decision-making in hospital management.
\end{itemize}
We note that we refer to entities in our general models as \textit{customers} to keep the exposition general. When discussing the application to a healthcare system, we use the term \textit{patients}.

\subsection{Structure of the paper}

The rest of the paper is structured as follows. In section~\ref{sec:Problem}, we describe the problem and notations. In Section~\ref{sec:CTMCoriginal}, we give the CTMC model and the Krylov subspace method to compute the waiting time distributions. In Section~\ref{sec:QBDmodelIII}, we give the QBD model and the expressions to compute the waiting time distribution. Section~\ref{Numerical_examples} describe the numerical example, model parameters, and the stationary analysis performed via the QBD model. Section~\ref{Conditional_waiting_times_beta_0_r1_1_r2_0} gives the conditional waiting times analysis under both the CTMC and the QBD models. Section~\ref{longrun_waiting_times_beta_r1_r2} gives long-run waiting time analysis. We give a discussion of our analysis in Section~\ref{discussion}, a conclusion in Section~\ref{sec:Conclusion}, and limitations and future directions in Section\ref{limitations}.

\section{Problem description}\label{sec:Problem}

We consider a finite-capacity queueing system with total capacity $N$, comprising $B<N$ parallel servers and a shared waiting area of size $N-B$. The system serves $i$ types of customers, $i=1,2$. We assume that the Type~$i$ customers arrive at Poisson rates $\lambda_i$ and their service times follow exponential distribution with rate $\mu_i$, for $i=1,2$. An arriving customer is promptly assigned to a server if there is one available on their arrival. Otherwise, they enter the waiting area and queue until a server becomes available. The queue in the waiting area is divided into $i=1,2$ queues, each for the respective Type~$i$ customers, which we refer to as Queue~$i$, respectively. When a customer arrives and all servers are occupied, they join their respective queue from the end. When the system is full, the arriving customers are redirected to the nearest facility.

Furthermore, we consider the possibility that waiting customers may abandon the system before receiving service. Specifically, a Type~$i$ customer in the waiting area independently leaves the system after an exponentially distributed patience time, at rate $\chi_i \geq 0$. Figure~\ref{QBDsystem} provides a schematic representation of the system.
\begin{figure}[!htbp]
    \centering    
\includegraphics[scale=0.5]{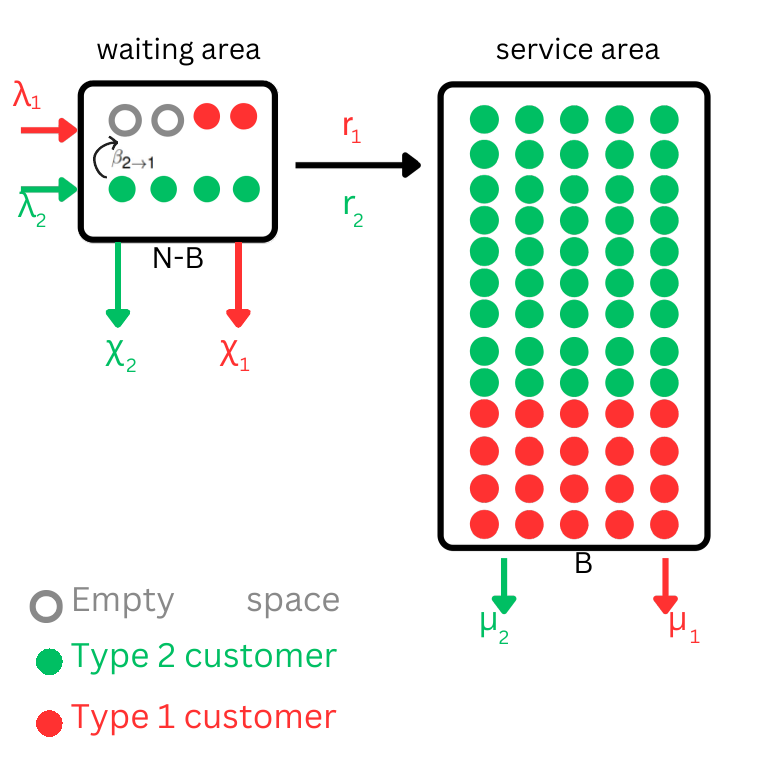}
    \caption{Structure of the system under consideration.}
    \label{QBDsystem}
\end{figure}
We assume that if a server becomes available (due to a customer's departure) at the time when there are some customers in both queues, such server is assigned to a customer in a first-come-first-served (FCFS) discipline, giving priority to Type~$i$ customers with probability $r_i$, for $i=1,2$, where $r_1+r_2=1$. For instance,  if $r_1=1$ (and so $r_2=0$), then Type~$2$ customers are allocated to servers based on their arrival order only when the Queue~$1$ is empty. If one queue is empty while the other contains customers at the moment a server becomes available, the server is assigned to the customer at the front of the non-empty queue. We assume non‑preemptive priority, that is, customers already in service continue uninterrupted until completion, and priority rules apply only when a server becomes available. Further, we assume that customers in the queues may change their type from $2$ to $1$ at a Poisson rate $\beta_{2\to 1}\geq 0$. In the event that a Type~$2$ customer in the waiting area transitions to Type~$1$, they will be positioned at the back of Queue~$1$.

We consider a problem of estimating waiting times of Type~$i$ customers at any position in their respective queue in the above system. Let $T_1(n, b_1, m_1)$ denote the random variable that records the waiting time of a Type~$1$ customer, given that they are the $m_1$-th ($m_1\leq w_1$) Type~$1$ customer in the waiting area and there are $n$ customers in the system with $b_1 \geq 0$ servers occupied by Type~$1$ customers, for some $m_1 > 0$. It follows that all servers are occupied, implying $n > B$ and $b_2 = B - b_1$. Let $f_1(n, b_1, m_1)(t)$ denotes the corresponding (conditional) probability density function for all $t > 0$, and define its Laplace–Stieltjes transform (LST) as
\begin{equation}
\widetilde{F}_1(n, b_1, m_1)(s) = \int_0^{\infty} e^{-st} f_1(n, b_1, m_1)(t) \, dt,
\end{equation}
and let $ \mathbb{E}(T_1(n, b_1, m_1))$ be the corresponding mean.

Further, let $T_2(n, b_1, w_1, m_2)$ denotes the random variable that records the waiting time of a Type~$2$ customer, given that they are the $m_2$-th ($m_2\leq w_2$) Type~$2$ customer in the waiting area, and there are $n$ customers in the system in total with $w_1 \geq 0$ Type~$1$ customers in the waiting area and $b_1 \geq 0$ beds occupied by Type~$1$ customers, for some $m_2 > 0$. It follows that $n > B$ and the total number of Type~$2$ customers in the waiting area is $n - B - w_1 \geq m_2$. Let $f_2(n, b_1, w_1, m_2)(t)$ denotes the corresponding (conditional) probability density function for all $t > 0$, and define its Laplace–Stieltjes transform (LST) as
\begin{equation}
\widetilde{F}_2(n, b_1, w_1, m_2)(s) = \int_0^{\infty} e^{-st} f_2(n, b_1, w_1, m_2)(t) \, dt,
\label{Type2withNoChange}
\end{equation}
and let $ \mathbb{E}(T_2(n, b_1, w_1,m_2))$ be the corresponding mean.

Furthermore, let $T_1$ denotes the long-run waiting time of Type~$1$ customers, and let $f_1(t)$ be the corresponding (unconditional) probability density function for all $t > 0$, defined by
\begin{equation}
f_1(t) = \sum_{(n, b_1, w_1)} \pi_{(n-1, b_1, w_1-1)} \cdot f_1(n, b_1, w_1)(t),
\label{longrun_density_T1}
\end{equation}
where $\pi_{(n-1, b_1, w_1-1)}$ is the long-run proportion of time we observe $n-1$ customers in the system with $b_1$ Type 1 customers at the servers and $w_1-1$ Type 1 customers in the waiting area, interpreted as the proportion of time an arriving Type~$1$ customer is the last in Queue~$1$ and just before they enter they observe $n-1$ total customers in the system, and $b_1$ and $w_1-1$ Type~$1$ customers in the servers and in the waiting area, respectively. It follows that the corresponding mean is
\begin{equation}
\mathbb{E}(T_1) = \sum_{(n, b_1, w_1)} \pi_{(n-1, b_1, w_1-1)} \cdot \mathbb{E}(T_1(n, b_1, w_1)).
\label{E(T1)}
\end{equation}

Moreover, let $T_2$ denotes the long-run waiting time of Type~$2$ customers, and let $f_2(t)$ be the corresponding (unconditional) probability density function for all $t > 0$, defined by
\begin{equation}
f_2(t) = \sum_{(n, b_1, w_1)} \pi_{(n-1, b_1, w_1)} \cdot f_2(n, b_1, w_1, w_2)(t),
\label{longrun_density_T2}
\end{equation}
with $n=B+w_1+w_2$, where $\pi_{(n-1, b_1, w_1)}$ is the long-run proportion of time we observe $n-1$ customers in the system with $b_1$ Type 1 customers at the servers and $w_1$ Type 1 customers in the waiting area, interpreted as the proportion of time an arriving Type~$2$ customer is the last in Queue~$2$ and just before they enter they observe $n-1$ total customers in the system, and $b_1$ and $w_1$ Type~$1$ customers in the servers and in the waiting area, respectively. It follows that the corresponding mean is
\begin{equation}
\mathbb{E}(T_2) = \sum_{(n, b_1, w_1)} \pi_{(n-1, b_1, w_1)} \cdot \mathbb{E}(T_2(n, b_1, w_1, w_2)).
\label{E(T2)}
\end{equation}

\section{CTMC priority model ($\beta_{2\to1}\geq0$, $r_1, r_2\in[0,1]$)}\label{sec:CTMCoriginal}

We construct a CTMC model for the system described in Section~\ref{sec:Problem} with the aim of estimating the waiting times of Type~$i$, $i=1,2$, customers. We tag one such customer whose waiting time distribution is of interest. Let $R(t)$ denote the \textit{rank} of the tagged Type~$i$, $i=1,2$, customer at time $t$, where the \textit{rank} is defined as the total number of Type~$1$ and Type~$2$ (if any) customers ahead of the tagged customer in the waiting area, plus the tagged customer itself. As an example, if there are $w_1$ Type~$1$ customers in the waiting area and the tagged Type~$1$ customer of interest is the $m_1$-th customer in Queue~$1$ at time $t$, then $R(t) = m_1 \leq w_1$. Similarly, if there are $w_1$ Type~$1$ customers and $w_2$ Type~$2$ customers in the waiting area and the tagged Type~$2$ customer of interest is the $m_2$-th customer in Queue~$2$  at time $t$, then $R(t) = w_1+m_2$, with $m_2 \leq w_2$. Note that if $R(t) = r \leq w_1$, the rank $r$ corresponds to a Type‑$1$ customer, whereas if $R(t) = r > w_1$, the rank refers to a Type~$2$ customer in the waiting area.

Further, let $B_1(t)$ and $W_1(t)$ be the random variables that record the total number of Type~$1$ incustomers (customers in the servers) and Type~$1$ customers in the waiting area at time $t$, respectively. We denote by $B_2(t)$ and $W_2(t)$ the number of Type~$2$ incustomers and Type~$2$ customers in the waiting area, respectively, at time $t$. Then, we have $X(t)=B_1(t)+B_2(t)+W_1(t)+W_2(t)$, with $B_1(t)+B_2(t)\leq B$ and $W_1(t)+W_2(t)=\max\{0,X(t)-B\}$, where $X(t)$ denotes the total number of customers in the system. Note that given $(X(t),B_1(t),W_1(t))=(n,b_1,w_1)$, it follows that the number of Type~$2$ customers ($w_2$) in the waiting area, and the number of Type~$2$ incustomers ($b_2$) currently in the system are given by
\begin{align}
w_2 &= \max\{0, n - B - w_1\}, \label{w2} \\
b_2 &= \min\{n, B\} - b_1. \label{b2}
\end{align}

We consider a continuous-time Markov process $\{(R(t), B_1(t), W_1(t), W_2(t)): t \geq 0\}$ with a state space 
\begin{eqnarray*}\label{eq:statespace_Rt_Type2}
\mathcal{S}&=&
\{(r,b_1,w_1,w_2):r=0,1,\ldots,N-B;b_1=0,1,\ldots,B;
\\
&&
w_1,w_2=0,1,\ldots,N-B; r\leq w_1+w_2 \leq N-B\},
\end{eqnarray*}
where, $R(t) = r=0$ represents an absorbing state, observed when the tagged customer is allocated to a server. The generator matrix of the process is 
\begin{eqnarray*}
{\bf Q} = 
\begin{bmatrix}
{\bf T} & {\bf T_0} \\
{\bf O} & {\bf O}
\end{bmatrix},
\end{eqnarray*}
where ${\bf T}$ denotes the transient rates submatrix and ${\bf T_0}$ is the exit rate vector. The transition rates of the process are given in Table~\ref{tab:QBDgenerator_Rt} for $r_1=1$, $r_2=0$, whereas Table~\ref{tab:QBDgenerator_Rt_CTMCs} presents the transition rates for $r_1,r_2\in[0,1]$.

\begin{table}[!htbp]
		\centering
        
		\setlength{\tabcolsep}{1pt}	
		\begin{tabular} {l@{\hspace{1cm}} l@{\hspace{1cm}}l@{\hspace{1cm}} l}

  \hline
   	$\text{Conditions}$&	\text{Current state} &	$\text{New state}$ & $\text{Rate}$\\

\hline
$w_1+w_2<N-B$ &  &	 &	\\

\quad $r>w_1$ & $(r,b_1,w_1,w_2)$  &	$(r+1,b_1,w_1+1,w_2)$  &	 $\lambda_1$\\

\quad $1\leq r\leq w_1$ & $(r,b_1,w_1,w_2)$  &	$(r,b_1,w_1+1,w_2)$  &	 $\lambda_1$\\

\quad $1\leq r<N-B$ & $(r,b_1,w_1,w_2)$  &	$(r,b_1,w_1,w_2+1)$  &	 $\lambda_2$\\

\hline
$w_2>0$ & & & \\

\quad $1\leq r\leq w_1$  & $(r,b_1,w_1,w_2)$  &	$(r,b_1,w_1+1,w_2-1)$  &	 $w_2\beta_{2\to1}$\\

\quad $r >w_1$ & $(r,b_1,w_1,w_2)$  &	$(r+1,b_1,w_1+1,w_2-1)$  &	 $(w_1+w_2-r) \beta_{2\to1}$\\

\quad $r >w_1$ & $(r,b_1,w_1,w_2)$  &	$(r,b_1,w_1+1,w_2-1)$  &	 $(r-w_1-1) \beta_{2\to1}$\\

\quad $r>w_1\geq0$ & $(r,b_1,w_1,w_2)$  &	$(w_1+1,b_1,w_1+1,w_2-1)$  &	 $\beta_{2\to1}$\\

\hline

$b_1>0$ & & & \\

\quad $r>w_1\geq1$ & $(r,b_1,w_1,w_2)$  &	$(r-1,b_1,w_1-1,w_2)$  &	 $b_1\mu_{1}$\\

\quad  $r>w_1=0$ & $(r,b_1,w_1,w_2)$  &	$(r-1,b_1-1,w_1,w_2-1)$  &	 $b_1\mu_{1}$\\

\quad  $1\leq r \leq w_1$ & $(r,b_1,w_1,w_2)$  &	$(r-1,b_1,w_1-1,w_2)$  &	 $b_1\mu_{1}$\\

\hline
$b_2>0$ & & & \\

\quad $r>w_1\geq1$ & $(r,b_1,w_1,w_2)$  &	$(r-1,b_1+1,w_1-1,w_2)$  &	 $b_2\mu_{2}$\\

\quad $r>w_1=0$ & $(r,b_1,w_1,w_2)$  &	$(r-1,b_1,w_1,w_2-1)$  &	 $b_2\mu_{2}$\\

\quad  $1\leq r \leq w_1$ & $(r,b_1,w_1,w_2)$  &	$(r-1,b_1+1,w_1-1,w_2)$  &	 $b_2\mu_{2}$\\

\hline

$1< r\leq w_1$  & $(r,b_1,w_1,w_2)$  &	$(r-1,b_1,w_1-1,w_2)$  & $(r-1)\chi_{1}$  \\

$1\leq r < w_1$  & $(r,b_1,w_1,w_2)$  &	$(r,b_1,w_1-1,w_2)$  & $(w_1-r)\chi_{1}$  \\

$r > w_1 \geq 1$ & $(r,b_1,w_1,w_2)$  &	$(r-1,b_1,w_1-1,w_2)$  & $w_1\chi_{1}$  \\

$w_1< r < w_1+w_2$  & $(r,b_1,w_1,w_2)$  &	$(r,b_1,w_1,w_2-1)$  & $(w_1+w_2-r)\chi_{2}$  \\

$w_1+1< r \leq w_1+w_2$  &$(r,b_1,w_1,w_2)$  &	$(r-1,b_1,w_1,w_2-1)$  & $(r-w_1-1)\chi_{2}$  \\

\hline
  \end{tabular}		
  \caption{Transition rates of the CTMC process under the assumptions: Abandonment with rates $\chi_i$, $i=1,2$, and change from Type~$2$ to Type~$1$ may be observed in the waiting area with $\beta_{2\to1}\geq 0$ ($\beta_{2\to 1}\neq0$ is possible), and $r_1=1$, $r_2=0$.}
\label{tab:QBDgenerator_Rt}
\end{table}

\begin{table}[!htbp]
		\centering
        
		\setlength{\tabcolsep}{1pt}	
		\begin{tabular} {l@{\hspace{1cm}} l@{\hspace{1cm}}l@{\hspace{1cm}} l}

  \hline
   	$\text{Conditions}$&	\text{Current state} &	$\text{New state}$ & $\text{Rate}$\\

\hline
$ w_1+w_2<N-B$ & & & \\

\quad $r>w_1$ & $(r,b_1,w_1,w_2)$  &	$(r+1,b_1,w_1+1,w_2)$  &	 $\lambda_1$\\

\quad  $1\leq r\leq w_1$ & $(r,b_1,w_1,w_2)$  &	$(r,b_1,w_1+1,w_2)$  &	 $\lambda_1$\\

\quad $1\leq r$  & $(r,b_1,w_1,w_2)$  &	$(r,b_1,w_1,w_2+1)$  &	 $\lambda_2$\\
\hline

$w_2>0$ & & & \\

\quad $1\leq r\leq w_1$ & $(r,b_1,w_1,w_2)$  &	$(r,b_1,w_1+1,w_2-1)$  &	 $w_2\beta_{2\to1}$\\

\quad $r >w_1$ & $(r,b_1,w_1,w_2)$  &	$(r+1,b_1,w_1+1,w_2-1)$  &	 $(w_1+w_2-r) \beta_{2\to1}$\\

\quad $r >w_1$ & $(r,b_1,w_1,w_2)$  &	$(r,b_1,w_1+1,w_2-1)$  &	 $(r-w_1-1) \beta_{2\to1}$\\

\quad $r>w_1$ & $(r,b_1,w_1,w_2)$  &	$(w_1+1,b_1,w_1+1,w_2-1)$  &	 $\beta_{2\to1}$\\

\hline

$b_1>0$, $w_1\geq1$ & & & \\

\quad $r>w_1$ & $(r,b_1,w_1,w_2)$  &	$(r-1,b_1,w_1-1,w_2)$  &	 $r_1 b_1\mu_{1}$\\

\quad $r>w_1+1$, $w_2>1$ & $(r,b_1,w_1,w_2)$  &	$(r-1,b_1-1,w_1,w_2-1)$  &	 $r_2 b_1\mu_{1}$\\

\quad $r=w_1+1$, $w_2>0$& $(r,b_1,w_1,w_2)$  &	$(0,b_1-1,w_1,w_2-1)$  &	 $r_2 b_1\mu_{1}$\\

\hline
$b_1>0$, $r>w_1=0$ & $(r,b_1,w_1,w_2)$  &	$(r-1,b_1-1,w_1,w_2-1)$  &	 $b_1\mu_{1}$\\

\hline
$b_1>0$, $1\leq r \leq w_1$ & & & \\

\quad $w_2=0$ & $(r,b_1,w_1,w_2)$  &	$(r-1,b_1,w_1-1,w_2)$  &	 $b_1\mu_{1}$\\

\quad  $w_2>0$ & $(r,b_1,w_1,w_2)$  &	$(r-1,b_1,w_1-1,w_2)$  &	 $r_1 b_1\mu_{1}$\\

\quad  $w_2>0$ & $(r,b_1,w_1,w_2)$  &	$(r,b_1-1,w_1,w_2-1)$  &	 $r_2 b_1\mu_{1}$\\

\hline
$b_2>0$, $r>w_1\geq1$ & & & \\

\quad $r>w_1$ & $(r,b_1,w_1,w_2)$  &	$(r-1,b_1+1,w_1-1,w_2)$  &	 $r_1 b_2\mu_{2}$\\

\quad $r>w_1+1$, $w_2>1$& $(r,b_1,w_1,w_2)$   &	$(r-1,b_1,w_1,w_2-1)$  &	 $r_2 b_2\mu_{2}$\\

\quad $r=w_1+1$, $w_2>0$ & $(r,b_1,w_1,w_2)$   &	$(0,b_1,w_1,w_2-1)$  &	 $ r_2 b_2\mu_{2}$\\

\hline
$b_2>0$, $r>w_1=0$ & $(r,b_1,w_1,w_2)$  &	$(r-1,b_1,w_1,w_2-1)$  &	 $b_2\mu_{2}$\\
\hline

$b_2>0$, $1\leq r \leq w_1$ & & & \\

\quad $w_2=0$ & $(r,b_1,w_1,w_2)$  &	$(r-1,b_1+1,w_1-1,w_2)$  &	 $b_2\mu_{2}$\\

\quad $w_2>0$ & $(r,b_1,w_1,w_2)$  &	$(r-1,b_1+1,w_1-1,w_2)$  &	 $r_1 b_2\mu_{2}$\\

\quad $w_2>0$ & $(r,b_1,w_1,w_2)$  &	$(r,b_1,w_1,w_2-1)$  &	 $r_2 b_2\mu_{2}$\\

\hline

$1< r\leq w_1$  & $(r,b_1,w_1,w_2)$  &	$(r-1,b_1,w_1-1,w_2)$  & $(r-1)\chi_{1}$  \\

$1\leq r < w_1$  & $(r,b_1,w_1,w_2)$  &	$(r,b_1,w_1-1,w_2)$  & $(w_1-r)\chi_{1}$  \\

$r > w_1 \geq 1$  &$(r,b_1,w_1,w_2)$  &	$(r-1,b_1,w_1-1,w_2)$  & $w_1\chi_{1}$  \\

$w_1< r < w_1+w_2$  & $(r,b_1,w_1,w_2)$  &	$(r,b_1,w_1,w_2-1)$  & $(w_1+w_2-r)\chi_{2}$  \\

$w_1+1< r \leq w_1+w_2$  & $(r,b_1,w_1,w_2)$  &	$(r-1,b_1,w_1,w_2-1)$  & $(r-w_1-1)\chi_{2}$  \\

\hline

\end{tabular}
		\caption{Transition rates of the CTMC process under the assumptions: Abandonment with rates $\chi_i$, $i=1,2$, and change from Type~$2$ to Type~$1$ may be observed in the waiting area with $\beta_{2\to 1}\geq 0$ ($\beta_{2\to 1}\neq0$ is possible), $r_1,r_2\in[0,1]$.}
\label{tab:QBDgenerator_Rt_CTMCs}
\end{table}

\subsection{Computational and memory complexity of the CTMC model} \label{comemcomplexity}

The CTMC model developed to evaluate the waiting time of a tagged Type~$i$ customer is structurally rich. However, the state space of the model is four-dimensional, and so the number of possible transient states grows rapidly with the total capacity $N$ and the number of beds $B$. The total number of states can be computed using the formula
\begin{equation}
|\mathcal{S}| = \sum_{r=0}^{N-B} \sum_{b_1=0}^{B} \sum_{w_1=0}^{N-B} \sum_{w_2=0}^{N-B} \mathbf{1}_{\{r \leq w_1 + w_2 \leq N - B\}},
\label{ctmc_system_size}
\end{equation}
where $\mathbf{1}$ represents an indicator function that gives a value $1$ (or $0$) when the condition inside is true (or false), respectively. Note that if we skip $r=0$ in the above formula, it gives the total number of possible transient states. We denote this by $|\mathcal{S}^*|$. The infinitesimal generator matrix ${\bf T}$ is then of size $\mathcal{S}^* \times \mathcal{S}^*$, that is, ${\bf T} \in \mathbb{R}^{|\mathcal{S}^*| \times |\mathcal{S}^*|}$. Although ${\bf T}$ is mostly sparse, storing and manipulating it in dense format requires memory of order $\mathcal{O}(|\mathcal{S}^{*}|^{2})$. Specifically, storing ${\bf T}$ in double-precision floating-point format requires $8 \times |\mathcal{S}^{*}|^{2}$ bytes. Additionally, storing the initial distribution vector $\boldsymbol{\alpha}$ and the column vector ${\bf 1}$ requires $8 \times |\mathcal{S}^{*}|$ bytes each.

For instance, when $N = 100$ and $B = 80$, the total number of states are $|\mathcal{S}| = 268,191$, of which $|\mathcal{S}^*| = 249,480$ are transient. This implies that storing ${\bf T}$ in dense format would require $8 \times (249,480)^2 \approx 463.73$ GB of memory. This clearly demonstrates that dense storage and direct computation are infeasible for realistically sized systems, motivating the need for memory-efficient computational techniques, described next.

\subsection{Sparse matrix representation}
\label{sparsematrixrep}

As a first step towards memory-efficient computation, we store and utilise the transition rate matrix ${\bf T}$ and the vector ${\bf t}=-{\bf T}{\bf 1}$ in  compressed sparse row (CSR) format. CSR significantly reduces memory usage by storing only the non-zero entries and their positions (see~\textcite{saad2011numerical}).

Let $n$ denotes the number of rows (or columns, since ${\bf T}$ is square), and $z$ denote the number of non-zero entries in ${\bf T}$. Each row in ${\bf T}$ corresponds to a transient state of the CTMC and contains only a small number of non-zero entries, denoted by $k$, where $1\leq k\leq 8$ in our model. For example, the state $(2,1,0,3)$ exhibits the maximum number of transitions in ${\bf T}$. From this state, the CTMC can transition to $(3,1,1,3)$, $(2,1,0,4)$, $(3,1,1,2)$, $(2,1,1,2)$, $(1,1,1,2)$, $(1,0,0,2)$, and $(1,1,0,2)$. Consequently, the row of ${\bf T}$ corresponding to $(2,1,0,3)$ contains eight non-zero entries: seven off-diagonal elements representing the transition rates to these states, and one diagonal entry that equals the negative of the total rate of leaving $(2,1,0,3)$. On the other hand, the state $(1,0,20,0)$ exhibits the minimum number of transitions in ${\bf T}$. It can only transition to $(0,1,19,0)$, which is an absorbing state and thus not included in ${\bf T}$. Therefore, within ${\bf T}$, the only non-zero entry in a row corresponding to $(1,0,20,0)$ is its diagonal element.

Therefore, $z=nk$. Assuming double-precision floating-point storage (8 bytes per value) and 32-bit integers (4 bytes per index), the total memory required is given by
\begin{eqnarray*}
    \text{Memory} &=& 8z \ (\text{values}) + 4z \ (\text{column indices}) + 4(n + 1) \ (\text{row pointers})\\
    &=& 12z + 4(n + 1) \quad \text{bytes}
\end{eqnarray*}
For example, take $N = 100$ capacity and $B = 80$ servers in our model. The number of transient states is $n = 249,480$. Assuming $k=6$, the above formula gives the total memory required
\begin{eqnarray*}
    \text{Memory} = 18,960,480 \text{ bytes} \approx 18.08 \text{ MB}.
\end{eqnarray*}
This shows that the sparse matrix representation techniques, such as those available in matlab (sparse) or Python (scipy.sparse), are significantly more efficient than storing the matrix in dense format, especially when the matrix is very large. Although storing ${\bf T}$ in a sparse matrix format can reduce memory usage for storage, it does not significantly lower the overall computational cost. This is because computing the phase-type distribution of the waiting time involves evaluating expressions such as $f(t)=-\boldsymbol{\alpha} e^{{\bf T} t} {\bf T} {\bf 1} = \boldsymbol{\alpha} e^{{\bf T}t} {\bf t}$, and $\mathbb{E}[T] = -\boldsymbol{\alpha} {\bf T}^{-1} {\bf 1}$. 
Both of these require either matrix exponentiation or inversion, which are computationally expensive operations with complexity $\mathcal{O}(n^3)$ for dense matrices, and still costly for sparse matrices due to fill-in during factorisation.

\subsection{Krylov subspace method for efficient computation}
\label{krylovsubspacemethod}

To avoid costly computations, we use the Krylov subspace projection methods (the well-known Arnoldi and Lanczos process (see~\textcite{van1996matrix} and~\textcite{saad2011numerical})) to compute the distribution of the time until the tagged customer gets to a server. Krylov methods are particularly well-suited for large sparse transition rate matrices arising from CTMCs. They avoid the need to compute or store the full matrix exponential $e^{{\bf T} t}$. Instead, they approximate $e^{{\bf T} t} {\bf t}$ by projecting it onto a low-dimensional Krylov subspace $\mathcal{K}_m({\bf T}, {\bf t}) = \text{span}\{{\bf t}, {\bf T}{\bf t}, {\bf T}^2{\bf t}, \ldots, {\bf T}^{m-1}{\bf t}\}$ generated by ${\bf T}$ and ${\bf t}$.

The Arnoldi process constructs an orthonormal basis ${\bf V}_m$ for this subspace  and reduces ${\bf T}$ to a small dense Hessenberg matrix ${\bf H}_m = {\bf V}_m^T {\bf T} {\bf V}_m$. The dimension $m$ is typically much smaller than $|\mathcal{S}|$, often in the range of 20–50. The exponential action is then approximated as 
\begin{eqnarray}
    e^{{\bf T} t} {\bf t} \approx {\bf V}_m e^{{\bf H}_m t} {\bf V}_m^T {\bf t},
\end{eqnarray}
which can be computed efficiently and with controlled accuracy. This approach reduces memory usage from $\mathcal{O}(n^2)$ to $\mathcal{O}(nm)$, and computational complexity from $\mathcal{O}(n^3)$ to approximately $\mathcal{O}(nm^2)$, assuming sparse matrix-vector multiplications. Furthermore, Krylov methods allow for adaptive control of approximation error, enabling high accuracy with minimal computational cost. By leveraging the Arnoldi iteration algorithm, we can efficiently compute the phase-type distributions, even when ${\bf T}$ is very large, making the methodology scalable and practical.

To compute $e^{{\bf T} t} {\bf t}$ numerically, we use the function \texttt{expv} from the \texttt{EXPOKIT} package, given by~\textcite{Sidje1998130}. The function \texttt{expv} implements an Arnoldi-based Krylov subspace projection with adaptive error control. In all computations, we employ the default \texttt{EXPOKIT} settings: a maximum Krylov subspace dimension of $m=30$ and a tolerance of $1\times 10^{-7}$. This tolerance represents the relative error threshold that \texttt{EXPOKIT} uses to determine when the Krylov approximation of the matrix exponential action has reached sufficient accuracy. During the Arnoldi iteration, \texttt{EXPOKIT} automatically monitors the local truncation error and, if needed, increases the subspace dimension or subdivides the time interval to ensure that the estimated relative error remains below the specified tolerance. Consequently, the computations achieve an accuracy of approximately seven correct digits, without requiring any additional error estimation procedures.

Although Krylov methods are significantly more efficient than direct matrix exponentiation, it is important to highlight their algorithmic complexity. The Arnoldi iteration used in \texttt{EXPOKIT} has a computational cost of approximately $\mathcal{O}(nm^2)$, where $n$ is the number of transient states and $m$ is the Krylov subspace dimension. Since $m$ is typically small (e.g., $m=30$), the method scales linearly with $n$ and quadratically with $m$, making it suitable for large-scale CTMCs. The adaptive time-stepping and error control mechanisms in \texttt{EXPOKIT} further ensure that the computational cost remains manageable without sacrificing accuracy.

\subsection{Conditional waiting times ($\beta_{2\to1}\geq0$, $r_1, r_2 \in [0,1]$)}\label{sec:ConditionalCMTCoriginal}

We recall that the random variables $T_1(n, b_1, w_1)$ and $T_2(n, b_1, w_1, w_2)$ record the waiting times of a $w_1$-th Type~$1$ customer and $w_2$-th Type~$2$ customer in the waiting area, respectively. Then, both $T_1(n, b_1, w_1)$ and $T_2(n, b_1, w_1, w_2)$ follow phase-type distribution with representation $\mbox{PH}(\boldsymbol{\alpha},{\bf T})$, where $\boldsymbol{\alpha}=[\alpha(r,b_1,w_1,w_2)]$ denotes the initial distribution vector such that $\alpha(R(0), B_1(0), W_1(0), W_2(0))=1$, and $R(0)=w_1$ for Type~$1$ and $R(0)=w_1+w_2$ for Type~$2$ customer, respectively. We immediately have the following result.
\begin{lemma}
The conditional density of the waiting time of  Type~$1$ customer from the time they arrive, is given by
\begin{eqnarray}
f_1(n,b_1,w_1)(t) = -\boldsymbol{\alpha} e^{{\bf T}t} {\bf T}{\bf 1}
\label{conditionaldensityPH_T1}
\end{eqnarray}
with $(R(0), B_1(0), W_1(0), W_2(0))=(w_1,b_1,w_1,n-B-w_1)$ and $\alpha(w_1,b_1,w_1,n-B-w_1)=1$. The corresponding  mean waiting time of the $w_1$-th Type~$1$ customer is given by 
\begin{align}
\mathbb{E}[T_1(n, b_1, w_1)] 
&= -\boldsymbol{\alpha} {\bf T}^{-1} {\bf 1}.
\label{CTMCmeanformulaT1}
\end{align}

Similarly, the conditional density of the waiting time of Type~$2$ customer from the time they arrive, is given by
\begin{eqnarray}
f_2(n,b_1,w_1,w_2)(t) = -\boldsymbol{\alpha} e^{{\bf T}t} {\bf T}{\bf 1}
\label{conditionaldensityPH_T2}
\end{eqnarray}
with $n=B+w_1+w_2$ and $(R(0), B_1(0), W_1(0), W_2(0))=(w_1+w_2,b_1,w_1,w_2)$ and $\alpha(w_1+w_2,b_1,w_1,w_2)=1$, since $n-B-w_1=w_2$. The corresponding  mean waiting time of the $w_2$-th Type~$2$ customer is given by 
\begin{align}
\mathbb{E}[T_2(n, b_1, w_1, w_2)] 
&= -\boldsymbol{\alpha} {\bf T}^{-1} {\bf 1}.
\label{CTMCmeanformulaT2}
\end{align}

Here, $t$ (non-boldface) denotes the (scalar) time variable in the density function $f(t)$, and ${\bf t}$ (boldface) refers to the vector ${\bf t}=-{\bf T}{\bf 1}$.

\end{lemma}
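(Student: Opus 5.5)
The plan is to show that the waiting time of the tagged customer is exactly the absorption time of the CTMC $\{(R(t),B_1(t),W_1(t),W_2(t))\}$ started in the appropriate state, and then to apply the standard phase-type formulas for such absorption times.

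First, I would check that the process is a finite-state CTMC on $\mathcal{S}$. Under the assumptions of Section~\ref{sec:Problem}, all inter-event times are exponential: Poisson arrivals, exponential services, exponential patience times and exponential type changes. The rank-dependent rates in Table~\ref{tab:QBDgenerator_Rt_CTMCs} are then obtained by splitting each aggregate rate according to whether the event occurs ahead of or behind the tagged customer. For example, abandonment among the $w_1$ Type~$1$ customers splits into $(r-1)\chi_1$ for customers ahead of the tagged one and $(w_1-r)\chi_1$ for those behind.

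Second, I would identify the states $\{r=0\}$ as absorbing. The tagged customer reaches $r=0$ exactly when a server is allocated to them, which is the moment their wait ends. No transitions out of $r=0$ are modelled, which gives the block form of ${\bf Q}$ with ${\bf T_0}=-{\bf T}{\bf 1}={\bf t}$.

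I would also confirm that the tagged customer's own abandonment (and, for a tagged Type~$2$ customer, their own type change) is handled consistently. Own abandonment is absent from the tables, so waiting time is measured conditionally as the time to reach a server. Own type change appears as the $\beta_{2\to1}$ move to state $(w_1+1,\dots)$. The rank then continues to be tracked, so absorption still means "got a server."

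Third, I need absorption to be certain, so that $T$ is a genuine phase-type variable. Equivalently, ${\bf T}$ must be nonsingular. This is the step I expect to need the most care. I would argue that from every transient state some service completion, at rate $b_1\mu_1+b_2\mu_2>0$ since all $B$ servers are busy when $r\ge1$, strictly decreases $r$ with positive probability. Alternatively, the rank cannot increase indefinitely because it is bounded by $N-B$. Hence $r=0$ is reachable from every transient state, all transient states are transient, and ${\bf T}$ is invertible.

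With this in hand, the standard results on phase-type distributions finish the argument. For $t>0$,
\begin{equation*}
\mathbb{P}(T>t)=\boldsymbol{\alpha}e^{{\bf T}t}{\bf 1},
\end{equation*}
by the Kolmogorov forward equations restricted to the transient states. Differentiating gives $f(t)=-\boldsymbol{\alpha}e^{{\bf T}t}{\bf T}{\bf 1}=\boldsymbol{\alpha}e^{{\bf T}t}{\bf t}$. Integrating $\mathbb{P}(T>t)$ over $[0,\infty)$ gives the mean $-\boldsymbol{\alpha}{\bf T}^{-1}{\bf 1}$, where the integral converges because the eigenvalues of ${\bf T}$ have negative real parts.

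Finally, I would plug in the initial states. For Type~$1$, the tagged customer is the last of the $w_1$ Type~$1$ customers, so $R(0)=w_1$ and $w_2=n-B-w_1$ by (\ref{w2}). For Type~$2$, all $w_1$ Type~$1$ customers and the preceding $w_2-1$ Type~$2$ customers are ahead, so $R(0)=w_1+w_2$ with $n=B+w_1+w_2$. Since $\boldsymbol{\alpha}$ is a unit vector on a transient state, there is no atom at zero, and this yields (\ref{conditionaldensityPH_T1})--(\ref{CTMCmeanformulaT2}).
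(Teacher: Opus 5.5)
Your overall route is the paper's. The paper observes that $T_1(n,b_1,w_1)$ and $T_2(n,b_1,w_1,w_2)$ are absorption times of the CTMC, hence $\mbox{PH}(\boldsymbol{\alpha},{\bf T})$, and reads off the standard density and mean formulas ("we immediately have"). Your identification of the absorbing set $\{r=0\}$, the initial states, and the handling of the tagged customer's own abandonment and type change all agree with this. The paper gives no argument that ${\bf T}$ is nonsingular. You rightly add one, because the mean formula $-\boldsymbol{\alpha}{\bf T}^{-1}{\bf 1}$ needs it, but your justification is wrong in one case.

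Your claim that from every transient state some service completion strictly decreases $r$ with positive probability fails when $r_1=0$, which the lemma allows since it covers $r_1,r_2\in[0,1]$. Take a tagged Type~1 customer ($1\le r\le w_1$) with $w_2>0$. In Table~\ref{tab:QBDgenerator_Rt_CTMCs}, every completion then goes to $(r,b_1-1,w_1,w_2-1)$ or $(r,b_1,w_1,w_2-1)$, at rates $r_2b_1\mu_1$ and $r_2b_2\mu_2$. Both leave $r$ unchanged, while the rank-decreasing moves carry the factor $r_1=0$. Your fallback, that $r$ is bounded by $N-B$, does not help either: boundedness does not stop the chain from circulating forever among states with $r\ge1$.

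The repair is a two-stage path. From such a state, a run of $w_2$ consecutive service completions with no intervening arrivals empties Queue~2; each step has positive rate because $b_1+b_2=B\ge1$ and $\mu_1,\mu_2>0$. Once $w_2=0$, every completion lowers $r$ by one, until $r=0$. In all other states (tagged Type~2, or $r_1>0$) your one-step argument is valid. This gives reachability of $r=0$ from every state of $\mathcal{S}^*$, hence ${\bf T}$ is nonsingular, and the rest of your argument goes through unchanged.
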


\subsection{Algorithm for the CTMC-based computation of the waiting times}

Here, we outline the key steps to compute the conditional waiting time distribution of a tagged customer using the CTMC approach and present a structured pseudocode in Algorithm~\ref{alg:CTMC_computation}.

\begin{enumerate}[{\bf Step~1.}]
    \item Construct the transition rate matrix ${\bf Q}$ for the CTMC model in Table~\ref{tab:QBDgenerator_Rt_CTMCs} and store it in sparse format to ensure memory efficiency.

\item Extract the transient matrix $\mathbf{T}$ from ${\bf Q}$, ensuring that the sparse structure is preserved.

\item Determine the initial state $(r,b_1,w_1,w_2)$ of the process from the time the tagged customer enters the system. Then construct the initial distribution vector $\boldsymbol{\alpha}$ with $\boldsymbol{\alpha}(r,b_1,w_1,w_2)=1$ and store it in sparse format. 

\item Compute the mean waiting time of the tagged customer using~\eqref{CTMCmeanformulaT1} or~\eqref{CTMCmeanformulaT2}, ensuring that all computations preserve the sparsity of matrices. 

\item Evaluate the probability density function (pdf) of the customer's waiting time using~\eqref{conditionaldensityPH_T1} or~\eqref{conditionaldensityPH_T2}, implemented via the function \texttt{expv} from the matlab package \texttt{EXPOKIT}, given by~\textcite{Sidje1998130}.
\end{enumerate}

\begin{algorithm}[H]
\caption{Pseudocode for CTMC computation (conditional waiting times)}
\label{alg:CTMC_computation}

\begin{algorithmic}[1]

\Input Model parameters $N$, $B$, $\lambda_1$, $\lambda_2$, $\mu_1$, $\mu_2$, $\beta_{2\to1}$, $r_1$, $r_2$, initial state $(r,b_1,w_1,w_2)$

\State Initialise $\texttt{states}\gets\emptyset$

\For{$r=0$ to $N-B$} ($r=0$ is absorbing)
    \For{$b_1=0$ to $B$}
        \For{$w_1=0$ to $N-B$}
            \For{$w_2=0$ to $N-B$}
                \If{$r \leq w_1+w_2$ \textbf{and} $w_1+w_2\leq N-B$}
                    \State Append $(r,b_1,w_1,w_2)$ to $\texttt{states}$
                \EndIf
            \EndFor
        \EndFor
    \EndFor
\EndFor

\State $\texttt{numStates}\gets \texttt{length(states)}$

\State Define $\texttt{key}(r,b_1,w_1,w_2) = \texttt{``r\_b1\_w1\_w2''}$
for fast state lookup.

\State Initialise $\texttt{stateMap}$ as an empty map.

\For{$i=1$ to $\texttt{numStates}$}
    \State $\texttt{stateMap[key(states(i))]} \gets i$
\EndFor

\Function{safeLookup}{$r,b_1,w_1,w_2$}
    \State $k\gets \texttt{key}(r,b_1,w_1,w_2)$
    \If{$k$ exists in $\texttt{stateMap}$}
        \State \Return $\texttt{stateMap}[k]$
    \Else
        \State \Return $0$
    \EndIf
\EndFunction

\State Initialise
$\texttt{row\_idx}\gets [\,]$,
$\texttt{col\_idx}\gets [\,]$,
$\texttt{rate\_val}\gets [\,]$

\State Identify all possible transitions.

\State Use \textsc{safeLookup} to find destination state indices.

\State Append transitions to
    $\texttt{row\_idx}$,
    $\texttt{col\_idx}$,
    $\texttt{rate\_val}$.

\State Set each diagonal entry equal to the negative row sum.

\State Construct sparse generator matrix $\texttt{Q}
=
\texttt{sparse}
(
\texttt{row\_idx},
\texttt{col\_idx},
\texttt{rate\_val},
\texttt{numStates},
\texttt{numStates}
)$.

\State Extract transient matrix $\mathbf{T}$ as $\texttt{T\_transient}$, from $\mathbf{Q}$.

\State Let $n\gets \texttt{length}(\texttt{T\_transient})$

\State Initialise sparse row vector
$\boldsymbol{\alpha}$ of size $n$

\State $\texttt{full\_idx} \gets \textsc{safeLookup}(r,b_1,w_1,w_2)$ (Get the index of the initial state in states)

\State $\texttt{transient\_pos} \gets$ position of $(r,b_1,w_1,w_2)$ in
$\texttt{T\_transient}$

\State $\boldsymbol{\alpha}(\texttt{transient\_pos})\gets 1$ (set the initial probability)

\State Compute exit rate vector, $\texttt{exit\_vector} = -\texttt{T\_transient}\times\mathbf{1}$

\State Compute mean waiting time, $\mathbb{E}[T] = \boldsymbol{\alpha} \left( -\texttt{T\_transient} \backslash \mathbf{1} \right)$

\State Determine a vector
$\texttt{t\_values}$ of suitable time points.

\For{$i=1$ to $\texttt{length(t\_values)}$}
    \State
    $(v_t,\texttt{err})
    \gets
    \texttt{expv}
    (\texttt{t\_values}(i),
    \texttt{T\_transient},
    \texttt{exit\_vector})$
    \State
    $\texttt{pdf}(i)
    \gets
    \boldsymbol{\alpha} v_t$
\EndFor

\State \Return Conditional pdf $f(t)$ and expected waiting time $\mathbb{E}[T]$.

\end{algorithmic}
\end{algorithm}

Next, we apply a quasi-birth-and-death (QBD) process to model the system defined in Section~\ref{sec:Problem}, motivated by the following two reasons. First, we are interested in comparing the conditional waiting time distributions obtained from the CTMC-Krylov approximation methods with the respective distributions obtained from QBD based matrix-analytic methods in the scenarios when $r_1=1$ and $r_2=0$. This approach helps validate the accuracy of the conditional waiting times obtained from CTMC-Krylov approach. Second, we aim to compute the probability densities of the long-run waiting times of Type~$1$ and Type~$2$ customers, as defined in \eqref{longrun_density_T1} and \eqref{longrun_density_T2}, respectively, and the corresponding mean waiting times given in \eqref{E(T1)} and \eqref{E(T2)}. These computations require the long-run probabilities $\pi_{(n,b_1,w_1)}$ for observing $n$ number of total customers with $b_1$ and $w_1$ Type~$1$ customers in the servers and the waiting area, respectively.

We construct a QBD model and use matrix-analytic methods to evaluate the stationary distribution $\pi_{(n,b_1,w_1)}$ and the conditional waiting time distributions. We note that unlike the CTMC model, the QBD model is not always computationally feasible for analysing the waiting time distributions in large-scale systems, as discussed in Section~\ref{sec:introduction}.

\section{QBD model}\label{sec:QBDmodelIII}

We model the system using a continuous-time level-dependent QBD $\{(X(t),\varphi(t)):t\geq 0\}$, where the level variable $X(t)$ records the total number of customers in the system and the phase variable $\varphi(t)=(B_1(t), W_1(t))$ records the total number of Type~$1$ incustomers (customers in the servers) and Type~$1$ customers in the waiting area at time $t$, respectively. The state space of such QBD is
\begin{eqnarray}\label{eq:statespace}
\mathcal{S}&=&
\{(n,b_1,w_1):n=0,\ldots,N;b_1=0,\ldots,\min\{n,B\};w_1=0,\ldots,\max\{0,n-B\}\},
\end{eqnarray}
where $n$ is the total number of customers in the system, $b_1$ is the number of Type~$1$ incustomers, $w_1$ is the number of Type~$1$ customers in the waiting area. We note that the number of Type~$2$ customers ($w_2$) in the waiting area and the number of Type~$2$ incustomers ($b_2$) currently in the system can be obtained using~\eqref{w2} and~\eqref{b2}.

Let $\alpha_{(n,b_1,w_1)}=\mathbb{P}(X(0)=n,\varphi(0)=(b_1,w_1))$ be the probability that the process is in the state $(n,b_1,w_1)$ at the time the process starts. The initial distribution vector of the QBD process is then represented by
\begin{eqnarray}
\boldsymbol{\alpha}=[\boldsymbol{\alpha}_n]_{n=0,1,2,\ldots,N},  \boldsymbol{\alpha}_n=[\boldsymbol{\alpha}_{(n,b_1,w_1)}]_{b_1=0,\ldots,\min\{n,B\};w_1=0,\ldots,\max\{0,n-B\}},
\end{eqnarray}
and the transition rate from state $(n,b_1,w_1)$ to $(n',b_1',w_1')$ is
\begin{eqnarray}
    q_{(n,b_1,w_1)\to(n',b_1',w_1')}=\left.\frac{d}{dt}
\mathbb{P}\big(X(t)=n',\varphi(t)=(b_1',w_1') \ | \ X(0)=n,\varphi(0)=(b_1,w_1)\big)\right\vert_{t=0}.
\end{eqnarray}

These transition rates are recorded in block matrices ${\bf Q} = {\bf Q}^{[n,n^{'}]}=\left[q_{(n,b_1,w_1)\to(n',b_1',w_1')}\right]$, where $b_1=0,\ldots,\min\{n,B\}$ and $w_1=0,\ldots,\max\{0,n-B\}$, that constitute a tri-diagonal generator matrix

\begin{eqnarray}
	{\bf Q}
	=
	\begin{bmatrix}
		{\bf Q}^{[0,0]} & {\bf Q}^{[0,1]} & {\bf 0} & \cdots & \cdots & \cdots & \cdots & {\bf 0}\\
		{\bf Q}^{[1,0]} & {\bf Q}^{[1,1]} & {\bf Q}^{[1,2]} & {\bf 0} & \cdots & \cdots & \cdots & {\bf 0}\\
		{\bf 0} & {\bf Q}^{[2,1]} & {\bf Q}^{[2,2]} & {\bf Q}^{[2,3]} & \cdots &  \cdots & \cdots & {\bf 0}\\
		\vdots & \vdots & \vdots & \vdots & \cdots & \cdots & \cdots & \vdots\\
		{\bf 0} & {\bf 0} & {\bf 0} & {\bf 0} & \cdots & {\bf Q}^{[N-1,N-2]} & {\bf Q}^{[N-1,N-1]} & {\bf Q}^{[N-1,N]}\\
		{\bf 0} & {\bf 0} & {\bf 0} & {\bf 0} & \cdots & {\bf 0} & {\bf Q}^{[N,N-1]} & {\bf Q}^{[N,N]}
	\end{bmatrix}.
\end{eqnarray}
The off-diagonal transition rates $q_{(n,b_1,w_1)\to(n',b_1',w_1')}$ of this QBD are given in Table~\ref{tab:parametersQBDI-2ISYM}. The on-diagonal transition rates can be calculated using the standard formula
\begin{eqnarray}\label{eq:on-diagonals}
q_{(n,b_1,w_1)\to(n,b_1,w_1)} = -\sum_{(n^{'},b_1^{'},w_1^{'})\not= (n,b_1,w_1)  } q_{(n,b_1,w_1)\to(n^{'},b_1^{'},w_1^{'})}.
\end{eqnarray}

\begin{table}[!htbp]
		\centering
        
		\setlength{\tabcolsep}{1pt}	
		\begin{tabular} {l@{\hspace{0.5cm}} l@{\hspace{0.5cm}}l l}

  \hline
   	$\text{Conditions}$&	\text{Current state} &	$\text{New state}$ &	$\text{Rate}$\\

\hline
 	$0 \leq n < B$ & \text{$(n,b_1,0)$ \quad } &	\text{$(n+1,b_1+1,0)$ \quad } &	$\lambda_1$\\

$B\leq n<N$ & \text{$(n,b_1,w_1)$ \quad } &	\text{$(n+1,b_1,w_1+1)$ \quad } &	$\lambda_1$ \\

\hline

$n<N$  &\text{$(n,b_1,w_1)$ \quad } &	\text{$(n+1,b_1,w_1)$ \quad } & $\lambda_2$\\

\hline

$w_1 \geq 1$  &  \text{$(n,b_1,w_1)$ \quad }  & 	\text{$(n-1,b_1,w_1-1)$ \quad }  & $w_1\chi_1$ \\

$w_2 \geq 1$   &  \text{$(n,b_1,w_1)$ \quad }  & 	\text{$(n-1,b_1,w_1)$ \quad }  &  $w_2\chi_2$  \\

\hline

$0 < n \leq B, b_1\geq 1 $ & \text{$(n,b_1,0)$ \quad } &	\text{$(n-1,b_1-1,0)$ \quad } &	$b_1\mu_1$ \\

$n>B, b_1 \geq 1, w_1=0$ & \text{$(n,b_1,0)$ \quad } &	\text{$(n-1,b_1-1,0)$ \quad } &	$b_1 \mu_1$ \\

$n>B, b_1 \geq 1, w_1=n-B$ & \text{$(n,b_1,w_1)$ \quad } &	\text{$(n-1,b_1,w_1-1)$ \quad } &	$b_1 \mu_1$ \\

$n>B, b_1 \geq 1, 1 \leq w_1 < n-B$ & \text{$(n,b_1,w_1)$ \quad } &	\text{$(n-1,b_1,w_1-1)$ \quad } &	$r_1 b_1 \mu_1$ \\

$n>B, b_1 \geq 1, 1 \leq w_1 < n-B$ & \text{$(n,b_1,w_1)$ \quad } &	\text{$(n-1,b_1-1,w_1)$ \quad } &	$r_2 b_1 \mu_1$ \\

\hline

$w_2 \geq 1$ &$(n,b_1,w_1)$ &	$(n,b_1,w_1+1)$&	$w_2 \beta_{2\to 1}$\\

\hline

$0 < n \leq B, b_2\geq 1$ & \text{$(n,b_1,0)$ \quad } &	\text{$(n-1,b_1,0)$ \quad } &	$b_2\mu_2$ \\

$n>B, b_2 \geq 1, w_1=0$ & \text{$(n,b_1,0)$ \quad } &	\text{$(n-1,b_1,0)$ \quad } &	$b_2 \mu_2$\\

$n>B, b_2 \geq 1, w_1 = n-B$ & \text{$(n,b_1,w_1)$ \quad } &	\text{$(n-1,b_1+1,w_1-1)$ \quad } &	$b_2 \mu_2$\\

$n>B, b_2 \geq 1, 1 \leq w_1 < n-B$ & \text{$(n,b_1,w_1)$ \quad } &	\text{$(n-1,b_1+1,w_1-1)$ \quad } &	$r_1 b_2 \mu_2$ \\

$n>B, b_2 \geq 1, 1 \leq w_1 < n-B$ & \text{$(n,b_1,w_1)$ \quad } &	\text{$(n-1,b_1,w_1)$ \quad } &	$r_2 b_2 \mu_2$ \\

   \hline
\end{tabular}
		\caption{Transition rates of the QBD process $\{(X(t),\varphi(t)):t\geq 0\}$.}
\label{tab:parametersQBDI-2ISYM}
\end{table}

\subsection{Size of block matrices ${\bf Q}^{[n,n']}$ in the generator}\label{sec:sizeQnn}

To assist the computation, we derive a formula for the number of rows (and columns) in each block $\mathbf{Q}^{[n,n']}$. This provides a reproducible mapping from states $(n,b_1,w_1)$ to matrix positions and forms the foundation for the computations carried out in later sections.

For a fixed level $n$, the total number of customers, we arrange states $(n,b_1,w_1)$ for all possible values of $b_1$ and $w_1$ in the lexicographical order according to $(n,0,0)$, $(n,0,1)$, $\ldots$ , $(n,0,max\{w_1\})$, $(n,1,0)$, $\ldots$ , $(n,1,max\{w_1\})$, $\ldots$ , $(n,max\{b_1\},0)$, $\ldots$ , $(n,max\{b_1\},max\{w_1\})$. Denote by $i(n,b_1,w_1)$, the position of state $(n,b_1,w_1)$ in this sequence.

If $n\leq B$, there would be no customers in the waiting area. In this case, $max\{w_1\}=0$ and
\begin{eqnarray}
    i(n,b_1,0)=b_1+1.
    \label{RowNo_nlessB}
\end{eqnarray}
If $n>B$, then $n-B=max\{w_1\}\geq 1$ customers would be in the waiting area. In this case, first note that the total number of states in the sequence of states $(n,0,0),\ldots,(n,b_1-1,\max\{w_1\})$ is $b_1(n-B+1)$. Indeed, there are $b_1$ entries in the sequence $0,\ldots ,b_1-1$ and for each of these, there are $(n-B+1)$ entries in the sequence $0,\ldots,\max\{w_1\}$. Second, note that there are $w_1+1$ entries in the sequence $(n,b_1,0),\ldots,(n,b_1,w_1)$. Therefore,  
\begin{eqnarray}
    i(n,b_1,w_1)=b_1(n-B+1)+w_1+1.
    \label{RowNo_ngreaterB}
\end{eqnarray}
Thus, the conversion from state $(n,b_1,w_1)$ to row number $i(n,b_1,w_1)$ can be written as
\begin{eqnarray}
    i(n,b_1,w_1)&=& \left\{
\begin{array}{ll}
b_1(n-B+1)+w_1+1&\mbox{when }n> B 
\\
b_1+1&\mbox{when }n\leq B.\nonumber
\end{array}
\label{eq:rows_no}
\right.\\
&=&
    b_1\left(\max\{n-B+1,1\} \right)+w_1+1.
    \label{RowNo_ngreaterB}
\end{eqnarray}
Given $n$ and $B$, the minimum row number corresponds to $b_1=0$ and $w_1=0$. Putting these values in~\eqref{RowNo_ngreaterB} gives 
\begin{eqnarray*}
i(n,b_1,w_1)&=&\lefteqn{
b_1\left(\max\{n-B+1,1\} \right)+w_1+1
}
\\
&=&
0\left(\max\{n-B+1,1\} \right)+0+1
\\
&=&1.
\end{eqnarray*}
The maximum row number is given by
\begin{eqnarray}
i(n,b_1,w_1)&=&\lefteqn{
b_1\left(\max\{n-B+1,1\} \right)+w_1+1
}
\nonumber
\\
&=&
min\{n,B\}\left(\max\{n-B+1,1\} \right)+\max\{0,n-B\}+1
\nonumber
\\
&=&
\left\{
\begin{array}{ll}
(n-B+1)(B+1)&\mbox{when }n> B
\\
n+1&\mbox{when }n\leq B.
\end{array}
\label{eq:no_rows}
\right.
\end{eqnarray}

We note that, for a fixed $B$, $i(n,b_1,w_1)$ is a linear function of $n$ which means that the number of rows in ${\bf Q}^{[n,n]}$ increases as the number of patients in the system increases, in a linear manner. Also, for a fixed $n>B$, we have
\begin{eqnarray*}
i(n,b_1,w_1)&=&(n-B+1)(B+1)
\\
&=&
-B^2+nB+n+1,
\end{eqnarray*}
which means that the number of rows in ${\bf Q}^{[n,n]}$ as a function of $B$, increases with the maximum reached when $B=\lfloor n/2 \rfloor$, and then decreases until $B=n-1$. The number of columns in ${\bf Q}^{[n,n]}$ can be calculated using~\eqref{eq:no_rows}. Similarly, the number of columns in  ${\bf Q}^{[n,n+1]}$ and ${\bf Q}^{[n,n-1]}$ can be calculated by using appropriate substitution in~\eqref{eq:no_rows}.

By the analysis above, the total number of rows (or columns) in {\bf Q} can be obtained using 
\begin{eqnarray}
\sum_{n=0}^{N}
\ 
\left(
\min\{n,B\}\left(\max\{n-B+1,1\} \right)+\max\{0,n-B\}+1
\right).
\label{eq:NumRows}
\end{eqnarray}

We are also interested in evaluating state $(n,b_1,w_1)$ given $i$ such that $i(n,b_1,w_1)=i$. First, we evaluate $w_1$ using
\begin{eqnarray}
   w_1&=& (i-1)\mod(\max\{n-B+1,1\})
\end{eqnarray}
and then $b_1$ using
\begin{eqnarray}
 b_1&=&   \frac{i-1-w_1}{\max\{n-B+1,1\}}.
\end{eqnarray}

\subsection{Waiting times: Type~$1$ customers ($\beta_{2\to1}\geq 0$, $r_1=1$, $r_2=0$)}
\label{sec:WT_Type1nochange}

Suppose that Type~$1$ customers have priority (i.e., $r_1 = 1$, $r_2 = 0$) and Type~$2$ customers may change to Type~$1$ with a non-negative rate (i.e, $\beta_{2\to1}\geq0$). Then, to evaluate $\widetilde{F}_1(n, b_1, m_1)(s)$, we apply the following approach. We observe that the distribution of $T_1(n, b_1, m_1)$ is influenced only by departures and by the abandonment of Type~$1$ customers who are \emph{ahead} of the tagged $m_1$-th Type~$1$ customer (conditional on the tagged customer not abandoning, since their waiting time is of interest). In contrast, the distribution of $T_1(n, b_1, m_1)$ is unaffected by any future arrivals and by the abandonment of Type~$2$ customers as well as only those Type~$1$ customers positioned \emph{behind} the $m_1^{th}$ Type~$1$ customer. This is because such customers either arrive or abandon strictly \emph{behind} the $m_1^{th}$ Type~$1$ customer in the queue, and therefore do not affect the position of the $m_1^{th}$ Type~$1$ in the queue. Therefore, arrivals of both customer types, abandonment of Type~$2$ customers, and abandonment of Type~$1$ customers behind the $m_1^{th}$ Type~$1$ customer can all be omitted from the analysis. We also observe that all Type~$2$ customers in the waiting area (if any), Type~$2$ customers transitioning to Type~$1$, and any Type~$1$ customers positioned behind the $m_1$-th Type~$1$ customer, will only be served after the $m_1$-th Type~$1$ customer. Therefore, such customers can also be excluded from the analysis.

We now consider a censored QBD process $\{(Y(t), \varphi(t)) : t \geq 0\}$ with initial state $(Y(0), \varphi(0)) = (n, b_1, m_1)$, $n=B+m_1$, where only departures are observed. The generator of this process is obtained by removing transitions corresponding to arrivals and type change from the original QBD process $\{(X(t), \varphi(t)) : t \geq 0\}$. The transition rates for this censored QBD are given in Table~\ref{tab:parametersQBDI-3I}. The key observation is that the distribution of $T_1(n, b_1, m_1)$ is equivalent to the distribution of the time taken by the censored QBD to first reach the state $(n - m_1, b_1', 0)$ for some $b_1'$, starting from $(n, b_1, m_1)$. Therefore, we have the following result.

\begin{lemma}
\label{lemma1}
We have, 
\begin{eqnarray}
\widetilde F_1(n,b_1,m_1)(s) &=& 
\sum_{(b_1',m_1')}
\widetilde {G_Y}_{(b_1,m_1)(b_1^{'},m_1')}^{n,n-m_1}(s)
=
\sum_{b_1'=\max\{1,b_1\}}^{b_1+\min\{m_1,b_2\}}\widetilde {G_Y}_{(b_1,m_1)(b_1^{'},0)}^{n,n-m_1}(s),
\label{Eq:LST_Type1}
\end{eqnarray}
where $\widetilde {G_Y}_{(b_1,m_1)(b_1^{'},0)}^{n,n-m_1}(s)$ is the LST of the time for the censored QBD to first visit level $n - m_1$ and do so in phase $(b_1', 0)$, given it starts at level $n$ in phase $(b_1, m_1)$. 
\end{lemma}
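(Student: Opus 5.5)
The plan is to identify $T_1(n,b_1,m_1)$ pathwise with a first-passage time of the censored QBD $\{(Y(t),\varphi(t))\}$. After that, the LST identity follows by decomposing that first passage according to the phase in which level $n-m_1$ is first entered.

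First, I would construct both processes on the same probability space, driven by the same exponential clocks for service completions at each server and for the patience times of the $m_1-1$ Type~$1$ customers ahead of the tagged one. Under $r_1=1$, $r_2=0$, every freed server goes to the head of Queue~$1$ whenever Queue~$1$ is nonempty. Hence, as long as the tagged customer waits, the only events that reduce the number of Type~$1$ customers ahead of it are departures from the servers and abandonments of Type~$1$ customers ahead. These are exactly the transitions retained in the censored process.

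Arrivals, Type~$2$ abandonments, type changes $2\to1$ (which place the customer at the back of Queue~$1$), and Type~$1$ abandonments behind the tagged customer only affect customers who are behind it. They therefore change neither the event times of the retained transitions nor the server composition in a way that matters. For the composition, note that a freed server always receives a Type~$1$ customer ahead of the tagged one, so $b_1$ evolves identically in both processes. Because all rates are exponential, memorylessness shows that deleting these transitions leaves the joint law of the relevant events unchanged. Along the censored path, $Y(t)-B$ equals the tagged customer's rank. The tagged customer enters service at the first instant the rank drops from $1$ to $0$, which is the first hitting time of level $B=n-m_1$ with $w_1=0$.

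Second, I would write the first-passage time $\tau=\inf\{t:Y(t)=n-m_1\}$. The censored process is skip-free downward and has no upward transitions, so $\tau<\infty$ almost surely. Its LST then splits over the entry phase:
\[
\mathbb{E}[e^{-s\tau}]=\sum_{(b_1',m_1')}\mathbb{E}\big[e^{-s\tau}\mathbf 1\{\varphi(\tau)=(b_1',m_1')\}\big]=\sum_{(b_1',m_1')}\widetilde{G_Y}^{n,n-m_1}_{(b_1,m_1)(b_1',m_1')}(s).
\]
This gives the first equality, by definition of the first-passage LST matrices.

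Third, I would determine which phases can be reached. At level $n-m_1=B$ the waiting area is empty, so $m_1'=0$. Each service completion either leaves $b_1$ unchanged (a Type~$1$ departure replaced by Type~$1$) or raises it by one (a Type~$2$ departure replaced by Type~$1$). Abandonments do not change $b_1$. There are at most $m_1$ service completions before hitting, and at most $b_2$ of them can come from Type~$2$ servers, since no new Type~$2$ customers enter service. Therefore $b_1'\le b_1+\min\{m_1,b_2\}$ and $b_1'\ge b_1$. The lower limit $\max\{1,b_1\}$ reflects that the final transition into level $B$ is a service completion whose freed server is taken by the tagged Type~$1$ customer (an abandonment by the tagged customer is excluded by conditioning), so $b_1'\ge1$. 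The remaining terms with $b_1'$ outside this range vanish.

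The main obstacle is the first step: making rigorous that the censored process has the same law as the relevant functional of the full process. This requires the conditioning on the tagged customer not abandoning, together with the priority rule ensuring that customers behind never enter service first. I would handle it with the explicit coupling and memorylessness argument described above.
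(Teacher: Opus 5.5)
Your proposal is correct and follows essentially the paper's route. You identify $T_1(n,b_1,m_1)$ with the first passage of the censored QBD from level $n=B+m_1$ to level $n-m_1=B$ and split that passage by entry phase. You then confine $b_1'$ to $[\max\{1,b_1\},\,b_1+\min\{m_1,b_2\}]$: every transition keeps $b_1$ or raises it by one, at most $\min\{m_1,b_2\}$ Type~$2$ departures can occur, and the tagged customer holds a server at the end.

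Your coupling step makes rigorous the ``key observation'' that the paper only argues informally. Strictly, it removes the tagged customer's own patience clock, exactly as the rate $(w_1-1)\chi_1$ does; literally conditioning on non-abandonment would instead tilt the density by $e^{-\chi_1 t}$. The $\widetilde{\mathbf G}$ recursion in the paper's proof is a computational device and is not needed for the identity itself.
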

\noindent{\bf Proof:} The summation bounds in Equation~\eqref{Eq:LST_Type1} arise from the following possible system evolutions until the $m_1$-th Type~$1$ customer is admitted to a server.
\begin{itemize}
    \item If $b_1 \neq 0$, and only Type~$1$ customers depart during the evolution of the system. Then each departing Type~$1$ customer is replaced by a Type~$1$ customer from the waiting area, resulting in the state $(n - m_1, b_1, 0)$. If $b_1 = 0$, and a single Type~$2$ customer departs first, followed by Type~$1$ departures only. This gives $b_1' = 1$. Thus, the lower bound is $b_1' = \max\{1, b_1\}$.
    
    \item If $b_1 \neq B$, and only one Type~$2$ customer along with some Type~$1$ customers departs, the resulting state is $(n - m_1, b_1 + 1, 0)$.
    
    \item If $b_1 \neq B$, and all Type~$2$ customers depart. In this case, the resulting state would be $(n - m_1, b_1 + m_1, 0)$ if $m_1 < b_2$, and $(n - m_1, b_1+b_2, 0)=(n - m_1, B, 0)$ if $m_1 \geq b_2$. Combining both cases, the upper bound becomes $b_1' = b_1 + \min\{m_1, b_2\}$.
\end{itemize}
To compute $\widetilde {G_Y}_{(b_1,m_1)(b_1^{'},0)}^{n,n-m_1}(s)$ for $b_1'=\max\{1,b_1\},\ldots,b_1+\min\{m_1,b_2\}$, we consider the matrix $\widetilde {\bf G}_Y^{n,n-m_1}(s)=[\widetilde {G_Y}_{(b_1,m_1)(b_1^{'},0)}^{n,n-m_1}(s)]$, and follow the matrix-analytic methods approach given in~\textcite{aksamit2024sensitivities}. Here, the blocks $\mathbf Q^{[n,n]}$, $\mathbf Q^{[n,n\pm1]}$ are those of the censored QBD (Table~\ref{tab:parametersQBDI-3I}), and $s\geq 0$. We first compute the one-step first-visit matrices $\widetilde{\bf G}^{n,n-1}(s)$ for $n = N,\dots,n-m_1+1$ using
\begin{align*}
\widetilde{\bf G}^{N,N-1}(s) 
&= -({\bf Q}^{[N,N]} - s{\bf I})^{-1}{\bf Q}^{[N,N-1]}, \\
\widetilde{\bf G}^{n,n-1}(s) 
&= -\Big({\bf Q}^{[n,n]} - s{\bf I} 
+ {\bf Q}^{[n,n+1]}\widetilde{\bf G}^{n+1,n}(s)\Big)^{-1}
{\bf Q}^{[n,n-1]}.
\end{align*}
We then calculate $\widetilde {\bf G}_Y^{n,n-m_1}(s)$ using the following recursion
\begin{equation*}
\widetilde{\bf G}^{n,n-m_1}(s) = \widetilde{\bf G}^{n,n-1}(s)\widetilde{\bf G}^{n-1,n-2}(s)\cdots\widetilde{\bf G}^{n-m_1+1,n-m_1}(s),
\end{equation*}
and extract the required entries $\widetilde {G_Y}_{(b_1,m_1)(b_1^{'},0)}^{n,n-m_1}(s)$, $b_1'=\max\{1,b_1\},\ldots,b_1+\min\{m_1,b_2\}$, from the matrix $\widetilde{\bf G}^{n,n-m_1}(s)$.
\rule{9pt}{9pt}

\begin{table}[!htbp]
		\centering
        
		\setlength{\tabcolsep}{1pt}	
		\begin{tabular} {l@{\hspace{0.5cm}} l@{\hspace{0.5cm}}l l}

  \hline
   	$\text{Conditions}$&	\text{Current state} &	$\text{New state}$ &	$\text{Rate}$\\

\hline

$0 < n \leq B, b_1\geq 1 $ & \text{$(n,b_1,0)$ \quad } &	\text{$(n-1,b_1-1,0)$ \quad } &	$b_1\mu_1$ \\

$n>B, b_1 \geq 1, w_1=0$ & \text{$(n,b_1,0)$ \quad } &	\text{$(n-1,b_1-1,0)$ \quad } &	$b_1 \mu_1$ \\

$n>B, b_1 \geq 1, 1 \leq w_1 \leq n-B$& $(n,b_1,w_1)$ &	$(n-1,b_1,w_1-1)$  &	$b_1 \mu_1$  \\

\hline

$0 < n \leq B, b_2\geq 1$ & \text{$(n,b_1,0)$ \quad } &	\text{$(n-1,b_1,0)$ \quad } &	$b_2\mu_2$ \\

$n>B, b_2 \geq 1, w_1=0$ & \text{$(n,b_1,0)$ \quad } &	\text{$(n-1,b_1,0)$ \quad } &	$b_2 \mu_2$\\

$n>B, b_2 \geq 1, 1 \leq w_1 \leq n-B$  & \text{$(n,b_1,w_1)$ \quad }  &	\text{$(n-1,b_1+1,w_1-1)$ \quad }  &	$b_2 \mu_2$  \\

   \hline

$n>B, w_1 \geq 2$ & \text{$(n,b_1,w_1)$ \quad }  &	\text{$(n-1,b_1,w_1-1)$ \quad }  &	$(w_1-1) \chi_1$  \\

\hline
  \end{tabular}
		\caption{Off-diagonals of the generator in the censored QBD process $\{(Y(t),\varphi(t)):t\geq 0\}$ discussed in Section~\ref{sec:WT_Type1nochange}. The on-diagonals can be calculated using~\eqref{eq:on-diagonals}.}
\label{tab:parametersQBDI-3I}
\end{table}

\subsection{Waiting times: Type~$2$ customers ($\beta_{2\to1}=0$, $r_1=1$, $r_2=0$)
} \label{sec:WT_Type2nochange}

We suppose that Type~$1$ customers have priority (i.e., $r_1 = 1$, $r_2 = 0$) and Type~$2$
customers in the waiting area do not change their type, that is, $\beta_{2\to1} = 0$. To evaluate $\widetilde{F}_2(n, b_1, w_1, m_2)(s)$, we proceed as follows.

We observe that the distribution of $T_2(n, b_1, w_1, m_2)$ is influenced only by the departures,  arrivals of Type~$1$ customers, abandonment of Type~$1$ customers, and abandonment of those Type~$2$ customers positioned \emph{ahead} of the $m_2$-th Type~$2$ customer. In contrast, the distribution of $T_2(n, b_1, w_1, m_2)$ is not affected by the arrivals of Type~$2$ customers and abandonment those Type~$2$ customers positioned \emph{behind} the $m_2$-th Type~$2$ customer. This is because newly arriving Type~$2$ customers join Queue~$2$ at the end and are served after the $m_2$-th Type~$2$ customer. On the other hand, Type~$1$ customers join Queue~$1$, are positioned ahead of Queue~$2$, and are served before the $m_2$-th Type~$2$ customer. Additionally, Type~$2$ customers positioned behind the $m_2$-th customer in the queue are either served later or get abandoned, both events do not affect the waiting time of $m_2$-th Type~$2$ customer. Therefore, newly arriving Type~$2$ customers, those positioned after the $m_2$-th Type~$2$ customer in the waiting area, and those getting abandoned behind the $m_2$-th Type~$2$ customer can be excluded from the analysis, and we take $n=B+w_1+m_2$ at $t=0$.

We now consider a censored QBD process $\{(Z(t), \varphi(t)) : t \geq 0\}$ with initial state $(Z(0), \varphi(0)) = (n, b_1, w_1)$, $n=B+w_1+m_2$, in which departures of both types and arrivals of only Type~$1$ customers are observed. The generator of this censored QBD is constructed by removing transition rates corresponding to Type~$2$ arrivals and type change from the original QBD process $\{(X(t), \varphi(t)) : t \geq 0\}$. The transition rates for this censored QBD are given in Table~\ref{tab:parametersQBDI-4I}. The key observation is that the distribution of $T_2(n, b_1, w_1, m_2)$ is equivalent to the distribution of the time taken by the censored QBD to first reach the state $(B, b_1', 0)$, given the initial state $(n, b_1, w_1)$ with $n - B - w_1 = m_2$. Therefore, we have the following result.

\begin{lemma}
\label{lemma2}
We have,
\begin{equation}
\widetilde{F}_2(n, b_1, w_1, m_2)(s) = \sum_{(b_1',w_1')} 
\widetilde{G_Z}^{n, B}_{(b_1, w_1)(b_1', 0)}(s)
= 
\sum_{b_1' = b_1 - \min\{m_2, b_1\}}^{B - 1} 
\widetilde{G_Z}^{n, B}_{(b_1, w_1)(b_1', 0)}(s),
\label{formula_Type2}
\end{equation}
where $\widetilde{G_Z}^{n, B}_{(b_1, w_1)(b_1', 0)}(s)$ is the LST of the time for the censored QBD to first reach level $n = B$ in phase $(b_1', 0)$, given it starts at level $n$ in phase $(b_1, w_1)$. 

\end{lemma}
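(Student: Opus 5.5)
The proof follows the template of Lemma~\ref{lemma1}. There are two parts: first justify that $T_2(n,b_1,w_1,m_2)$ equals in distribution the first passage time of the censored process $\{(Z(t),\varphi(t))\}$ from $(n,b_1,w_1)$ to level $B$; then identify which phases $(b_1',0)$ at level $B$ can be reached, which gives the summation bounds.

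For the first part, we will argue directly from the dynamics under $r_1=1$, $r_2=0$, $\beta_{2\to1}=0$. The tagged customer enters service exactly when every customer ahead of them has left the waiting area and a server frees up. The customers ahead are all Type~$1$ waiting customers, including future Type~$1$ arrivals, since Queue~$1$ has absolute priority, together with the $m_2-1$ Type~$2$ customers ahead in Queue~$2$. Type~$2$ arrivals and Type~$2$ customers behind the tag never affect this event, so removing them leaves a Markov process whose transitions are those of Table~\ref{tab:parametersQBDI-4I}. In this censored process the level counts the servers plus everyone ahead of the tag, plus the tag itself. The tag is admitted exactly at the first downward transition from level $B+1$ with $w_1=0$ that is a departure, i.e. at the moment the level first reaches $B$ with phase $(b_1',0)$. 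Abandonment of those ahead also decreases the level, but it cannot produce a passage from $B+1$ to $B$ with $w_1=0$ other than through the tag. This needs care: with $w_1=0$, $m_2=1$ only the tag is waiting and the tag does not abandon, so the only downward moves are service completions. Summing the first-passage LST over terminal phases then gives the first equality.

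For the bounds on $b_1'$, at the hitting moment the tag has just taken a server, so it is one Type~$2$ in-service customer and $b_2'\ge1$, i.e. $b_1'\le B-1$. For the lower bound, note that the number of Type~$1$ customers in service can only decrease when a Type~$1$ server is freed and a Type~$2$ customer takes it. That can happen only when $w_1=0$, and at most $m_2$ times, since each such event moves one of the $m_2$ Type~$2$ customers (the tag included) into service. It is also at most $b_1$ times, counting only the original Type~$1$ occupants. Hmm—Type~$1$ arrivals may enter service and later depart, so $b_1$ is not literally the cap. Instead we bound the count by $b_1$ via the nonnegativity $b_1'\ge0$ combined with $b_1'\ge b_1-m_2$, giving $b_1'\ge b_1-\min\{m_2,b_1\}$. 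Phases outside this range then have zero LST, so restricting the sum is harmless.

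Computationally, the matrix $\widetilde{\bf G}_Z^{n,B}(s)$ is obtained exactly as in Lemma~\ref{lemma1}: a backward recursion for $\widetilde{\bf G}^{k,k-1}(s)$ for $k=N,\dots,B+1$ using the censored blocks, then the product $\widetilde{\bf G}^{n,n-1}(s)\cdots\widetilde{\bf G}^{B+1,B}(s)$. Here the recursion must start at level $N$ because Type~$1$ arrivals can raise the level. The main obstacle is the first part, specifically the claim that the censored process is Markov with exactly the rates of Table~\ref{tab:parametersQBDI-4I}. In particular, abandonment of Type~$2$ customers ahead must be included at rate $(m_2'-1)\chi_2$ rather than $w_2\chi_2$, and the level must not reach $B$ before the tag is served. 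I would check this state by state against Table~\ref{tab:QBDgenerator_Rt}.
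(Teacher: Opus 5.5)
Apart from one caution below, your proposal is correct and shares the paper's skeleton: the censored process $Z$, first passage to level $B$, restriction of the terminal phases, and the backward $\widetilde{\mathbf G}$ recursion started at level $N$. Where it differs is in how the summation bounds are justified.

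The paper argues from extreme evolutions. If only Type~$1$ customers depart, one ends at $b_1'=b_1-\min\{m_2,b_1\}$. If only Type~$2$ customers depart while Type~$1$ customers keep arriving, one ends at $b_1'=B-1$. Every other evolution is then asserted to land in between.

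You prove the containment directly:
\begin{itemize}
\item $b_1'\le B-1$, because the tag, still Type~$2$ since $\beta_{2\to1}=0$, has just taken a server.
\item $b_1$ can only drop when a Type~$1$ server is freed while $w_1=0$. Each such drop admits one of the $m_2$ Type~$2$ customers at or ahead of the tag, and no Type~$2$ customer ever joins ahead of it. So there are at most $m_2$ drops, and $b_1'\ge\max\{0,b_1-m_2\}=b_1-\min\{m_2,b_1\}$.
\end{itemize}
This is exactly what the identity requires: no probability mass lies outside the range, so the restricted sum equals the full one. It also replaces the paper's unproved ``all other cases lie in between''. The paper's scenarios add only the information of how the endpoints arise, which the lemma does not need. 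Your aside about ``at most $b_1$ times'' can be deleted, since $b_1'\ge 0$ together with $b_1'\ge b_1-m_2$ already gives the bound.

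You also verify something the paper leaves implicit: level $B$ cannot be reached before the tag is served. While the tag waits, the level is $B+w_1+m_2'$ with $m_2'\ge1$. At level $B+1$ with $w_1=0$, the Type~$2$ abandonment rate $(m_2'-1)\chi_2$ vanishes.

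The caution concerns your first part, and it is a weakness you inherit from the paper's `key observation' preceding the lemma. In a finite-capacity system, Type~$2$ customers behind the tag, including later Type~$2$ arrivals, are not irrelevant. A Type~$1$ arrival is admitted only if the whole waiting area has room: this is the condition $w_1+w_2<N-B$ on the $\lambda_1$ transitions in Table~\ref{tab:QBDgenerator_Rt}, with $w_2$ counting everyone in Queue~$2$. The censored process of Table~\ref{tab:parametersQBDI-4I}, by contrast, admits Type~$1$ arrivals whenever its own level $B+w_1+m_2'$ is below $N$. Customers behind the tag therefore shorten its wait by blocking Type~$1$ arrivals.

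Your planned state-by-state check would succeed only when nobody can ever be behind the tag. An example is $\lambda_2=0$ with the tag last in Queue~$2$ at time $0$, which is precisely the setting of the paper's QBD--CTMC comparison in Figure~\ref{Conditional_QBD_CTMC_T2_beta_0}. State that assumption explicitly rather than claiming that Type~$2$ arrivals never affect the tag's admission.
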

\noindent{\bf Proof:} The bounds of the summation in Equation~\eqref{formula_Type2} arise from the following possibilities of the system evolution.
\begin{itemize}
    \item First, we consider the case when only Type~$1$ customers depart, if there are any. If $m_2 \geq b_1$, then all Type~$1$ incustomers depart before the $m_2$-th Type~$2$ customer is admitted, resulting in $b_1' = 0$. If $m_2 < b_1$, then $b_1' = b_1 - m_2$. Combining both cases gives the lower bound $b_1' = b_1 - \min\{m_2, b_1\}$.
    
    \item Now, we consider the case when only Type~$2$ customers depart, and Type~$1$ customers continue to arrive until all Type~$2$ incustomers have departed just before the $m_2$-th customer goes to a server. In this case, the $m_2$-th Type~$2$ customer is admitted when $b_1' = B - 1$. In all other cases, where both Type~$1$ and Type~$2$ customers depart, the value of $b_1'$ can range between $b_1 - \min\{m_2, b_1\}$ and $B - 1$. Combining them we get the upper bound $b_1'=B-1$. 
\end{itemize}
To compute $\widetilde{G_Z}^{n, B}_{(b_1, w_1)(b_1', 0)}(s)$, we follow the approach similar to the one given in Section~\ref{sec:WT_Type1nochange}. 
\rule{9pt}{9pt}

\begin{table}[!htbp]
		\centering
        
		\setlength{\tabcolsep}{1pt}	
		\begin{tabular} {l@{\hspace{0.5cm}} l@{\hspace{0.5cm}}l l}

  \hline
   	$\text{Conditions}$&	\text{Current state} &	$\text{New state}$ &	$\text{Rate}$\\

\hline
 	$0 \leq n < B$ & \text{$(n,b_1,0)$ \quad } &	\text{$(n+1,b_1+1,0)$ \quad } &	$\lambda_1$\\

$B\leq n<N$ & \text{$(n,b_1,w_1)$ \quad } &	\text{$(n+1,b_1,w_1+1)$ \quad } &	$\lambda_1$ \\

\hline

$0 < n \leq B, b_1\geq 1 $ & \text{$(n,b_1,0)$ \quad } &	\text{$(n-1,b_1-1,0)$ \quad } &	$b_1\mu_1$ \\

$n>B, b_1 \geq 1, w_1=0$ & \text{$(n,b_1,0)$ \quad } &	\text{$(n-1,b_1-1,0)$ \quad } &	$b_1 \mu_1$ \\

$n>B, b_1 \geq 1, 1 \leq w_1 < n-B$  & \text{$(n,b_1,w_1)$ \quad }  &	\text{$(n-1,b_1,w_1-1)$ \quad }  &	$b_1 \mu_1$ \\

\hline

$0 < n \leq B, b_2\geq 1$ & \text{$(n,b_1,0)$ \quad } &	\text{$(n-1,b_1,0)$ \quad } &	$b_2\mu_2$ \\

$n>B, b_2 \geq 1, w_1=0$ & \text{$(n,b_1,0)$ \quad } &	\text{$(n-1,b_1,0)$ \quad } &	$b_2 \mu_2$\\

$n>B, b_2 \geq 1, 1 \leq w_1 < n-B$  & \text{$(n,b_1,w_1)$ \quad }  &	\text{$(n-1,b_1+1,w_1-1)$ \quad }  &	 $b_2 \mu_2$  \\

   \hline

$n>B, w_1 < n-B$  & \text{$(n,b_1,w_1)$ \quad }  &	\text{$(n-1,b_1,w_1-1)$ \quad }  &	$w_1 \chi_1$  \\

$n>B, w_1 < n-B-1$  & \text{$(n,b_1,w_1)$ \quad }  &	\text{$(n-1,b_1,w_1)$ \quad }  &	$(w_2-1) \chi_2$  \\

\hline

  \end{tabular}
		\caption{Off-diagonals in the generator of the censored QBD process $\{(Z(t),\varphi(t)):t\geq 0\}$ in Section~\ref{sec:WT_Type2nochange}. The on-diagonals can be calculated using~\eqref{eq:on-diagonals}.}

\label{tab:parametersQBDI-4I}
\end{table}

\section{Numerical examples}
\label{Numerical_examples}

We consider a hospital system with a total capacity of $N=100$ and a bed capacity of $B=80$. While these values represent a smaller-scale system, they are chosen deliberately to demonstrate the core concepts and compare the distributions obtained from the CTMC and the QBD models. The proposed methodology (CTMC-Krylov) is equally scalable and adaptable to realistically sized systems. The state space of the QBD models in our examples is of size $21,951$. This follows from the summation $\sum_{n=0}^{N} min\{n,B\}\left(\max\{n-B+1,1\} \right)+\max\{0,n-B\}+1$, given in Section~\ref{sec:sizeQnn}. The state space of the CTMC is of size $268,191$, of which $249,480$ are the transient states, follows from ~\eqref{ctmc_system_size}. We performed all the computations on Intel Core i5-1145G7 processors with 2.60 GHz CPU and 16GB RAM in Windows 11.

\subsection{Input parameters }

We classify patients into two categories: Type~$1$ (complex) and Type~$2$ (others). The QBD parameters used in our models are summarised in Table~\ref{tab:QBDparameters}. These rates are derived from a large dataset obtained from a tertiary referral hospital in Australia, with ethics approval granted by the Tasmanian Department of Health, as follows.  

First, to choose suitable values of the parameters for a moderately-sized system of capacity $N=100$, $B=80$, the departure rates $\mu_i$ and the expected lengths of stay $\mathbb{E}(\text{LoS}_i)=1/\mu_i$ for Type~$i$ patients, where $i = 1, 2$, are fitted directly from the data. We assumed that Type~$1$ (complex) patients were those whose Diagnosis Related Group (DRG) (see~\textcite{rahmawati2021cost}) category was recording a {\em major complexity}, while all other patients were assumed to be Type~$2$ (other). Then, using recorded length of stay values and standard fitting methods (command \texttt{fitdist} in Matlab) we evaluated $\mu_i$ for each patient type, assuming Poisson departure rates. Next, we chose the total arrival rate $\lambda=23.7$ and then scaled the individual arrival rates using $\lambda_i=p_i\lambda$, with $p_i$ being the proportions of Type~$i$ arrivals as obtained from the data. We note that analyses of hospital data indicate that Poisson and exponential distributions are suitable tools to model the patients' arrival and their lengths of stay in a hospital system~\parencite{whitt2017data}. These assumptions have been widely employed in queueing-based models of hospital systems and is also explicitly adopted in studies such as~\textcite{wang2025markov}.

We assume $\beta_{2\to1}=3$, so that this  corresponds to an average waiting time of $8$ hours before the change Type~$2$ $\to$ Type~$1$. It is important to note that the change Type~$2$ $\to$ Type~$1$ does not imply an instantaneous clinical transformation but rather a reclassification decision by hospital staff. Specifically, once a patient’s condition has sufficiently deteriorated, hospital managers update the patient’s label from Type~$2$ to Type~$1$ to reflect increased complexity and to include them in Queue~$1$ with higher service priority. While the models permit $\chi_1,\chi_2 \geq 0$, we take $\chi_1 = \chi_2 = 0$ in our analysis, since the data included only the details of the patients that were sufficiently sick/injured so that, once they arrived in the hospital, they would not, or maybe could not, attempt to leave and go to another hospital.

\begin{table}[H]
    \centering
    \setlength{\tabcolsep}{6pt}	
    \begin{tabular}{l l l}
        \hline
        \textbf{Parameter} & \textbf{Type~$1$ (Complex)} & \textbf{Type~$2$ (other)} \\
        \hline
        Proportion of arrivals $p_i = \lambda_i / \lambda$ & $p_1 = 0.2446$ & $p_2 = 0.7554$ \\
        Arrival rates $ \lambda_i = p_i \lambda $ (per day) & $ \lambda_1 = 5.7961 $ & $ \lambda_2 = 17.9039 $ \\
        Mean length of stay $ \mathbb{E}(\text{LoS}_i) $ (in days) & $ \mathbb{E}(\text{LoS}_1) = 6.59 $ & $ \mathbb{E}(\text{LoS}_2) = 2.43 $ \\
        Departure rate $ \mu_i = 1 / \mathbb{E}(\text{LoS}_i) $ & $ \mu_1 = 0.1517 $ & $ \mu_2 = 0.4113 $ \\

        (per patient, per day) &  &  \\

        Daily demand (bed-days) $ D_i = \lambda_i \mathbb{E}(\text{LoS}_i) $ & $ D_1 = 38.2142 $ & $ D_2 = 43.5301 $ \\
        \hline
        Transition rate from Type~$2$ to Type~$1$  &  & $ \beta_{2 \to 1} = 3 $ \\
        \hline
    \end{tabular}
    \caption{Parameters of the QBD model. }
    \label{tab:QBDparameters}
\end{table}

\subsection{Stationary analysis of the QBD model ($\beta_{2\to 1}=3$)}

We now present a stationary analysis of the QBD model described in Section~\ref{sec:QBDmodelIII}, considering various priority policies corresponding to $(r_1,r_2)$ for a system with capacity $N = 100$, and bed capacity $B = 80$. The stationary distribution vector of the QBD process is denoted by $\boldsymbol{\pi} = [\boldsymbol{\pi}_n]_{n = 0, 1, \ldots, N}$, such that
$$
\boldsymbol{\pi}_n = [\pi_{(n, b_1, w_1)}]_{b_1 = 0, 1, \ldots, B;\ w_1 = 0, 1, \ldots, N - B},
$$
where
$$
\pi_{(n, b_1, w_1)} = \lim_{t \to \infty} \mathbb{P}\left(X(t) = n,\ \varphi(t) = (b_1, w_1)\right)
$$
represents the long-run proportion of time the process spends in state $(n, b_1, w_1)$, where $n$ is the total number of patients in the system, $b_1$ is the number of Type~$1$ patients occupying beds, and $w_1$ is the number of Type~$1$ patients in the waiting area. To evaluate $\boldsymbol{\pi}$, we follow the approach given in~\textcite{aksamit2024sensitivities} and~\textcite{Gusthesis}. We first compute $\boldsymbol{\pi}_N$ using the explicit matrix equations
\begin{equation}
\boldsymbol{\pi}_N(\widehat{\bf R}^{(N-1)}{\bf Q}^{[N-1,N]}+{\bf Q}^{[N,N]}) = {\bf 0}, \\
\boldsymbol{\pi}_N\Big({\bf 1}+\sum_{n=0}^{N-1}\prod_{k=N-1}^{n}\widehat{\bf R}^{(k)}{\bf 1}\Big) = 1,
\label{eq:piN}
\end{equation}
where the matrices $\widehat{\bf R}^{(n)}=\bigl[ R^{(n)}_{(b_1,w_1)(b_1',w_1')} \bigr]$ record the expected time $R^{(n)}_{(b_1,w_1)(b_1',w_1')}$ that the process spends in states $(n,b_1',w_1')$ when it begins in state $(n+1,b_1,w_1)$, before it returns to level $n+1$, and for $n=0,1,\ldots,N-1$, $\widehat{\bf R}^{(n)}$ are calculated using the explicit recursions 
\begin{equation}
\widehat{\bf R}^{(0)} = -{\bf Q}^{[1,0]}({\bf Q}^{[0,0]})^{-1},
\widehat{\bf R}^{(n)} = -{\bf Q}^{[n+1,n]}(\widehat{\bf R}^{(n-1)}{\bf Q}^{[n-1,n]}+{\bf Q}^{[n,n]})^{-1}.
\label{eq:Rn}
\end{equation}
Finally, $\boldsymbol{\pi}_n = \boldsymbol{\pi}_{n+1}\widehat{\bf R}^{(n)} = \boldsymbol{\pi}_N\prod_{k=N-1}^{n}\widehat{\bf R}^{(k)}$ gives the probabilities $\boldsymbol{\pi}_{N-1},\dots,\boldsymbol{\pi}_0$.

The stationary marginal probabilities under different priority policies are given in Figures~\ref{stationaryQBD_r1_1_r2_0}-\ref{stationaryQBD_r1_0_r2_1}. From these figures, we observe that the system maintains an average of approximately $85$ patients, with around $36$ Type~$1$ patients occupying beds. We note though that $85\%$ is an arbitrary figure, and it would be overly simplistic to adopt it in general, as discussed in~\textcite{OReilly20251493}.  Further, we evaluate key performance metrics defined in Table~\ref{tab:keymetricsdefinition} for each priority policy. The values of these metrics are presented in Table~\ref{tab:keymetricestimations}. These results confirm that the parameters we consider (given in Table~\ref{tab:QBDparameters}), yield an average occupancy close to $85\%$. We note that as the priority for Type~$1$ (complex) patients increases, the number of complex patients in the waiting area $W_1$ and in the beds $B_1$ decreases. This is because the complex patients get priority access to service. As the priority for Type~$2$ patients increases, the number of Type~$1$ patients in the waiting area also increases, which is consistent with the system behavior.

\begin{figure}[H]
	\centering
	{\includegraphics[width=5.1 cm,height=5.1 cm]{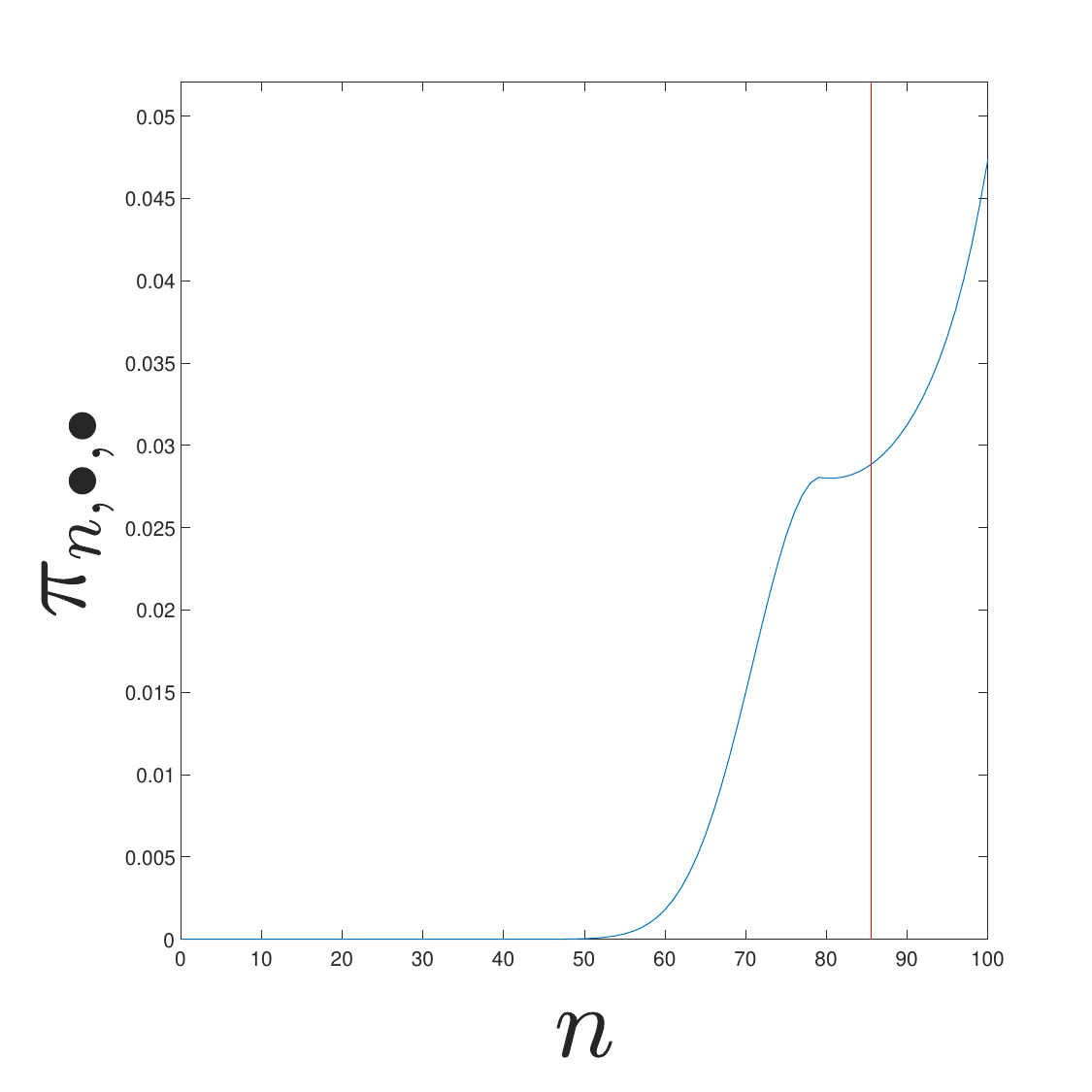}}
    \quad
	{\includegraphics[width=5.1 cm,height=5.1 cm]{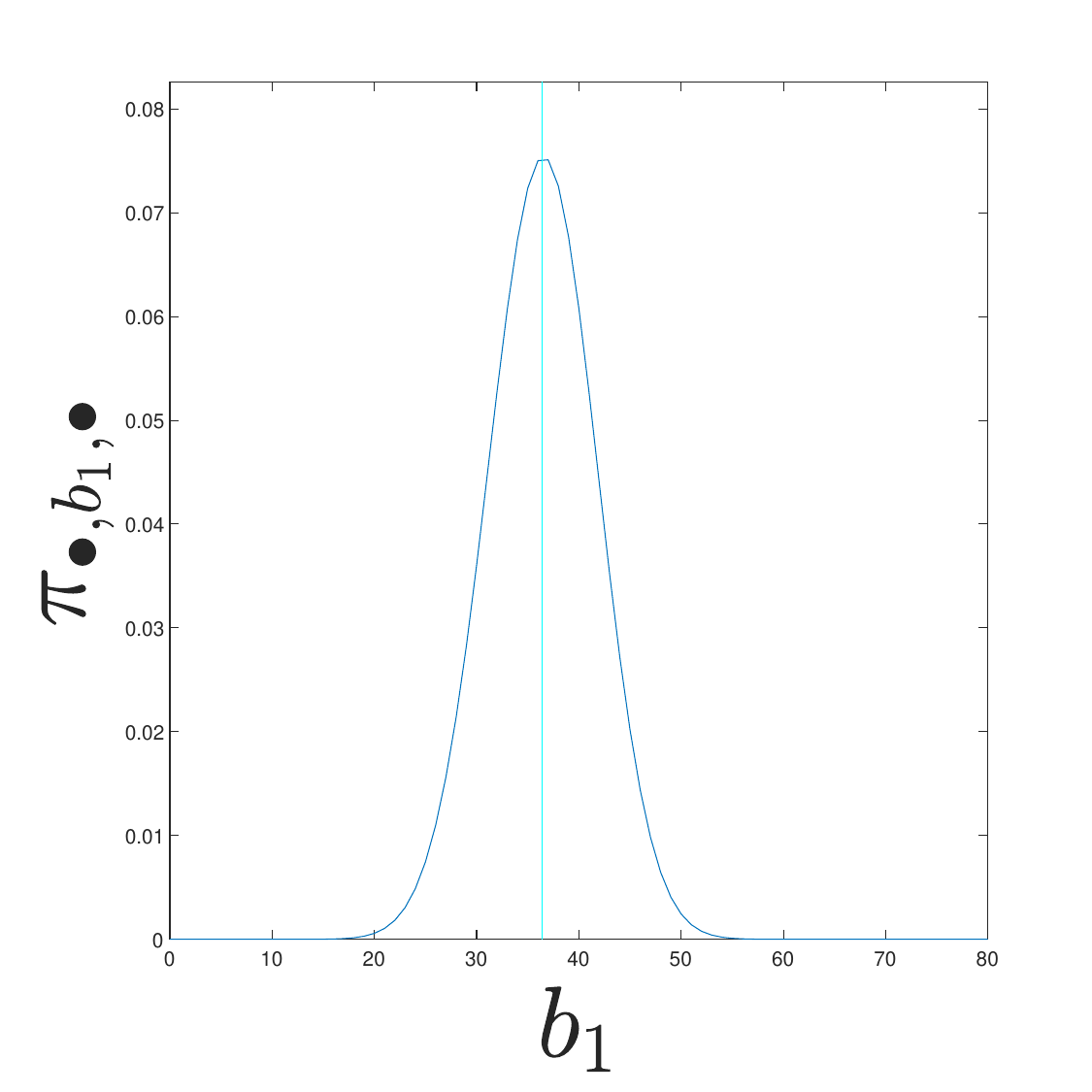}}    
    \quad
	{\includegraphics[width=5.1 cm,height=5.1 cm]{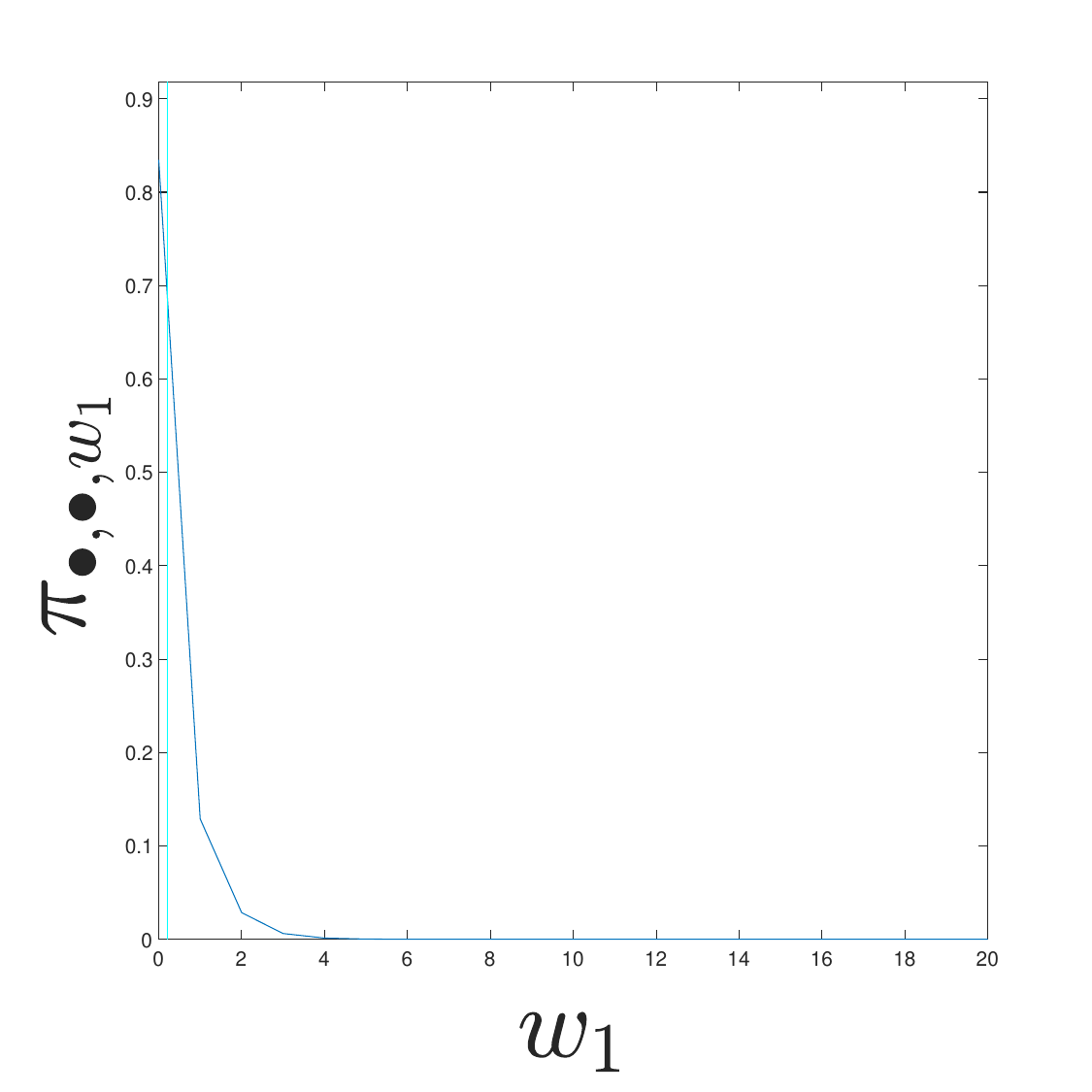}}    

    	\caption{Stationary distribution of the QBD model assuming $N=100$, $B=80$ and parameters described in Table~\ref{tab:QBDparameters}. We assume that Type~$1$ patients have priority ($r_1=1$, $r_2=0$).}
	\label{stationaryQBD_r1_1_r2_0}
\end{figure}

\begin{figure}[H]
	\centering
	{\includegraphics[width=5.1 cm,height=5.1 cm]{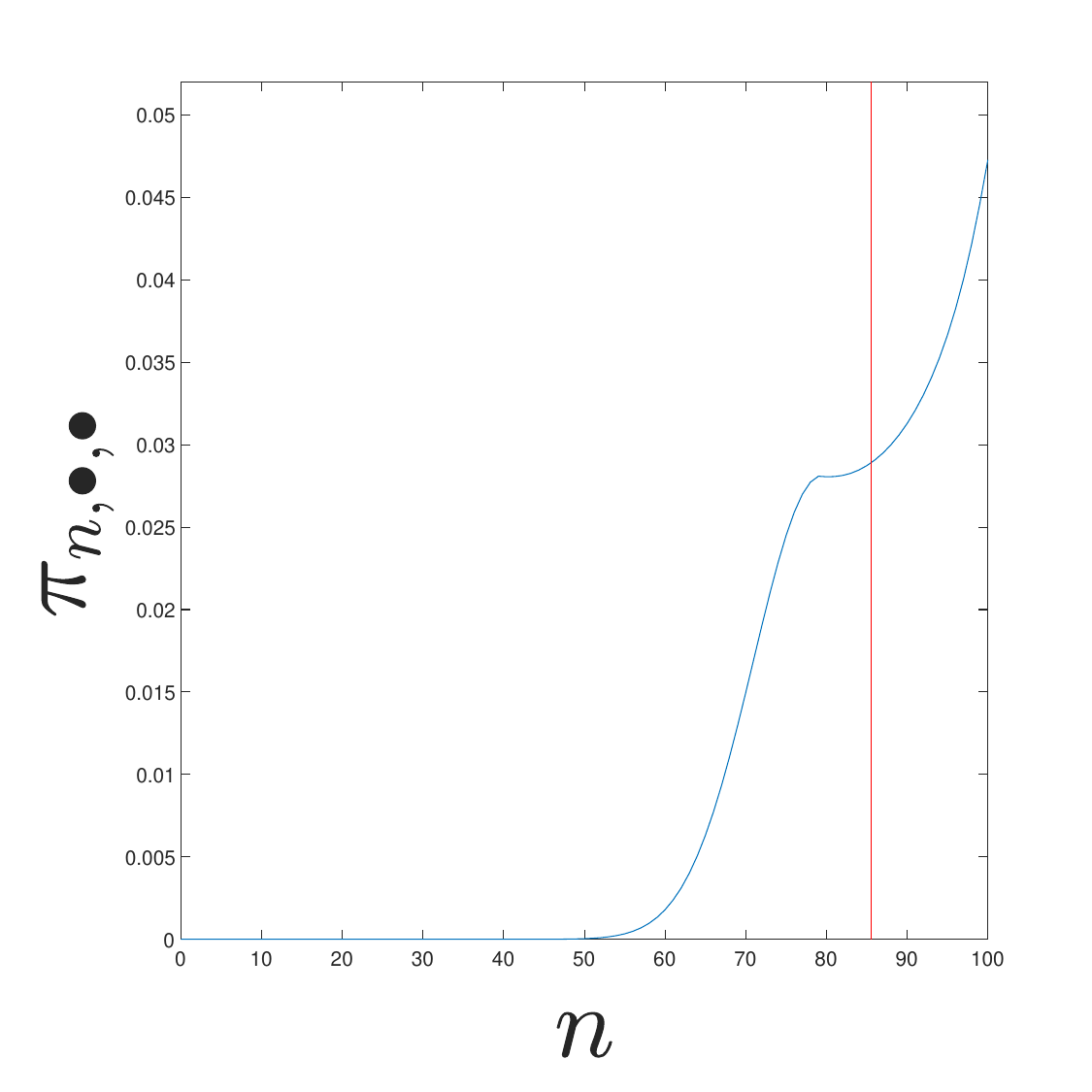}}
    \quad
	{\includegraphics[width=5.1 cm,height=5.1 cm]{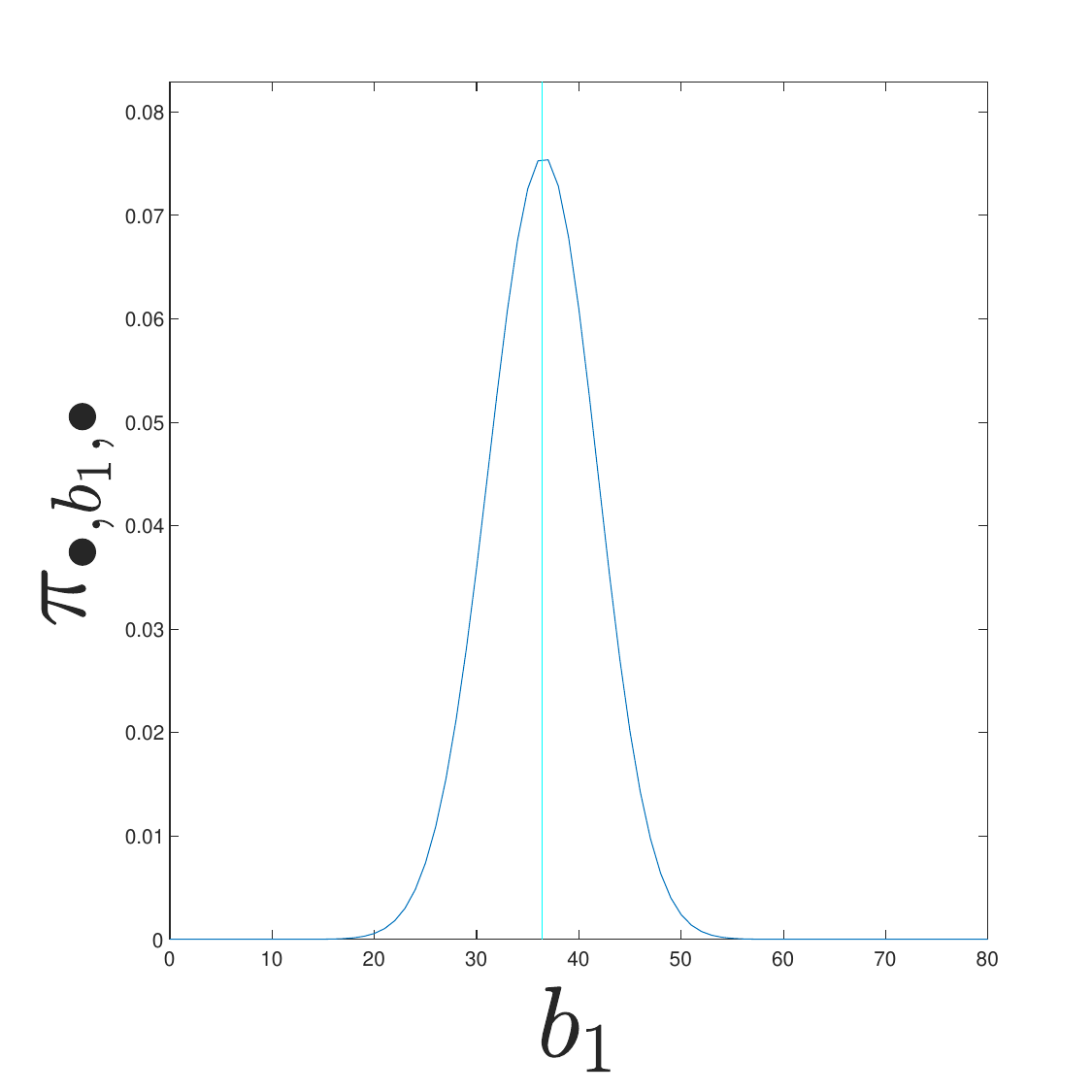}}    
    \quad
	{\includegraphics[width=5.1 cm,height=5.1 cm]{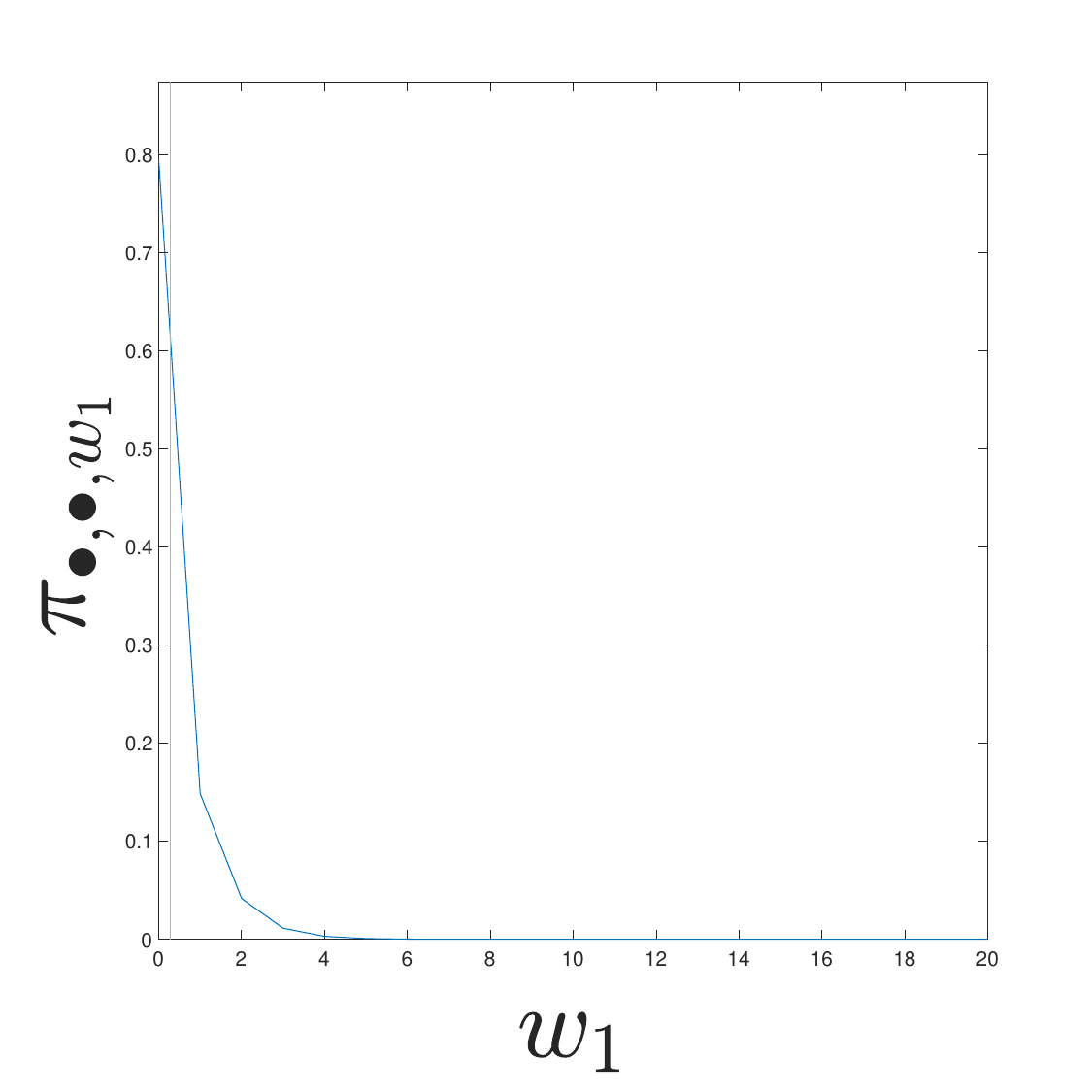}}    
	\caption{Stationary distribution of the QBD model assuming $N=100$, $B=80$ and parameters described in Table~\ref{tab:QBDparameters}. We assume that the patients have priority ($r_1=0.8$, $r_2=0.2$).}
	\label{stationaryQBD_r1_8_r2_2}
\end{figure}

\begin{figure}[H]
	\centering
	{\includegraphics[width=5.1 cm,height=5.1 cm]{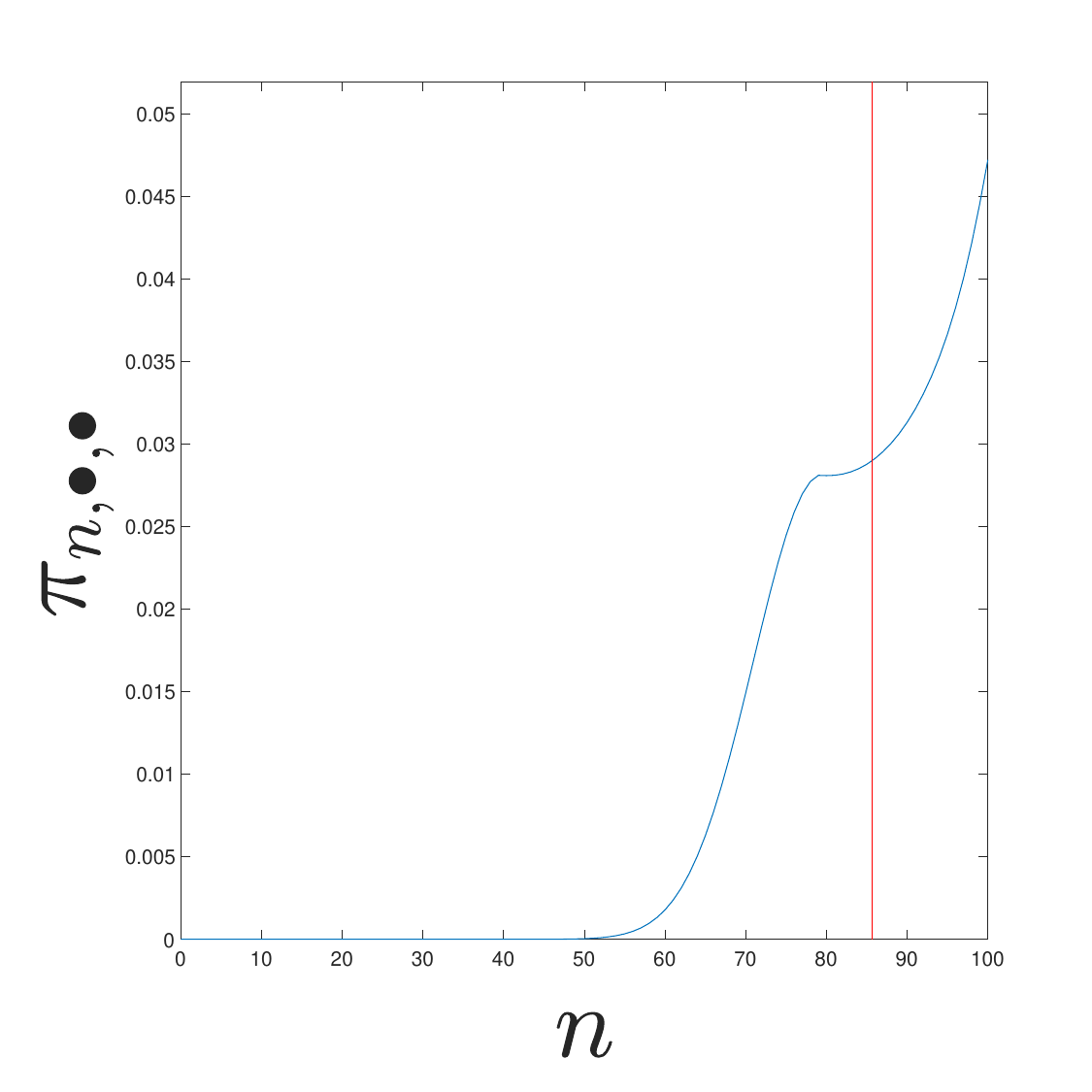}}
    \quad
	{\includegraphics[width=5.1 cm,height=5.1 cm]{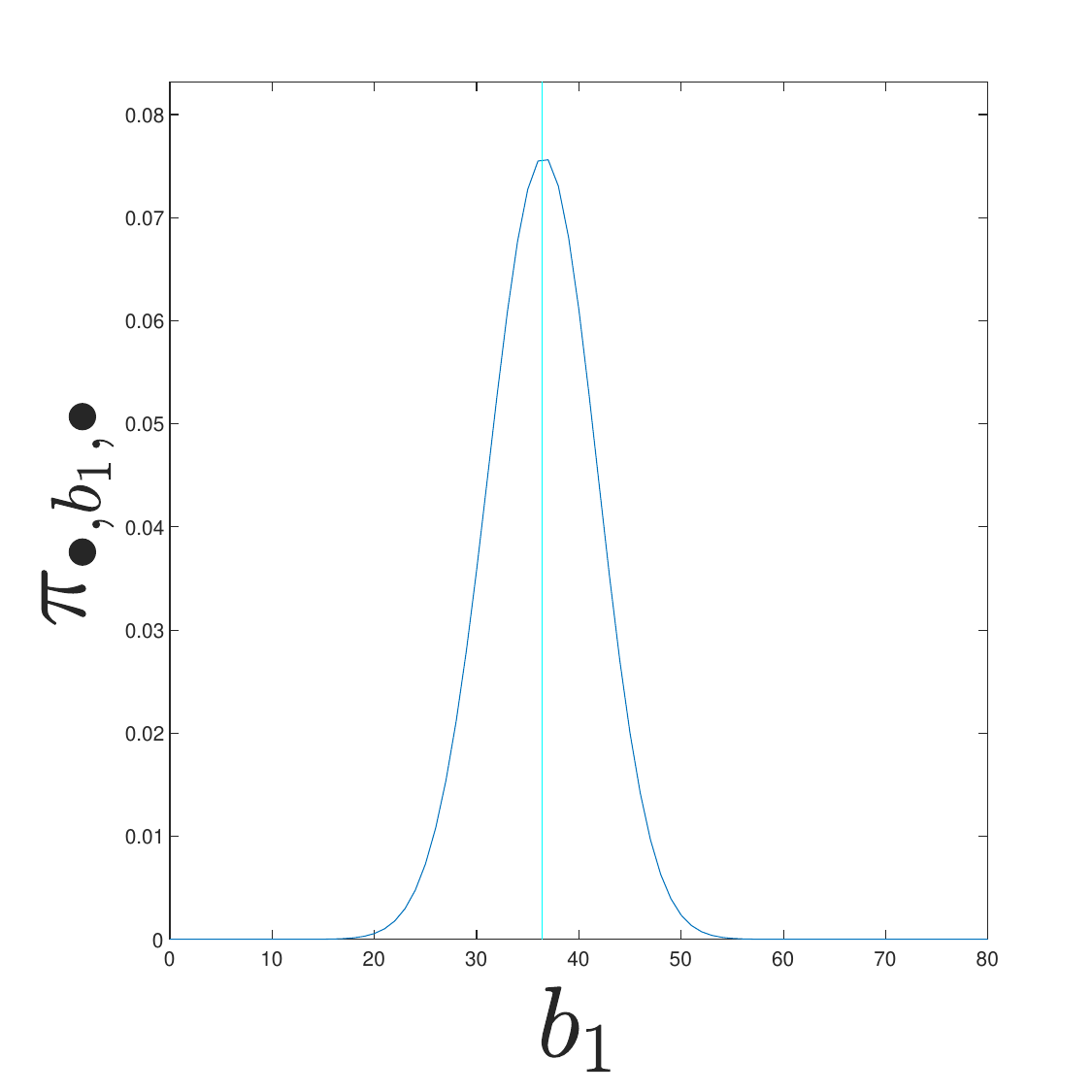}}    
    \quad
	{\includegraphics[width=5.1 cm,height=5.1 cm]{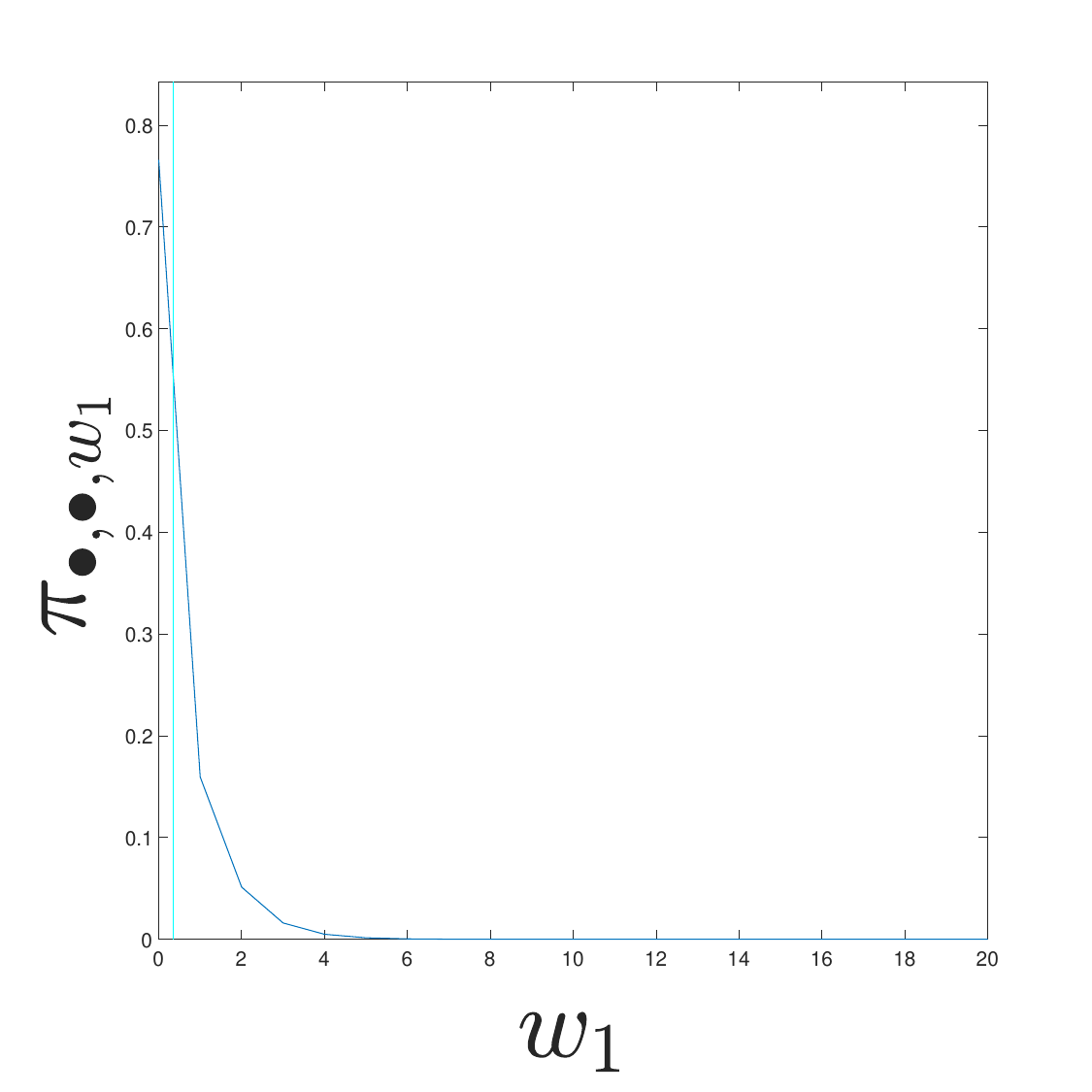}}    
	\caption{Stationary distribution of the QBD model assuming $N=100$, $B=80$ and parameters described in Table~\ref{tab:QBDparameters}. We assume that the patients have priority ($r_1=0.7$, $r_2=0.3$).}
	\label{stationaryQBD_r1_07_r2_03}
\end{figure}

\begin{figure}[H]
	\centering
	{\includegraphics[width=5.1 cm,height=5.1 cm]{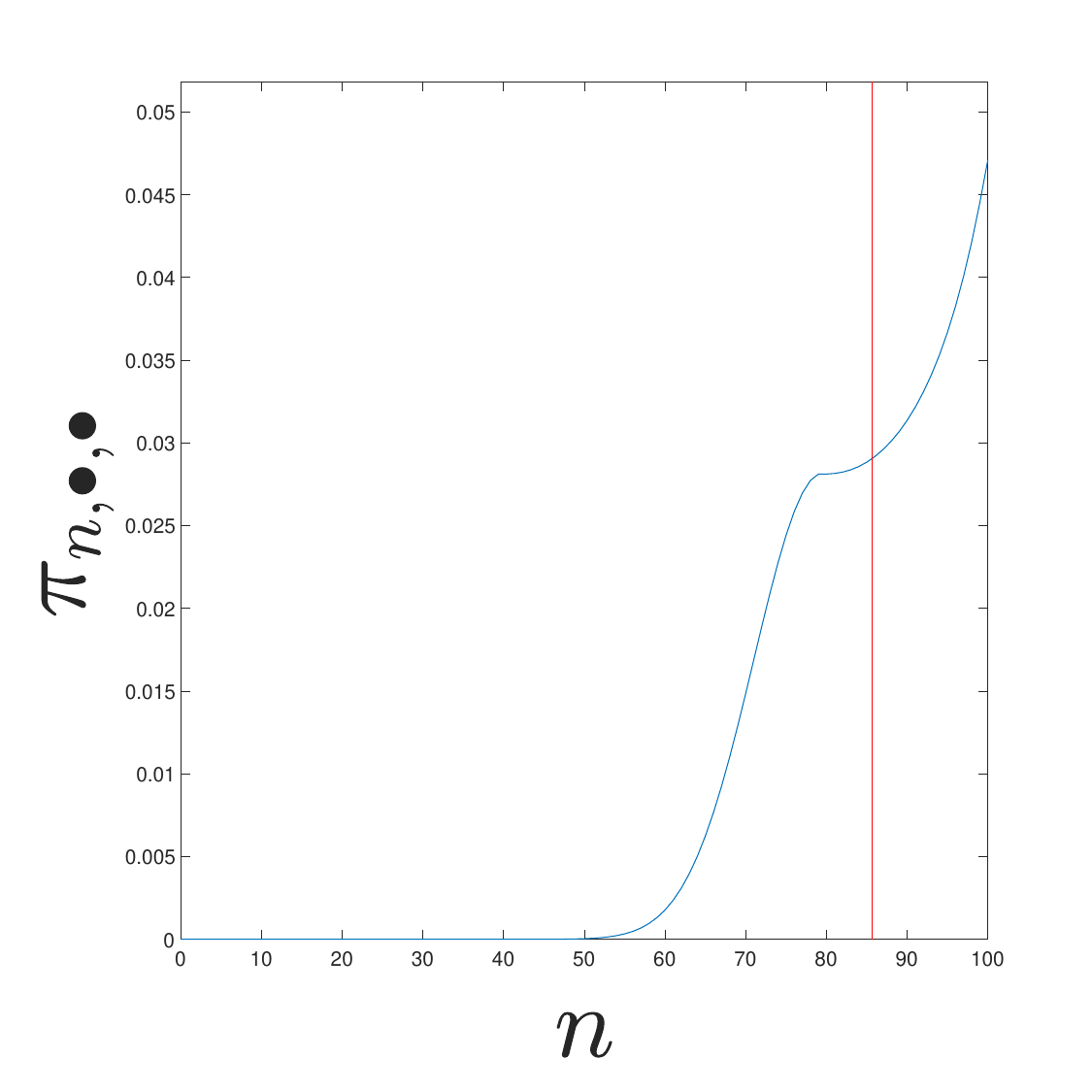}}
    \quad
	{\includegraphics[width=5.1 cm,height=5.1 cm] {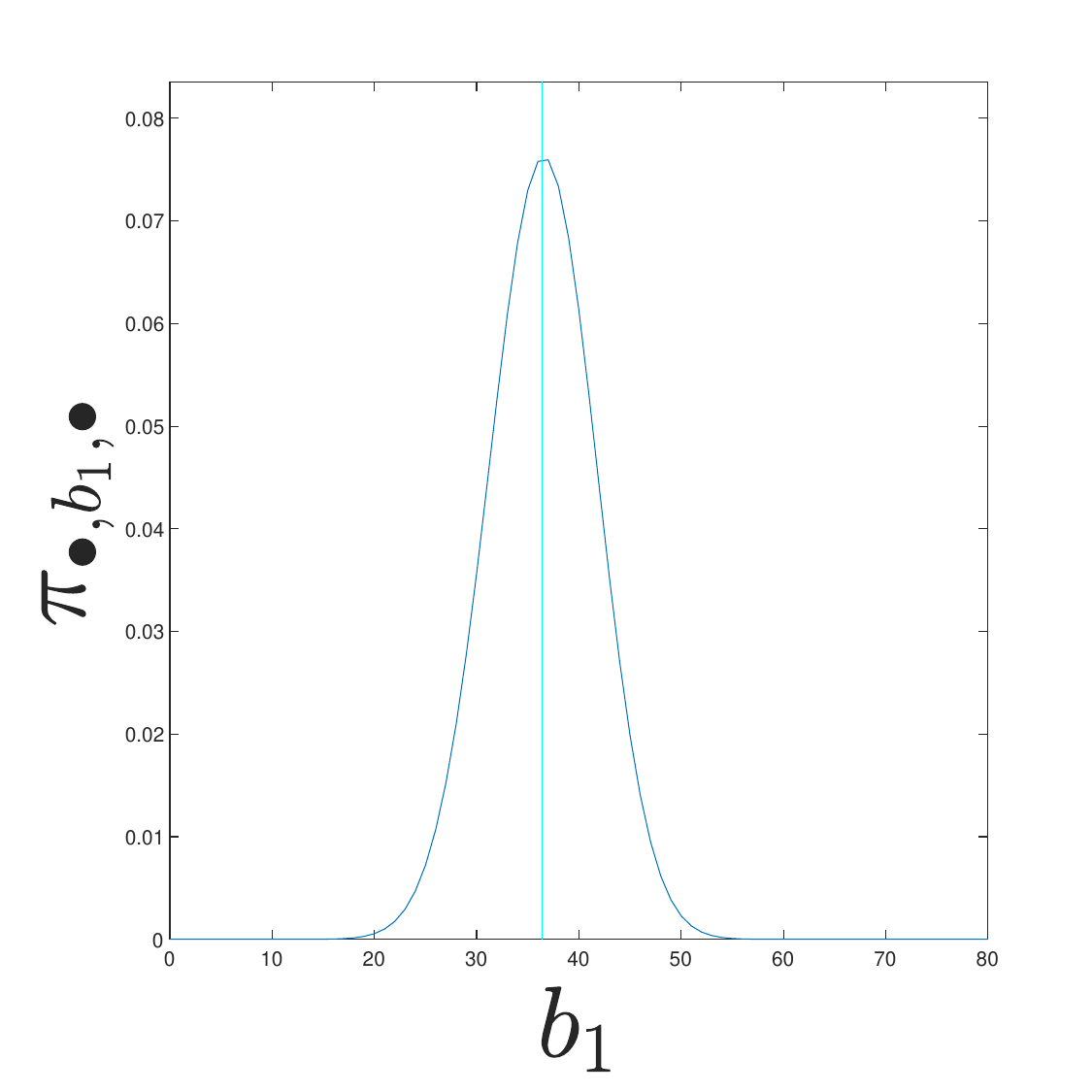}}    
    \quad
	{\includegraphics[width=5.1 cm,height=5.1 cm]{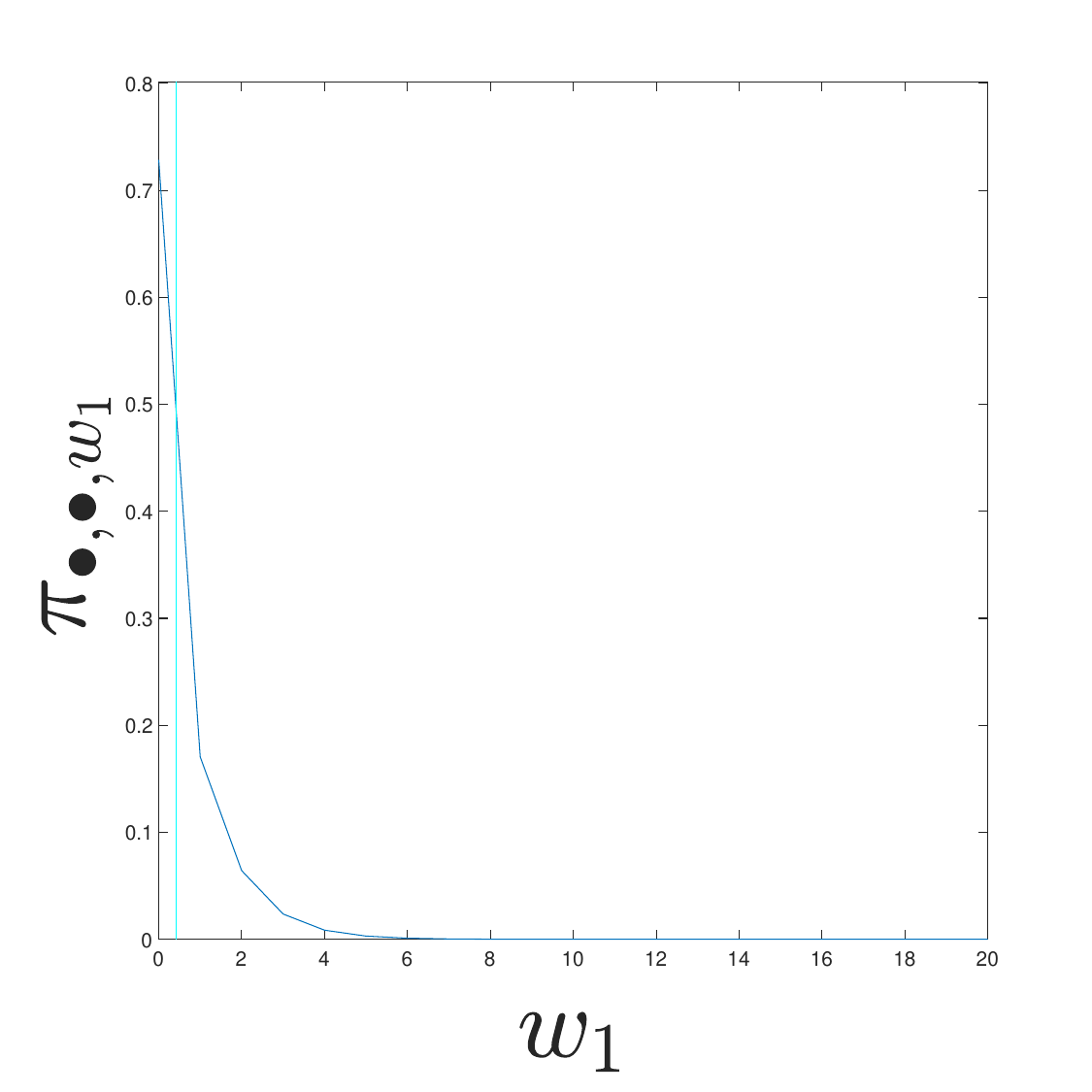}}    
	\caption{Stationary distribution of the QBD model assuming $N=100$, $B=80$ and parameters described in Table~\ref{tab:QBDparameters}. We assume that the patients have priority ($r_1=0.6$, $r_2=0.4$).}
	\label{stationaryQBD_r1_06_r2_04}
\end{figure}

\begin{figure}[H]
	\centering
	{\includegraphics[width=5.1 cm,height=5.1 cm]{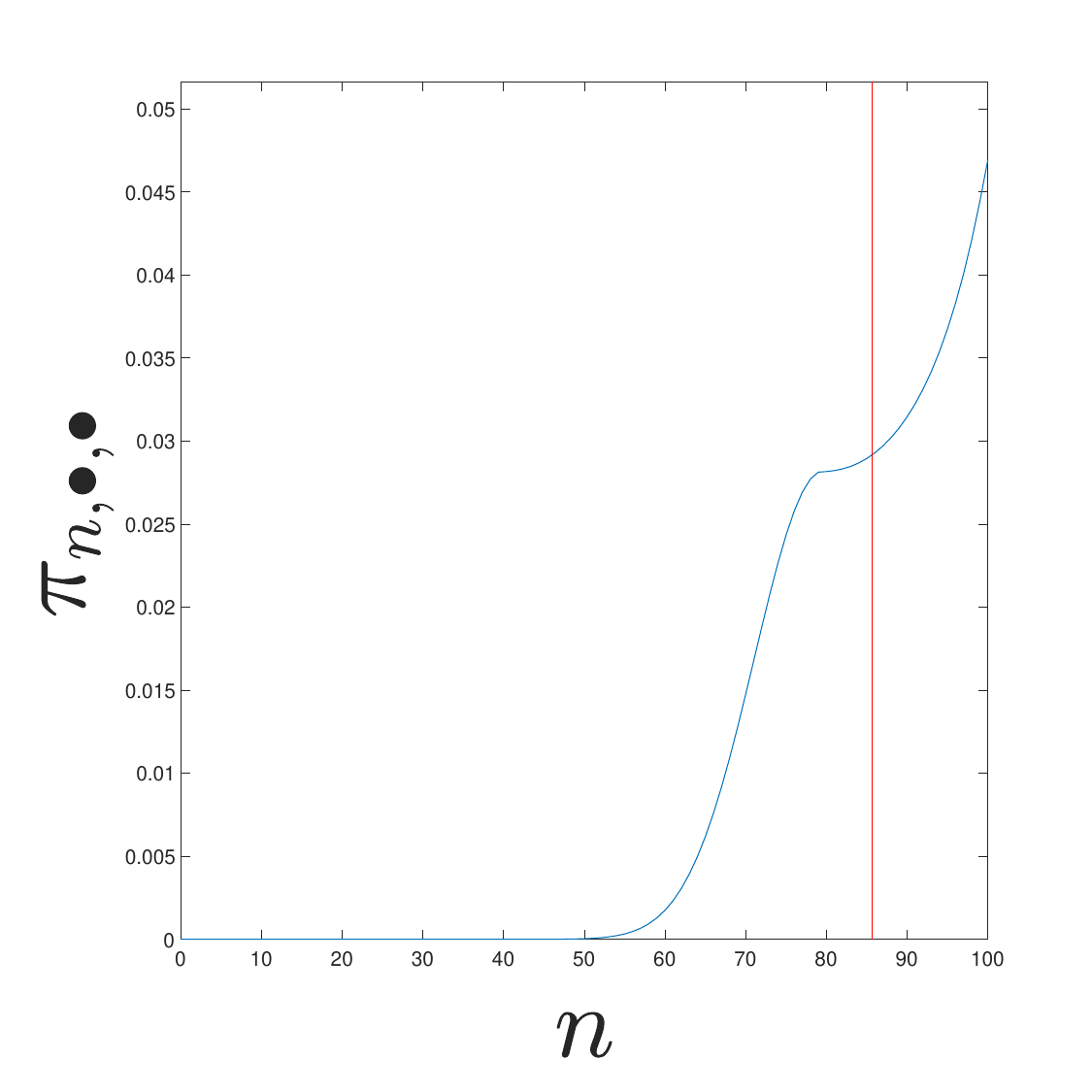}}
    \quad
	{\includegraphics[width=5.1 cm,height=5.1 cm]{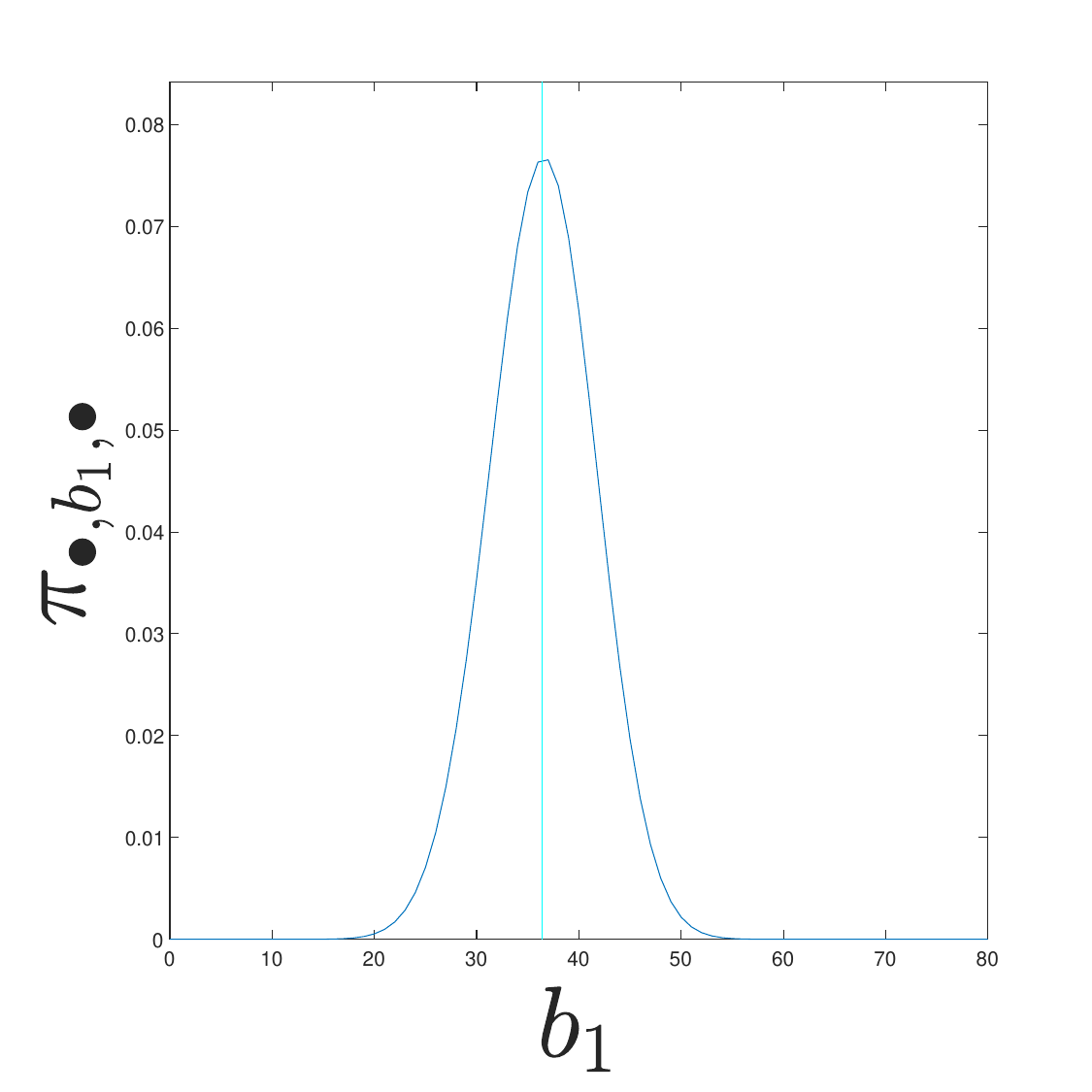}}    
    \quad
	{\includegraphics[width=5.1 cm,height=5.1 cm]{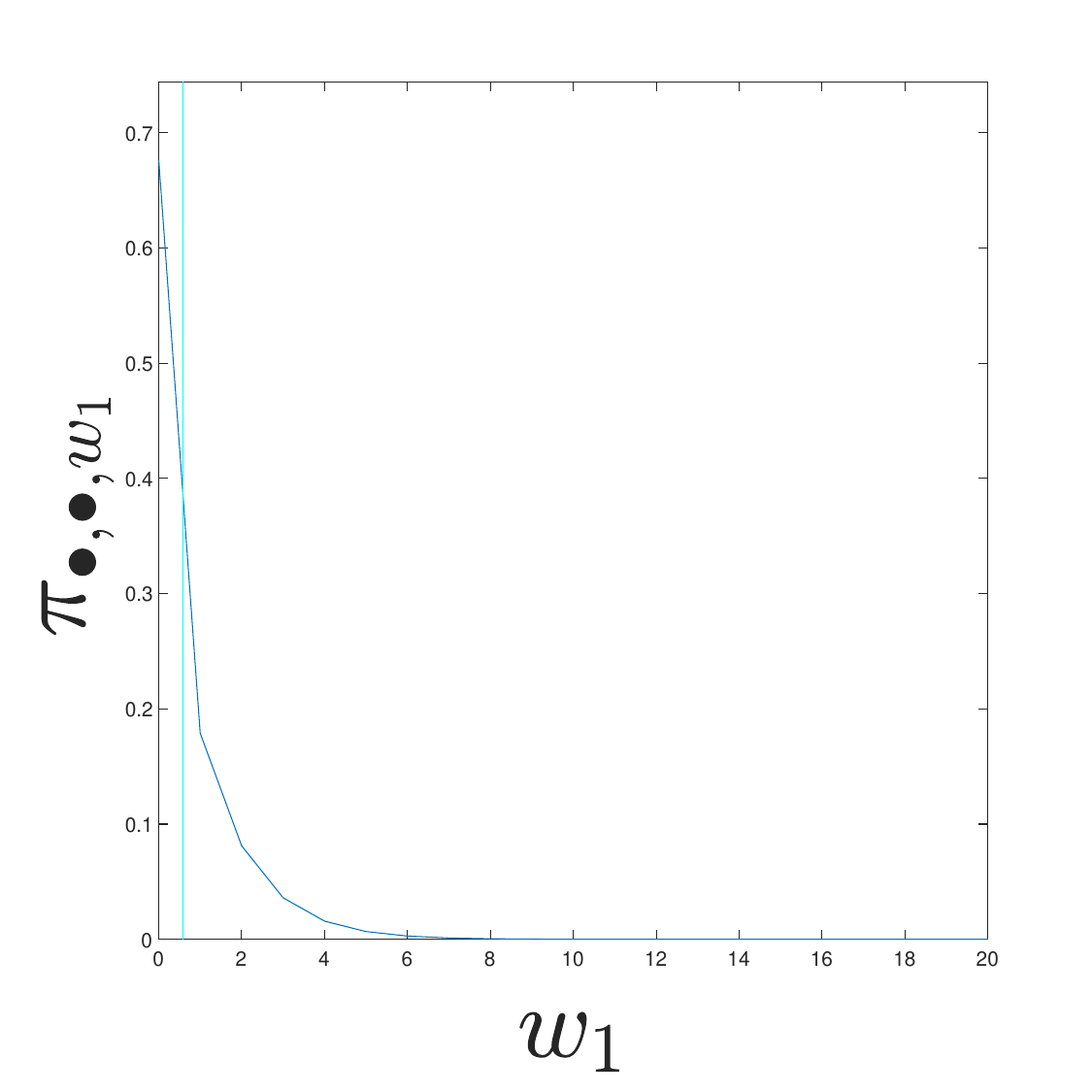}}    
	\caption{Stationary distribution of the QBD model assuming $N=100$, $B=80$ and parameters described in Table~\ref{tab:QBDparameters}. We assume that the patients have priority ($r_1=0.5$, $r_2=0.5$).}
	\label{stationaryQBD_r1_05_r2_05}
\end{figure}

\begin{figure}[H]
	\centering
	{\includegraphics[width=5.1 cm,height=5.1 cm]{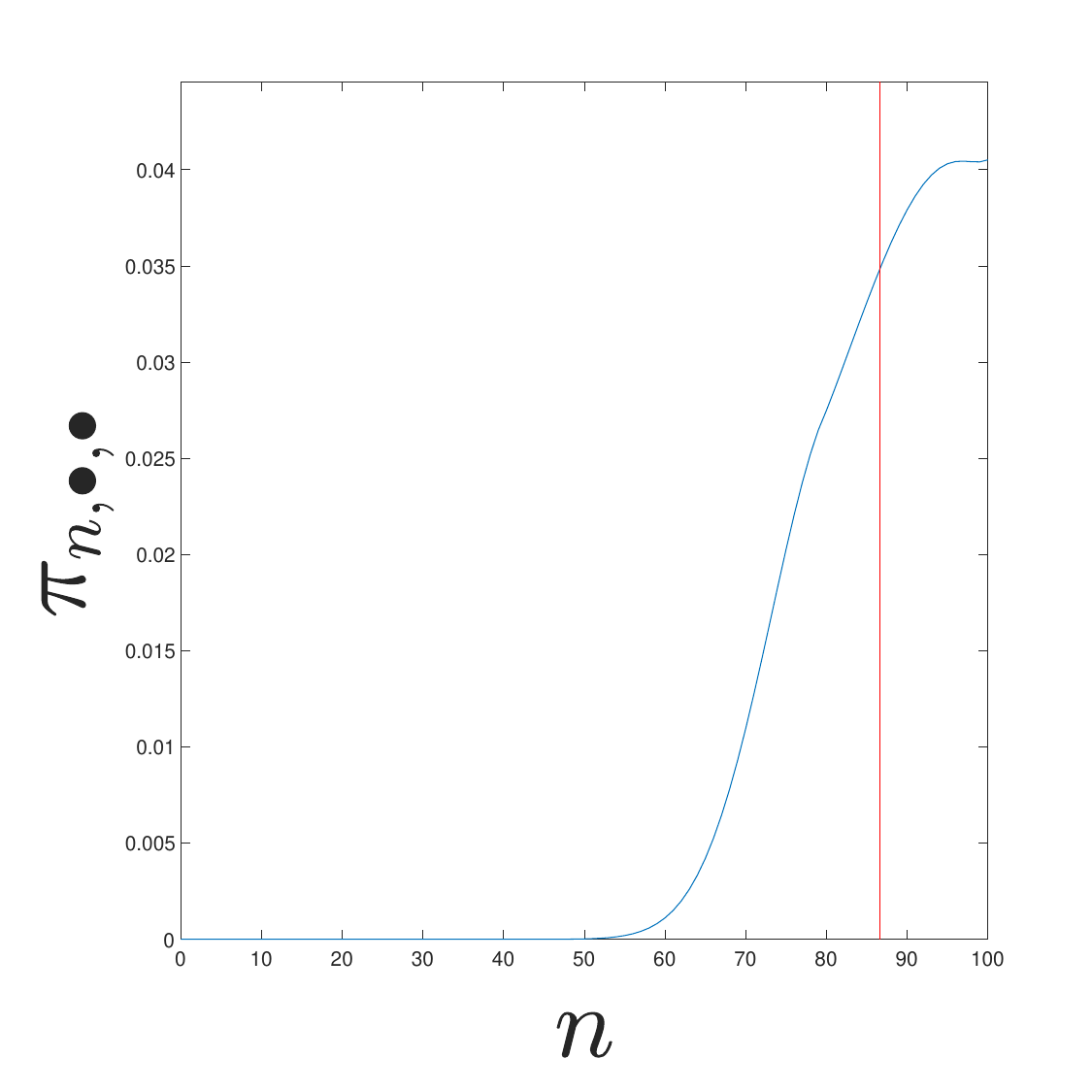}}
    \quad
	{\includegraphics[width=5.1 cm,height=5.1 cm]{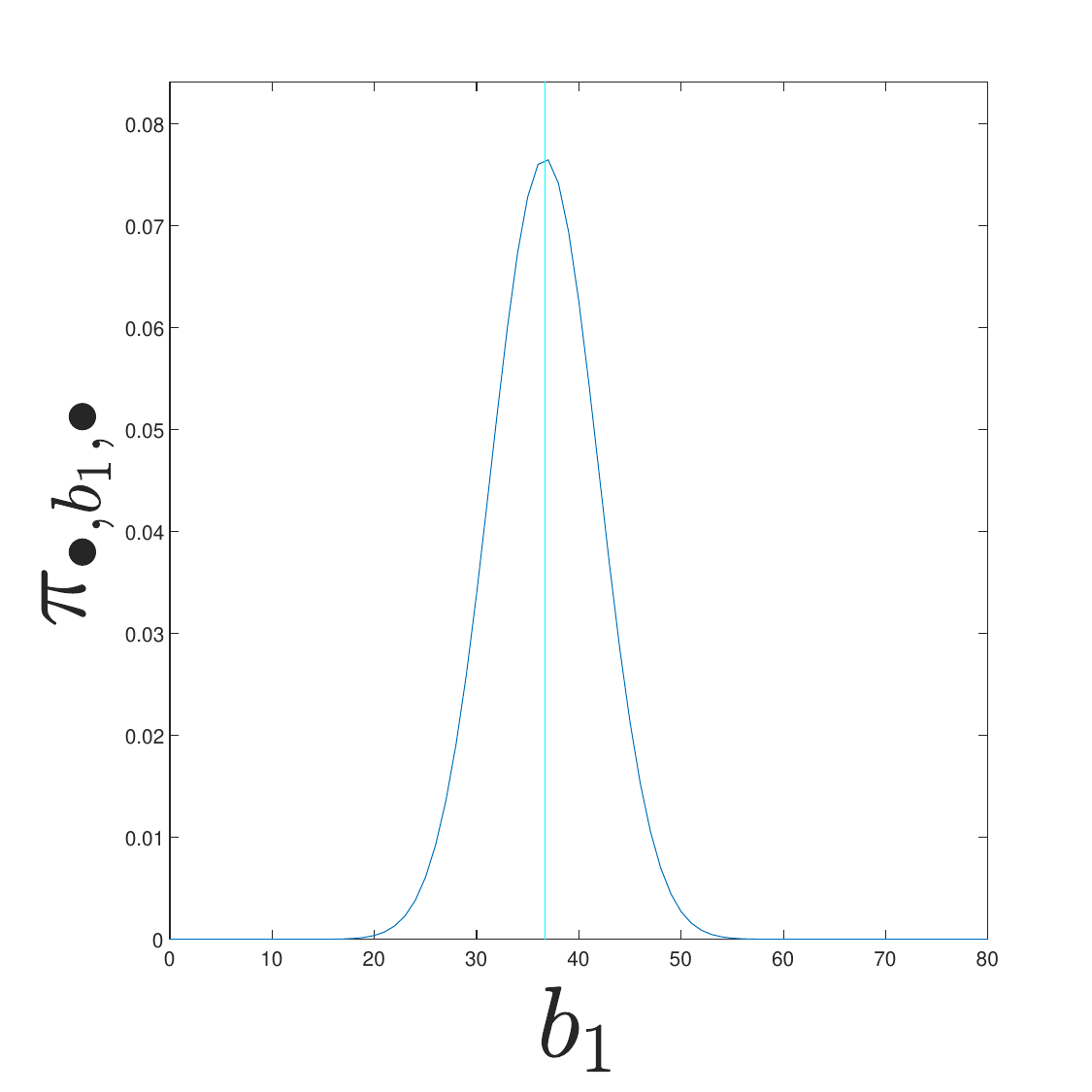}}   \quad
	{\includegraphics[width=5.1 cm,height=5.1 cm]{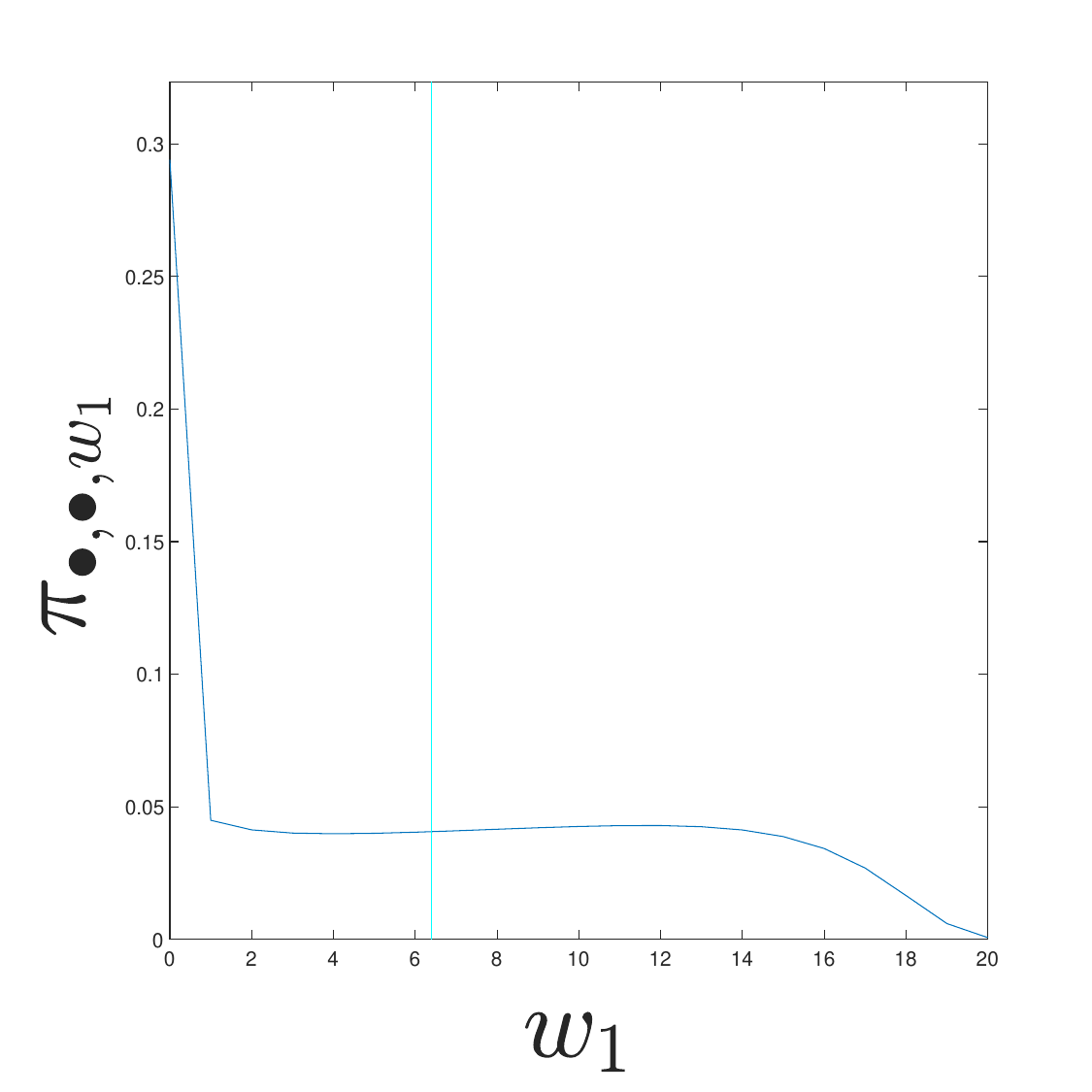}}
    
	\caption{Stationary distribution of the QBD model assuming $N=100$, $B=80$ and parameters described in Table~\ref{tab:QBDparameters}. We assume that Type~$2$ patients have priority ($r_1=0$, $r_2=1$).}
	\label{stationaryQBD_r1_0_r2_1}
\end{figure}

\begin{table}[H]
		\centering
        \setlength{\tabcolsep}{1pt}	
		\begin{tabular} {l@{\hspace{0.5cm}} l}
  \hline
   		\text{Meaning} &	\text{Formula}  \\
\hline
Percentage of time the system is busy &	$\pi_{(N,\bullet,\bullet)} = \sum_{b_1,w_1} \pi_{(N,b_1,w_1)}$ \\

Mean rate of patients to be redirected &	$\lambda_{redirect} = \lambda  \times \pi_{(N,\bullet,\bullet)}$ \\

Mean number of patients in the system &	$L=\sum_n n\pi_{(n,\bullet,\bullet)} $ \\

Mean number of Type~$1$ in the system &	$L_1=\sum_{n,b_1,w_1} (b_1+w_1)\pi_{(n,b_1,w_1)} $ \\

Mean number of Type~$2$ in the system &	$L_2=\sum_{n,b_1,w_1} (b_2+w_2)\pi_{(n,b_1,w_1)} $ \\

Mean number of complex inpatients &	$B_1= \sum_{b_1} b_1\pi_{(\bullet,b_1,\bullet)} $ \\

Mean number of complex patients  &	$W_1= \sum_{w_1} w_1\pi_{(\bullet,\bullet,w_1)} $ \\
in the waiting area  \\

Mean occupancy of the system &	$O_N=100(L/N)$ \\
   \hline
  \end{tabular}
		\caption{
Notations and definitions of the key metrics.}
\label{tab:keymetricsdefinition}
\end{table}

\begin{table}[h!]
\centering
\begin{tabular}{l|ccccccc} 
\toprule
Policy & $\pi_{(N,\bullet,\bullet)}$  & $\lambda_{redirect}$ & $L$ & $L_1$ & $B_1$ & $W_1$ & $O_N$ \\ 
\midrule
$r_1=1$, $r_2=0$    & $4.7353\times10^{-2}$    & $1.1223$  & $85.5883$    & $36.6165$   & $36.4046$  & $0.2119$ & $85.59\%$\\
$r_1=0.8$, $r_2=0.2$    & $4.7276\times10^{-2}$    & $1.1204$  & $85.5960$    & $36.6912$   & $36.4076$  & $0.2836$ & $85.60\%$\\
$r_1=0.7$, $r_2=0.3$    & $4.7212\times10^{-2}$    & $1.1189$  & $85.6027$    & $36.7516$   & $36.4100$  & $0.3416$ & $85.60\%$\\
$r_1=0.6$, $r_2=0.4$    & $4.7113\times10^{-2}$    & $1.1166$  & $85.6138$    & $36.8432$   & $36.4138$  & $0.4294$ & $85.61\%$\\

$r_1=0.5$, $r_2=0.5$    & $4.6942\times10^{-2}$    & $1.1125$  & $85.6342$    & $36.9984$   & $36.4203$   & $0.5781$ & $85.63\%$\\

$r_1=0$, $r_2=1$    & $4.0520\times10^{-2}$    & $0.9603$  & $86.5796$    & $43.0631$   & $36.6657$   & $6.3973$ & $86.58\%$\\
\bottomrule
\end{tabular}
\caption{The values of the key metrics given in Table~\ref{tab:keymetricsdefinition} corresponding to different priority policies as defined by the pair ($r_1,r_2$). Note that $L_2=L-L_1$, $B_2=B-B_1$, $W_2=L-B-W_1$.}
\label{tab:keymetricestimations}
\end{table}

\section{Conditional waiting times ($\beta_{2\to1} \geq 0$, $r_1=1$, $r_2=0$)}
\label{Conditional_waiting_times_beta_0_r1_1_r2_0}

Here, we present the analysis of the conditional waiting times for both Type~$1$ and Type~$2$ patients in the waiting area. We  evaluate their waiting time distributions under different modeling approaches. Specifically, we compare the results obtained from the QBD model and the CTMC model, under the assumptions $\beta_{2\to1} \geq 0$, $r_1=1$, $r_2=0$, to assess the accuracy of the CTMC-based approximation. Following this comparison, we further compute the conditional waiting time distributions for both patient types under a fixed type-change rate of $\beta_{2 \to 1} = 3$ and explore the impact of varying priority policies corresponding to the choices of $(r_1,r_2)$.

\subsection{Conditional waiting times of Type~$1$ patients ($\beta_{2\to1}=0$, $r_1=1$, $r_2=0$)}
\label{conditional_waittime_T1_beta0_r1_1_r2_0}

We compute the waiting time distributions for Type~$1$ patients, assuming $\beta_{2\to1}=0$, $r_1=1$, $r_2=0$ and a system capacity of $N = 100$ and $B = 80$. We note that the waiting times of Type~$1$ patients when $\beta_{2\to 1}>0$ will be the same, since Type~$2$ to Type~$1$ changes do not affect the waiting time of a Type~$1$ patient already in Queue~$1$. We compute the conditional waiting time distributions using the CTMC model described in section~\ref{sec:CTMCoriginal} and the QBD model defined in Section~\ref{sec:WT_Type1nochange}.

The conditional probability densities of waiting times of $5^{th}$ (top left), $10^{th}$ (top right), $15^{th}$ (bottom left), and $20^{th}$ (bottom right) Type~$1$ patients in the waiting area, assuming that there are $b_1=30$ and $b_2=50$ Type~$1$ and Type~$2$ patients in the beds, respectively, and $w_2=0$ Type~$2$ patients in the waiting area, at time $t=0$, are presented in Figure~\ref{wait_time_QBD_CTMC_51220_30_5_0_1}. We note that as the position of the Type~$1$ patient in the queue increases, the mean waiting time of such patients increases. This reflects the natural behavior of a queuing system where patients further back in the queue experience longer delays.

Furthermore, as the position of the Type~$1$ patient in the queue increases, the corresponding probability density function shifts to the right and becomes more symmetrical and bell-shaped. This increasing symmetry occurs because patients later in the queue must wait through many service completions, and the aggregation of these independent events naturally smooths out variability. For patients earlier in the queue (e.g., the $5^{th}$), the distribution is slightly right-skewed, indicating a higher probability of shorter waiting times but with a long tail representing occasional long waiting times. This skewness arises because, when only a few patients are ahead, the tagged patient's waiting time may strongly be affected by the possibility of a single long service time at the front of the queue. In contrast, for later patients (e.g., the $20^{th}$), the distribution becomes less skewed and more sharply peaked around the mean, reflecting a higher concentration of waiting times near the average. These evolving patterns in the distribution shape can provide valuable insights for resource allocation and priority policies in managing patient queues.

\begin{figure}[!htbp]
    \centering

    \mbox{\includegraphics[scale=0.4]{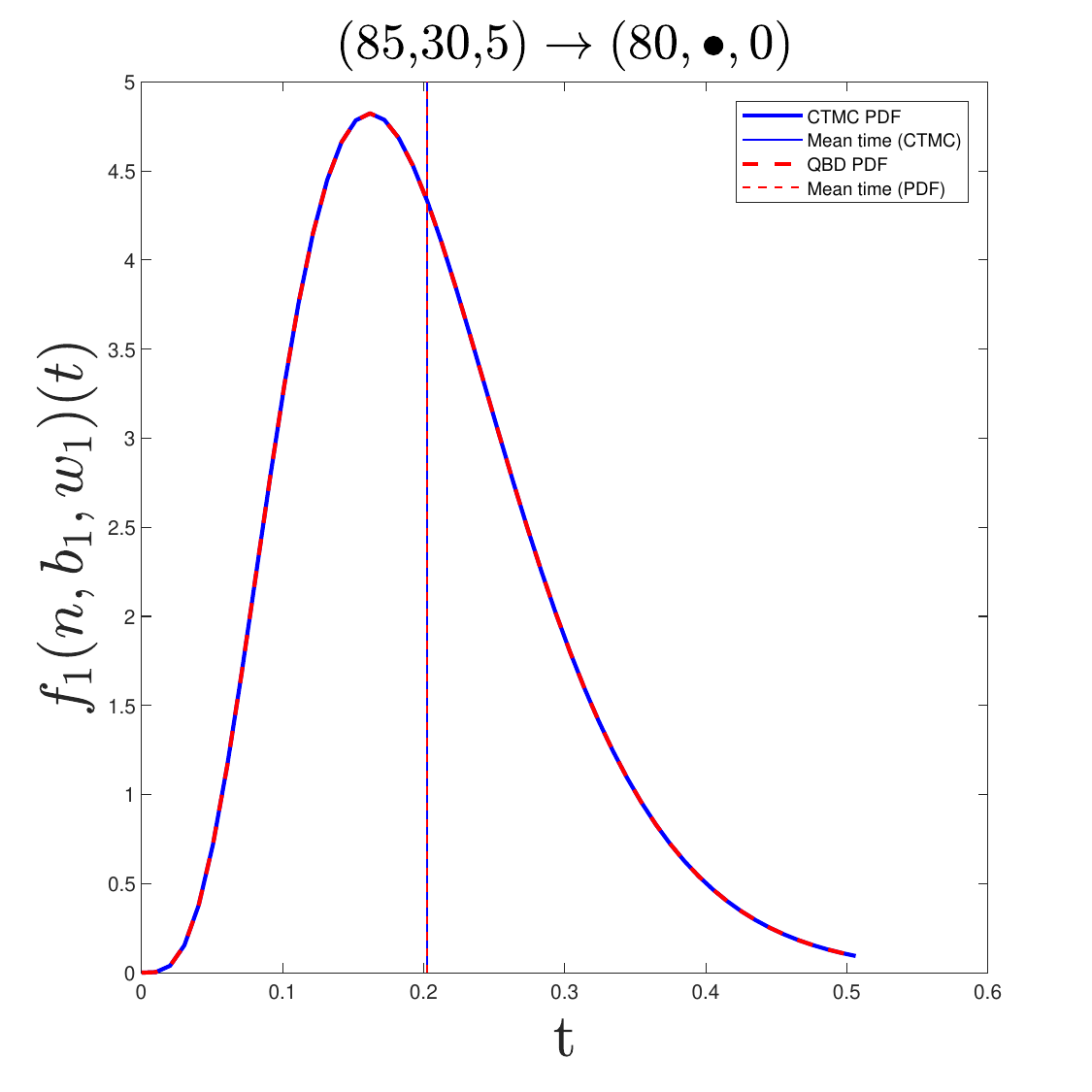}}
    \quad
    \mbox{\includegraphics[scale=0.4]{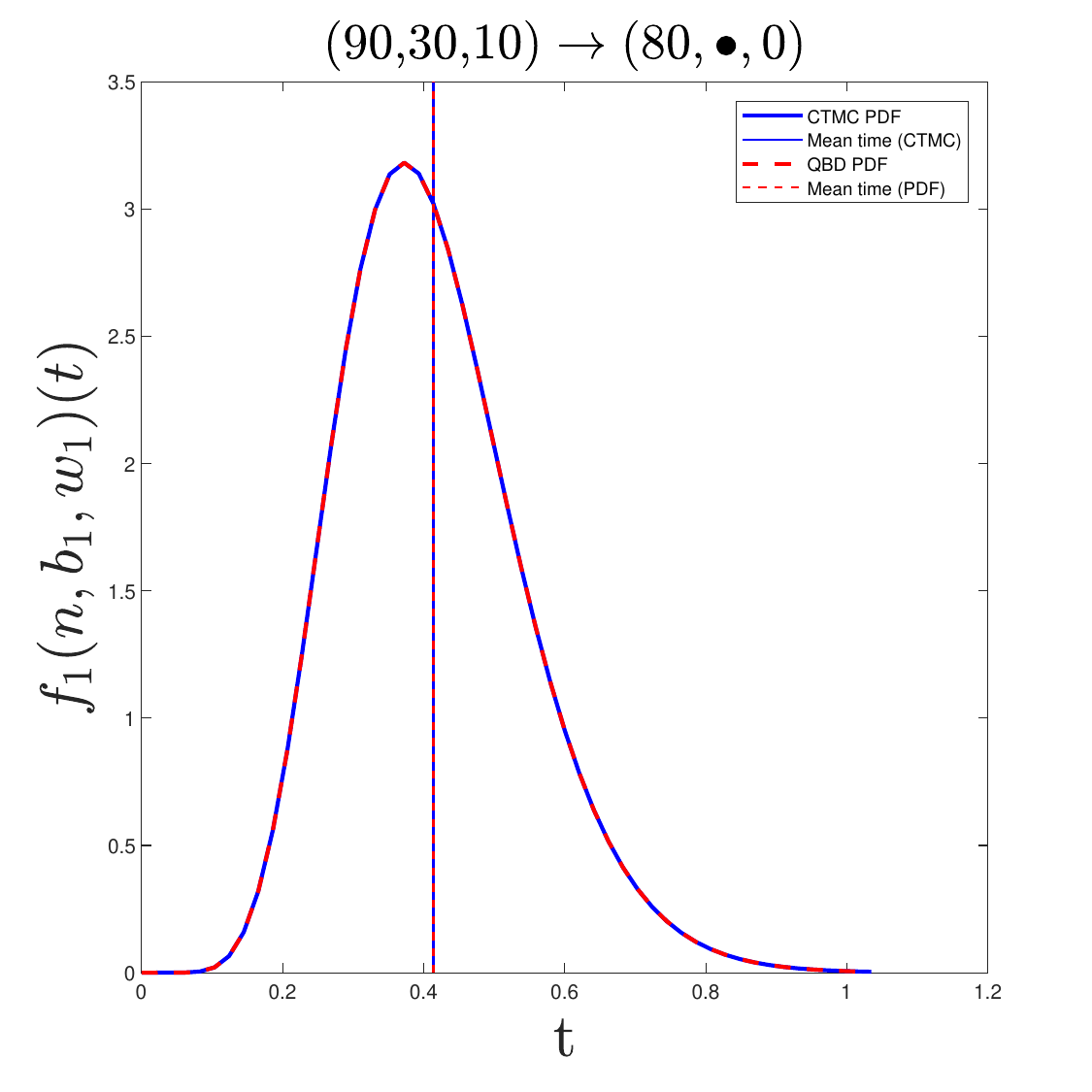}}

    \mbox{\includegraphics[scale=0.4]{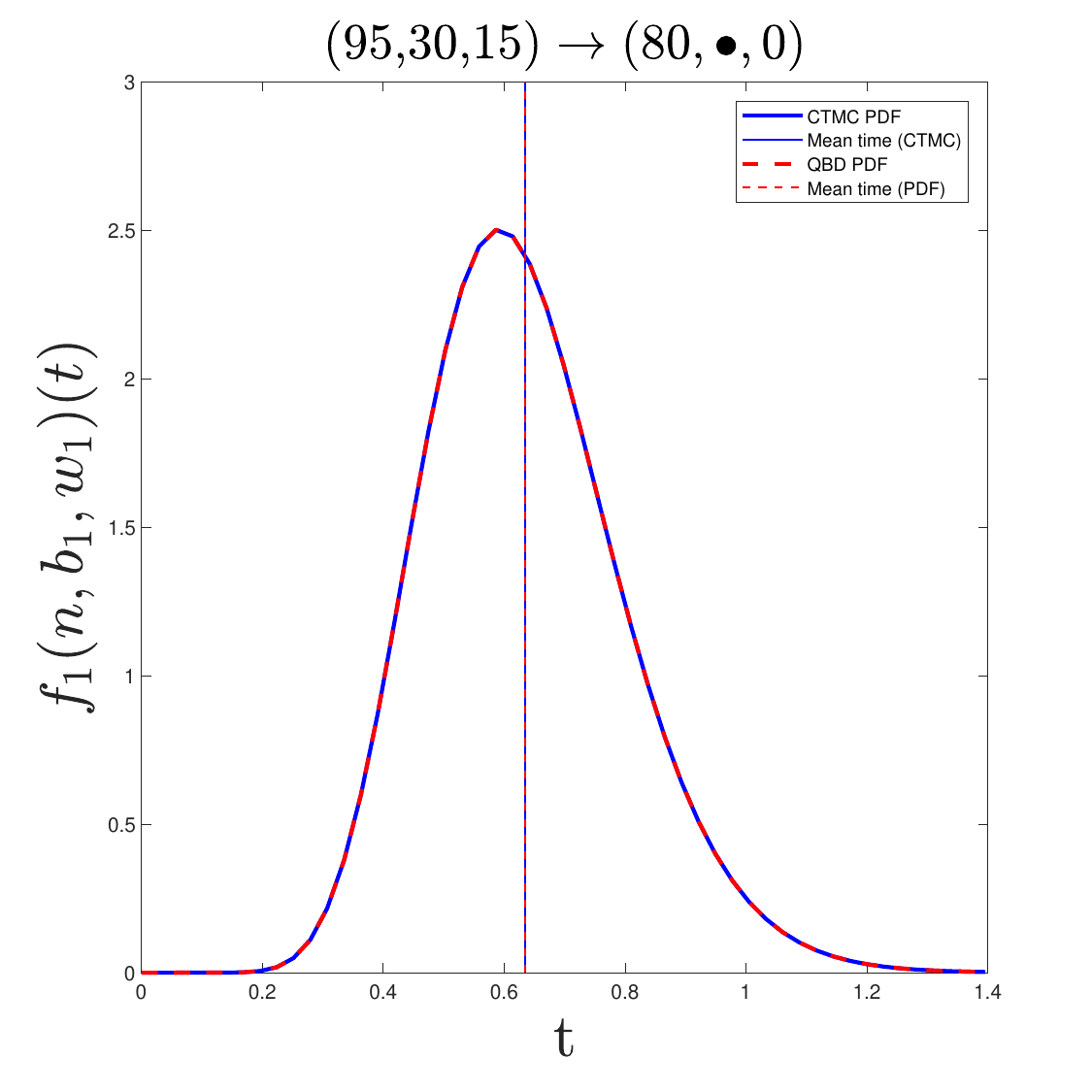}}
    \quad
    \mbox{\includegraphics[scale=0.4]{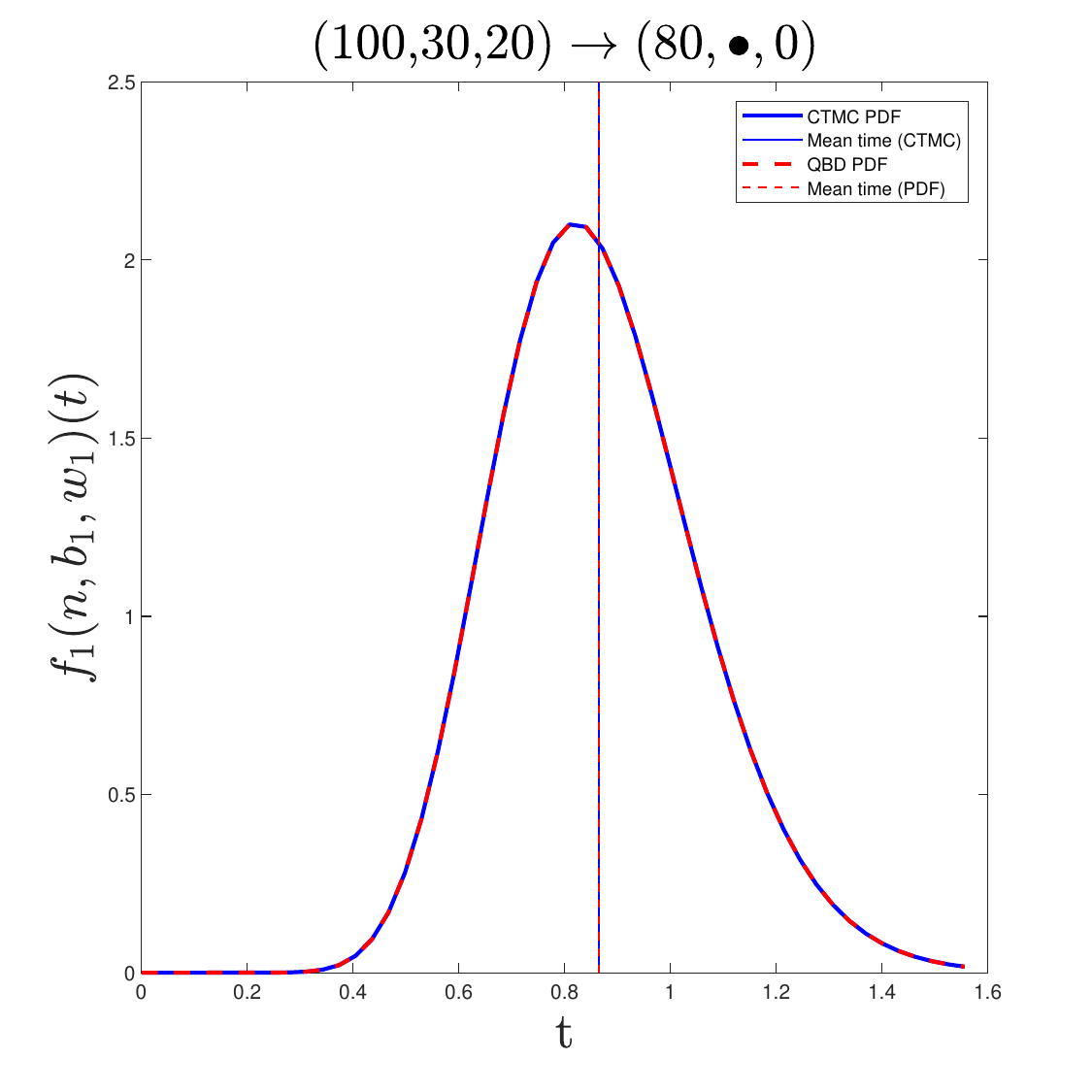}}

    \caption{Probability densities $f_1(n,b_1,w_1)(t)$, with $n=B+w_1$, $w_1=5,10,15,20$, of the waiting times of Type~$1$ patient from the moment they arrive in the system, and so they are $5^{th}$ (top left figure), $10^{th}$ (top right figure), $15^{th}$ (bottom left figure), and $20^{th}$ tagged patient at the end of Queue~$1$ in the waiting area, respectively, given $b_1=30$, $b_2=50$, $w_2=0$, and assuming parameters $\lambda_1=0$, $\lambda_2=0$, $\mu_1=0.1517$, $\mu_2=0.4113$ (see Table~\ref{tab:QBDparameters}), and $\beta_{2\to1}=0$, in a system of capacity $N=100$, $B=80$. Mean waiting times $ \mathbb{E}(T_1(n, b_1, w_1))$ for these Type~$1$ patients, evaluated using both the CTMC and the QBD model, are $0.2025$, $0.4139$, $0.6344$, and $0.8644$ days, respectively.}
	\label{wait_time_QBD_CTMC_51220_30_5_0_1}
\end{figure}

\subsection{Conditional waiting times of Type~$2$ patients ($\beta_{2\to1}=0$, $r_1=1$, $r_2=0$)} \label{conditional_waittime_T2_beta0_r1_1_r2_0}

Next, we give the conditional waiting time distributions for Type~$2$ patients, assuming $\beta_{2\to1}=0$, $r_1=1$, $r_2=0$ and a system capacity of $N = 100$ and $B = 80$. The probability densities of the waiting times of $5^{th}$ (top left), $10^{th}$ (top right), $15^{th}$ (bottom left), and $20^{th}$ (bottom right) Type~$2$ patients in the waiting area, assuming that there are $b_1=30$ and $b_2=50$ Type~$1$ and Type~$2$ patients in the beds, respectively, and $w_1=2$ Type~$1$ patients in the waiting area, at time $t=0$, are given in Figure~\ref{Conditional_QBD_CTMC_T2_beta_0}, under both the QBD and the CTMC models. We note that as the position of the Type~$2$ patient in the queue increases, the mean waiting time for the Type~$2$ patient increases, reflecting the natural behavior of a queuing system.

As the position of the Type~$2$ patient increases in the queue, the probability density function of their waiting time gradually shifts rightward and adopts a more symmetrical, bell-shaped form. This suggests a stabilisation in waiting time variability for patients down the queue, resulting in more consistent and predictable delays. For patients positioned earlier in the queue (e.g., the $2^{nd}$), the distribution exhibits slight right skewness, indicating a higher probability of shorter waits, yet with a long tail that captures occasional delays. This heavier tail arises because, when only a few patients are ahead, the tagged patient’s wait is strongly influenced by the potentially long service time of a single individual at the front of the queue. For those later in the queue (e.g., the $20^{th}$), the distribution becomes more concentrated around the mean, with reduced skewness and a sharper peak, reflecting a tighter clustering of waiting times near the average. This tightening occurs because patients later in the queue must wait through many independent service completions, and the accumulation of these independent events smooths out variability, producing a more regular and symmetric distribution.

\begin{figure}[!htbp]
    \centering

    \mbox{\includegraphics[scale=0.4]{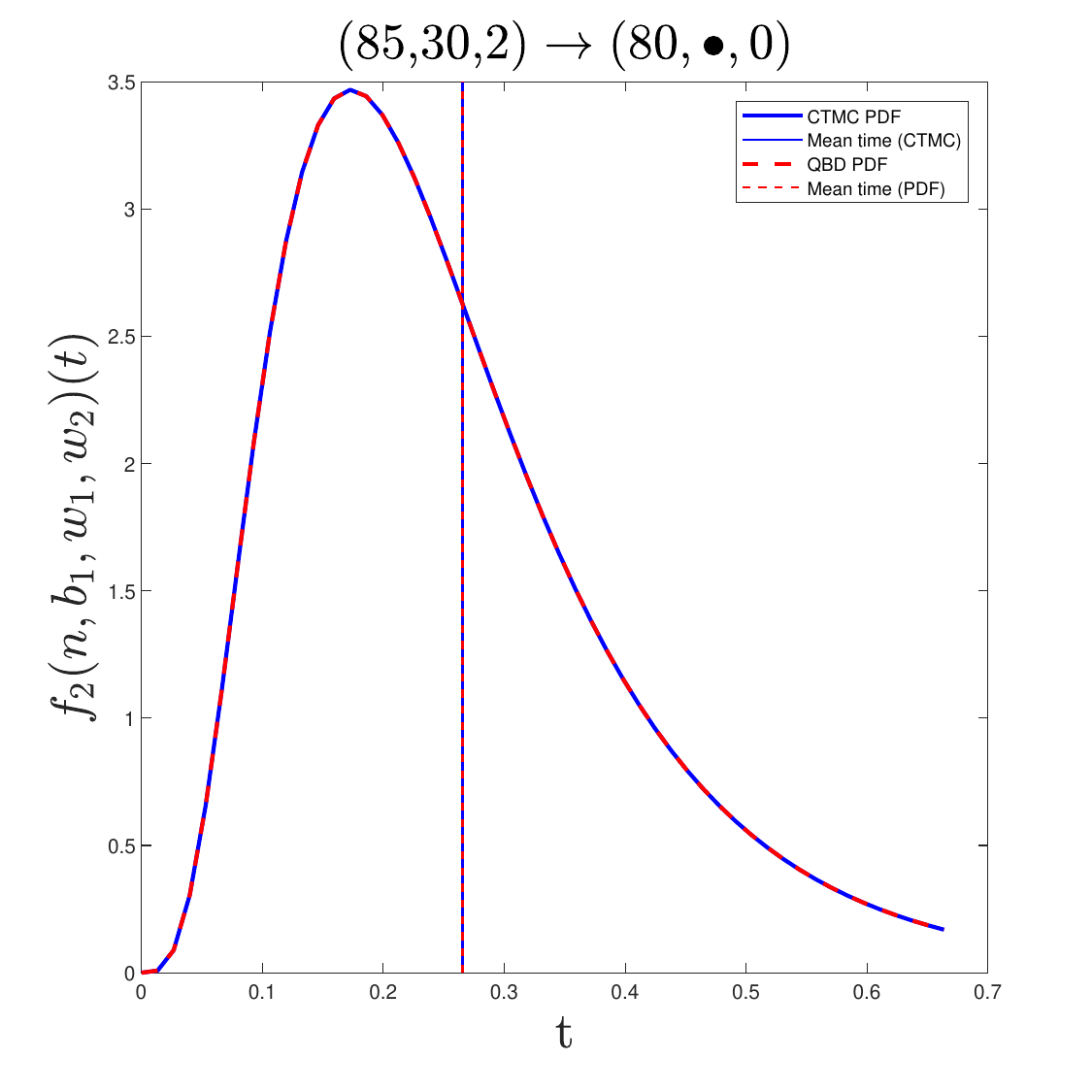}}
    \quad
    \mbox{\includegraphics[scale=0.4]{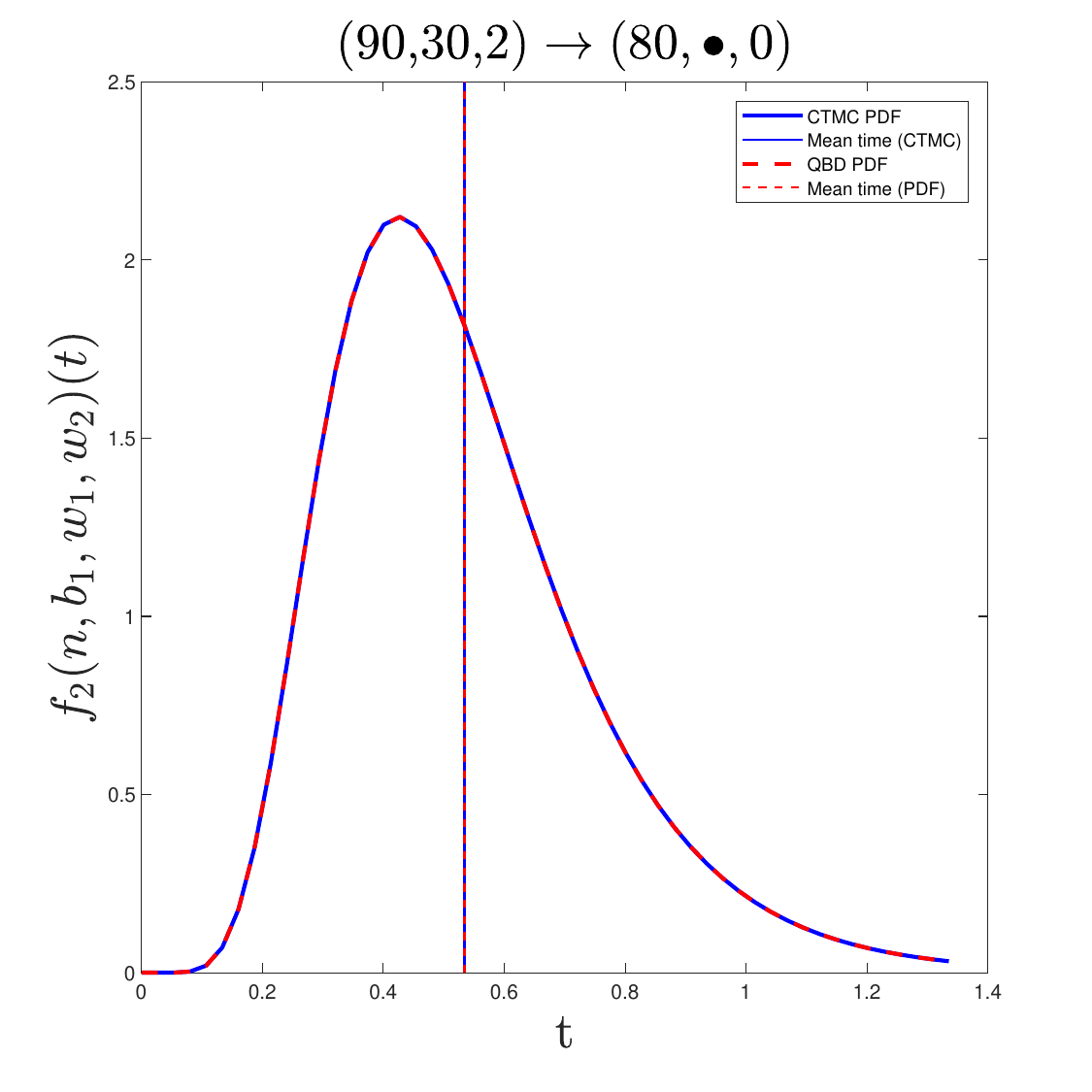}}

    \mbox{\includegraphics[scale=0.4]{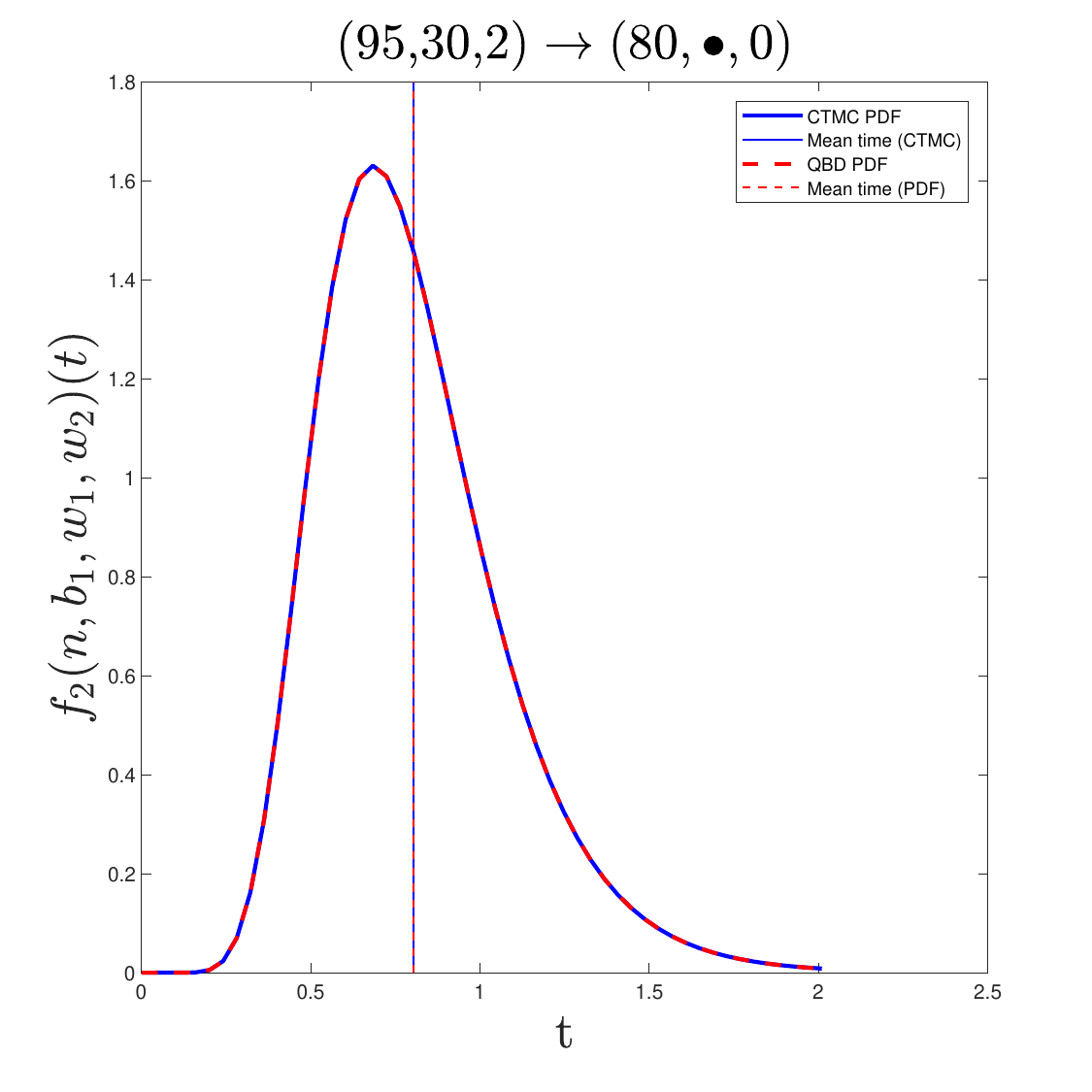}}
    \quad
    \mbox{\includegraphics[scale=0.4]{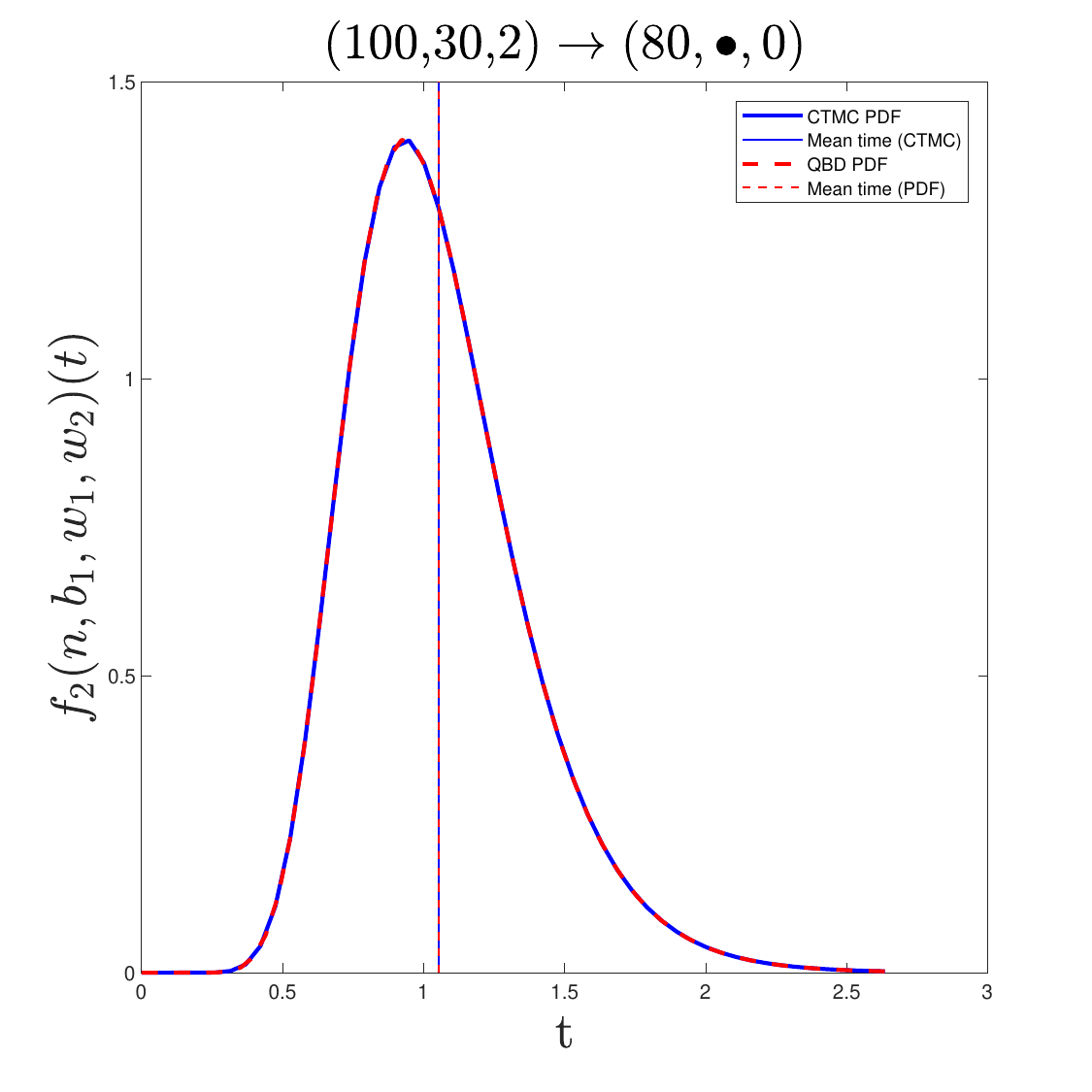}}

    \caption{Probability densities $f_2(n,b_1,w_1,w_2)(t)$, with $n=B+w_1+w_2$, $w_1=2$, $w_2=3,8,13,18$, of the waiting times of Type~$2$ patient from the moment they arrive in the system, and so they are $5^{th}$ (top left figure), $10^{th}$ (top right figure), $15^{th}$ (bottom left figure), and $20^{th}$ tagged patient in the waiting area, respectively, given $b_1=30$, $b_2=50$, and assuming parameters $\lambda_1=5.7961$, $\lambda_2=0$, $\mu_1=0.1517$, $\mu_2=0.4113$ (see Table~\ref{tab:QBDparameters}), and $\beta_{2\to1}=0$, in a system of capacity $N=100$, $B=80$. Mean waiting times $ \mathbb{E}(T_1(n, b_1, w_1, w_2))$ for these Type~$2$ patients, evaluated using both CTMC and QBD model, are $0.2655$, $0.5342$, $0.8038$, and $1.0541$ days, respectively.}
	\label{Conditional_QBD_CTMC_T2_beta_0}
\end{figure}

\subsection{Comparative analysis of QBD and CTMC models}

In Sections~\ref{conditional_waittime_T1_beta0_r1_1_r2_0}~$\&$~\ref{conditional_waittime_T2_beta0_r1_1_r2_0}, we presented the conditional waiting time distributions for Type~$1$ and Type~$2$ patients, assuming $\beta_{2\to1}=0$, $r_1=1$, $r_2=0$, under both the QBD model and the CTMC model. The results, illustrated in Figures~\ref{wait_time_QBD_CTMC_51220_30_5_0_1}~$\&$~\ref{Conditional_QBD_CTMC_T2_beta_0}, demonstrate a remarkable agreement between the two modeling approaches. This shows that the CTMC model accurately approximates the QBD models. Also, the mean waiting times computed from both models are similar for each patient type and their position in the queue. These results validate the CTMC model as a reliable alternative to the QBD model for estimating conditional waiting times.

\subsection{Conditional waiting times of Type~$2$ patients ($\beta_{2\to1}=3$, $r_1=1$, $r_2=0$)}

We now compute the conditional waiting time distributions for Type~$2$ patients using the CTMC model, assuming $\beta_{2 \to 1}=3$, $r_1=1$, $r_2 = 0$, $N=100$, and $B=80$. We note that under these conditions, the waiting time distributions for Type~$1$ patients remain identical to those presented in Figure~\ref{wait_time_QBD_CTMC_51220_30_5_0_1}. Figure~\ref{Conditional_QBD_CTMC_T2_beta_3} shows the probability density functions of the waiting times for the $5^{th}$ (top left), $10^{th}$ (top right), $15^{th}$ (bottom left), and $20^{th}$ (bottom right) Type~$2$ patients in the waiting area. These results are obtained under the assumption that, at time $t = 0$, there are $b_1 = 30$ and $b_2 = 50$ Type~$1$ and Type~$2$ patients occupying beds, respectively, and $w_1 = 2$ Type~$1$ patients in the waiting area.

Further, in comparison with Figure~\ref{Conditional_QBD_CTMC_T2_beta_0}, the mean waiting times for Type~$2$ patients are slightly higher when they are closer to the head of Queue~$2$, which makes sense as Type~$2$ patients behind them in the queue are likely to change from Type~$2$ to Type~$1$ and move ahead of them in the queue. On the other hand, the mean waiting times for Type~$2$ patients are reduced when they are closer to the end of Queue~$2$. This is because these patients would be likely to change from Type~$2$ to Type~$1$, and effectively bypass those ahead of them in Queue~$2$, gaining priority access to a bed, before they are allocated to a bed.

\begin{figure}[!htbp]
    \centering

    \mbox{\includegraphics[scale=0.4]{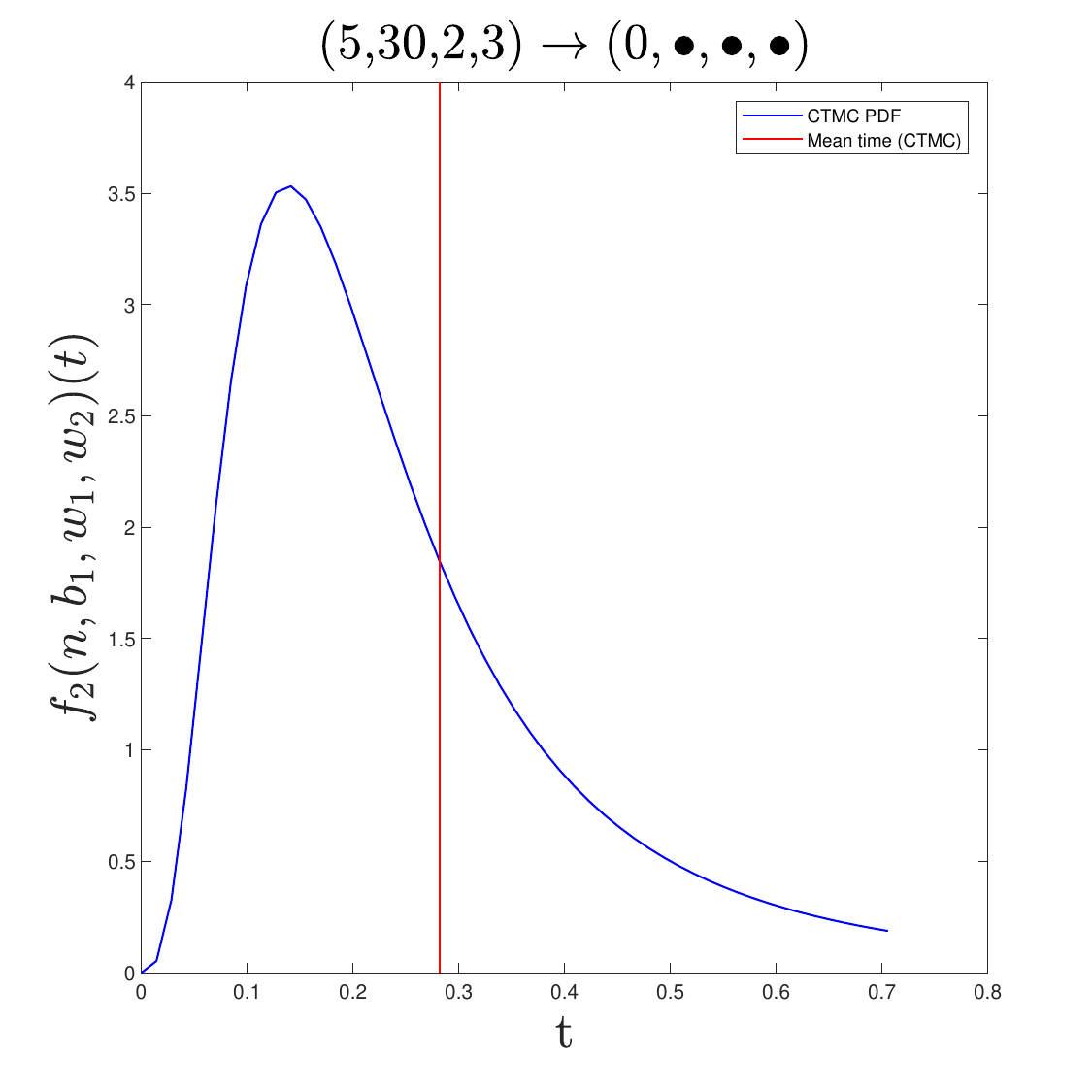}}
    \quad
    \mbox{\includegraphics[scale=0.4]{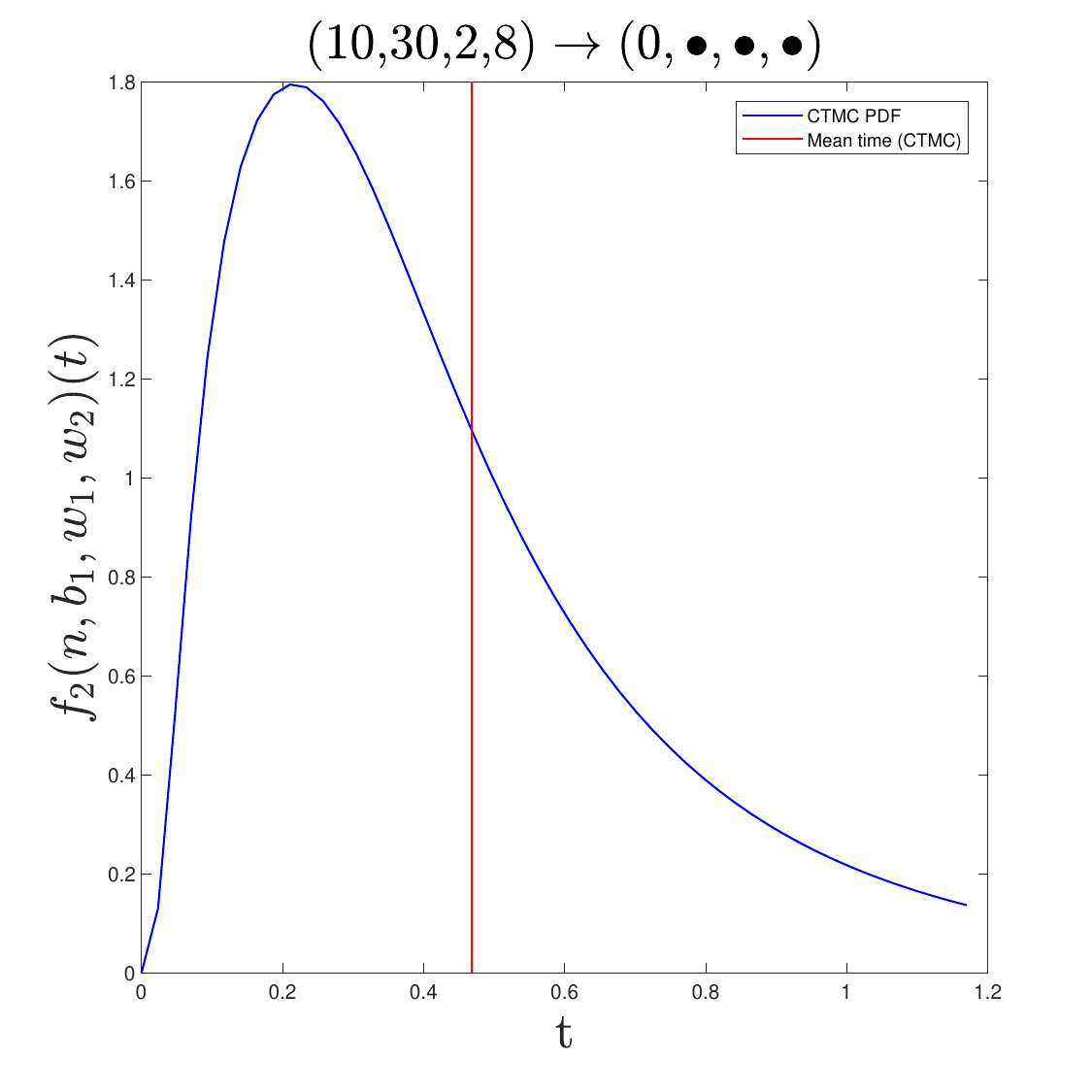}}

    \mbox{\includegraphics[scale=0.4]{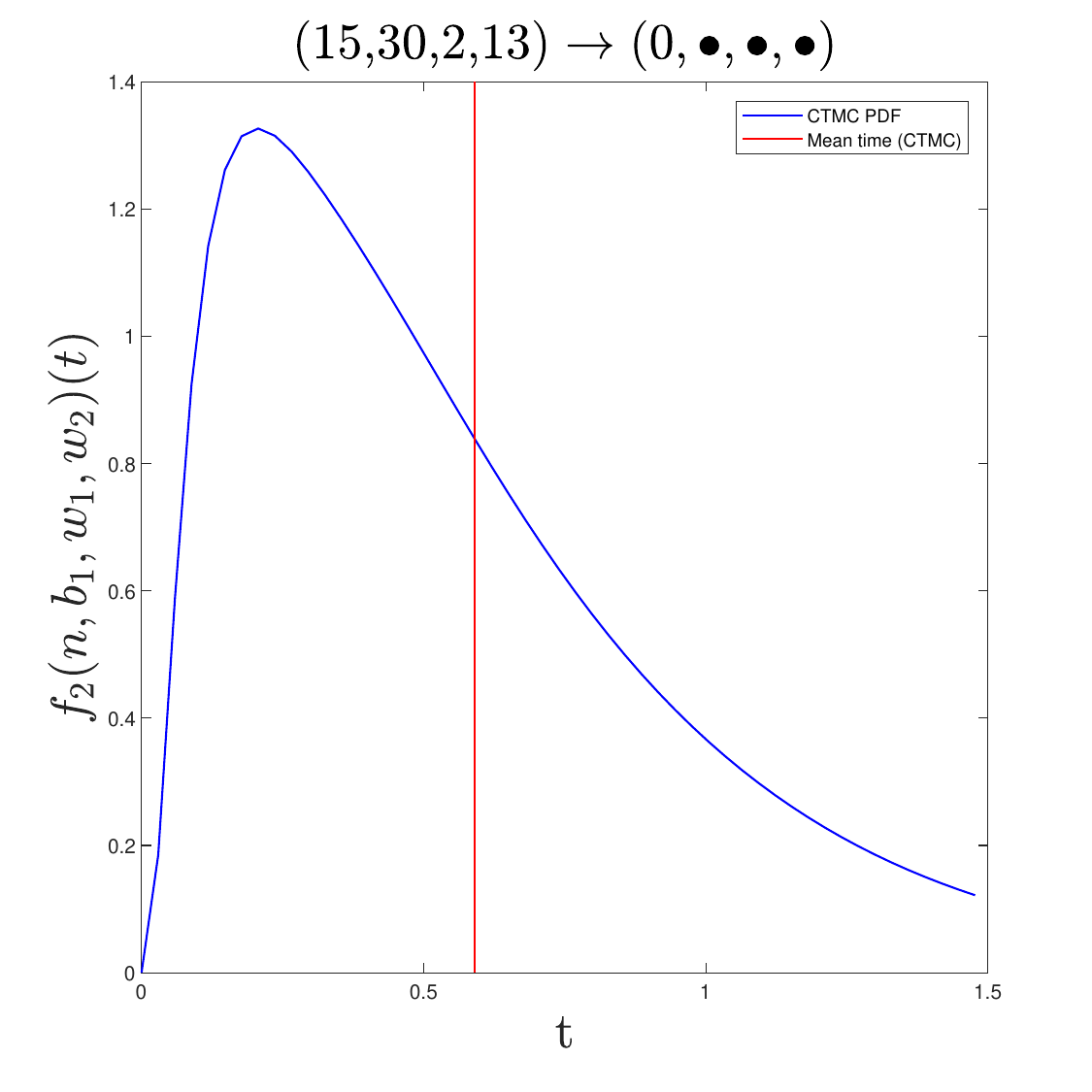}}
    \quad
    \mbox{\includegraphics[scale=0.4]{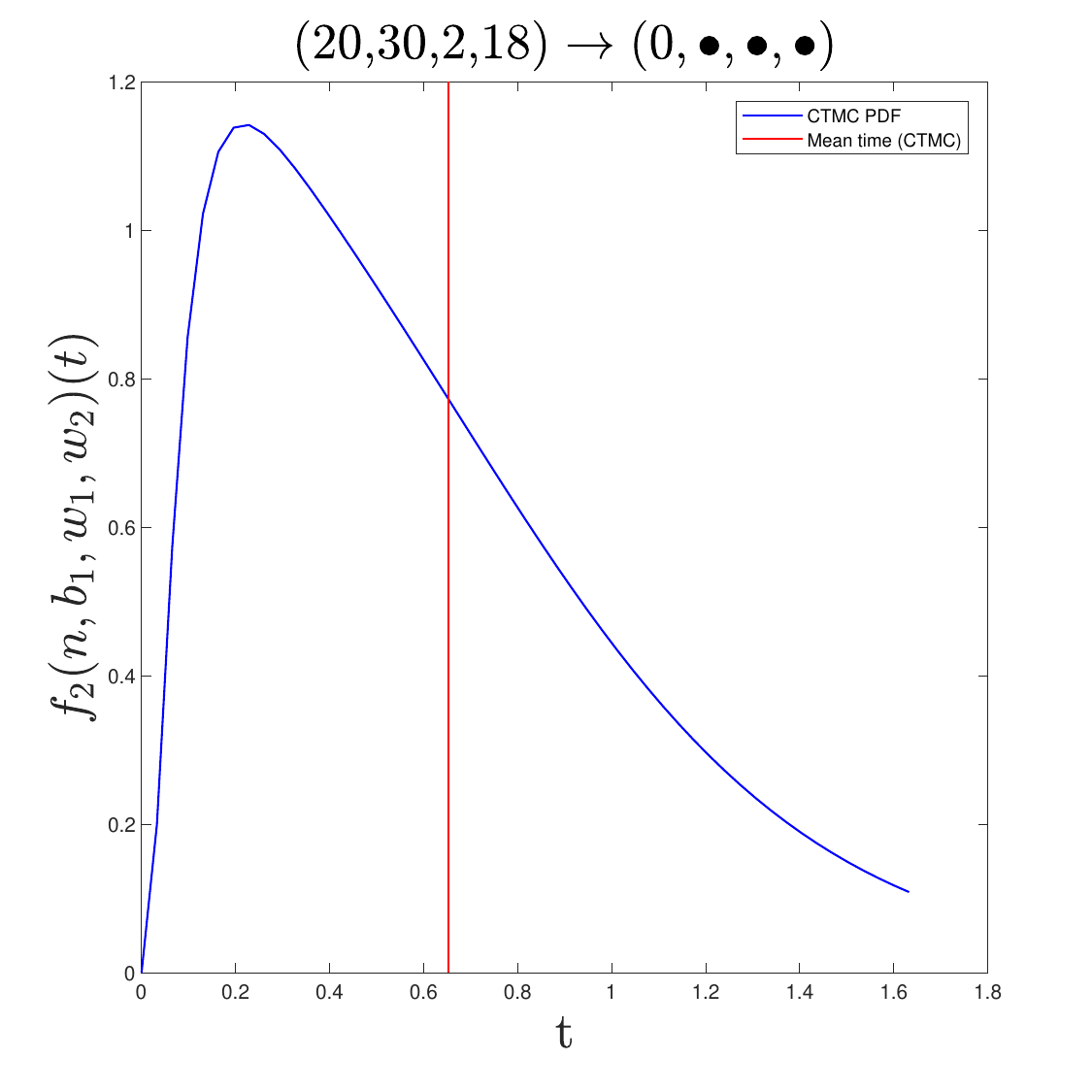}}

    \caption{Probability densities $f_2(n,b_1,w_1,w_2)(t)$, with $n=B+w_1+w_2$, $w_1=2$, $w_2=3,8,13,18$, of the waiting times of Type~$2$ patient from the moment they arrive in the system, and so they are $5^{th}$ (top left figure), $10^{th}$ (top right figure), $15^{th}$ (bottom left figure), and $20^{th}$ tagged patient in the waiting area, respectively, given $b_1=30$, $b_2=50$, and assuming parameters $\lambda_1=5.7961$, $\lambda_2=17.9039$, $\mu_1=0.1517$, $\mu_2=0.4113$ (see Table~\ref{tab:QBDparameters}), and $\beta_{2\to1}=3$, in a system of capacity $N=100$, $B=80$. Mean waiting times $ \mathbb{E}(T_2(n, b_1, w_1, w_2))$ for these Type~$2$ patients, evaluated using both CTMC and QBD model, are $0.2822$, $0.4681$, $0.5909$, and $0.6530$ days, respectively.}
	\label{Conditional_QBD_CTMC_T2_beta_3}
\end{figure}

\subsection{Conditional waiting times of Type~$1$ and Type~$2$ patients in a congested system ($\beta_{2\to1}=3$, $r_1, r_2 \in [0,1]$)}

We evaluate the conditional waiting time distributions of a Type~$1$ patient who is at the $20^{th}$ position in the waiting area with no Type~$2$ patients present, and for a Type~$2$ patient who is at the $20^{th}$ position in the waiting area with two Type~$1$ patients in front of them. We evaluate the distributions under priority policies $(r_1, r_2) = (0.8, 0.2)$, $(0.6, 0.4)$, and $(0.5, 0.5)$ using the CTMC with $\beta_{2\to1}=3$, and analyse the impact of the priority policies on the waiting times. The analysis we present here is useful for hospital administrators to understand how policy decisions such as adjusting priority weights can influence patient flow and resource allocation.

The outputs are shown in Figures~\ref{conditional_densities_T1_diff_r1_r2} and~\ref{conditional_densities_T2_diff_r1_r2}. We recall that the Figure~\ref{wait_time_QBD_CTMC_51220_30_5_0_1} (bottom right) shows the waiting time distribution of the $20^{th}$ Type~$1$ patient under the priority policy $(r_1, r_2) = (1, 0)$. We note that decreasing the priority for Type~$1$ patients increases the mean waiting times, which is consistent with the system's natural behaviour. The pdfs remain symmetrical because of the aggregation of many independent service completions before the $20^{th}$ Type~$1$ patient goes to a bed. The observed flattening of the density curve as $r_1$ decreases reflects greater competition for beds (more instances in which Queue~$2$ is favoured).

Moreover, as the priority of Type~$1$ patients decreases, the mean waiting time of the tagged Type~$2$ patient ($20^{th}$) increases. This is because the mean waiting time of the tagged Type~$2$ patient is greater than the mean time ($8$ hours) until a Type~$2$ patient changes to Type~$1$, and so it is likely that the tagged Type~$2$ patient changes to Type~$1$ before getting to a bed. As a result, the priority of the tagged Type~$2$ patient ultimately decreases with $r_1$ and so the mean waiting time increases.

We also note that the mean waiting time of the tagged $20^{th}$ Type~$2$ patient remains below that of the $20^{th}$ Type~$1$ patient (Fig.~\ref{conditional_densities_T1_diff_r1_r2}). This is because a Type~$2$ patient always has the possibility of changing type and gaining priority, providing a potential shortcut to service. The longer right tails observed for higher $r_2$ reflect situations in which many patients behind the tagged individual change to Type~$1$ early, repeatedly overtaking the tagged patient and prolonging their wait.

\begin{figure}[!htbp]
	\centering
	{\includegraphics[width=5.1 cm,height=5.1 cm]{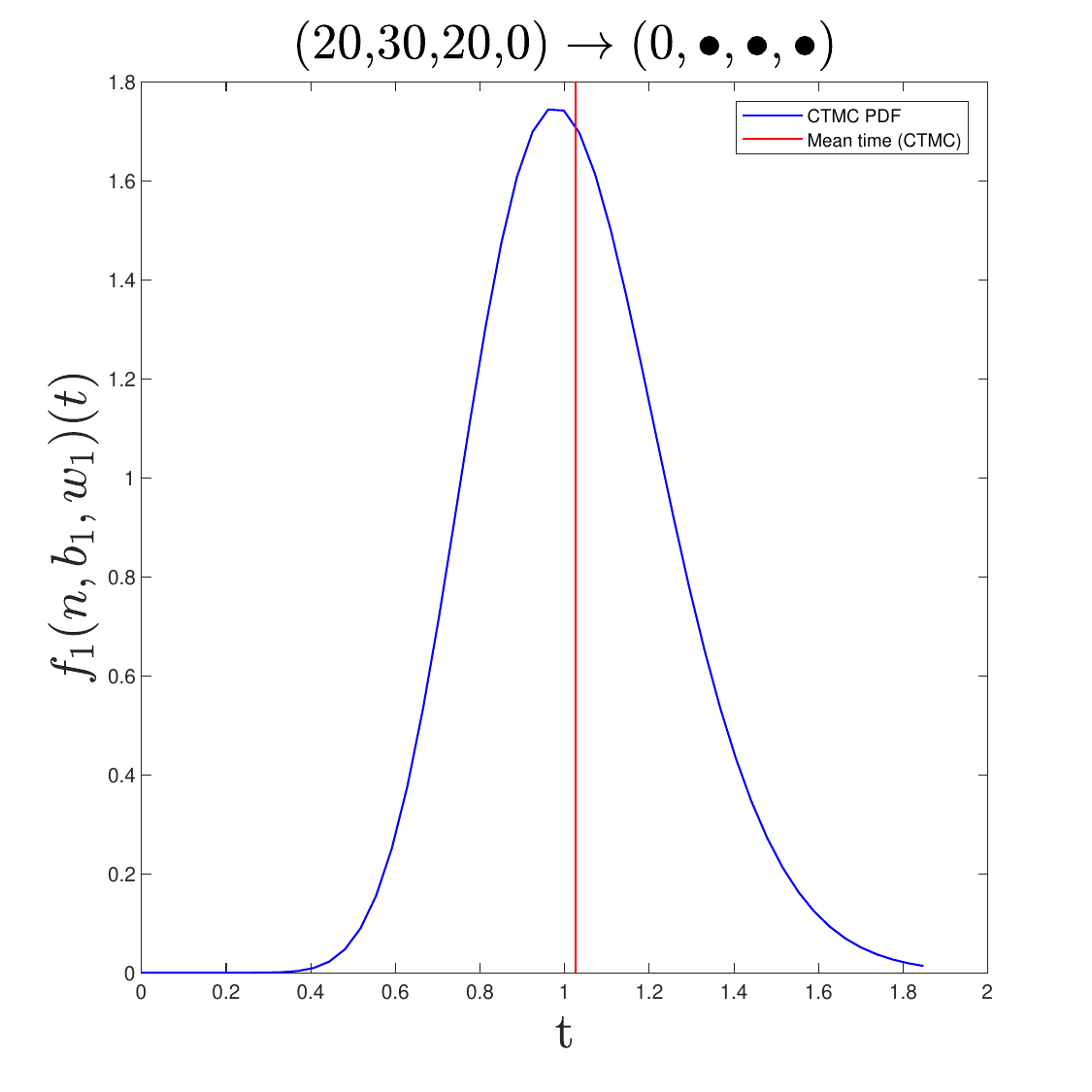}}
    \quad
	{\includegraphics[width=5.1 cm,height=5.1 cm]{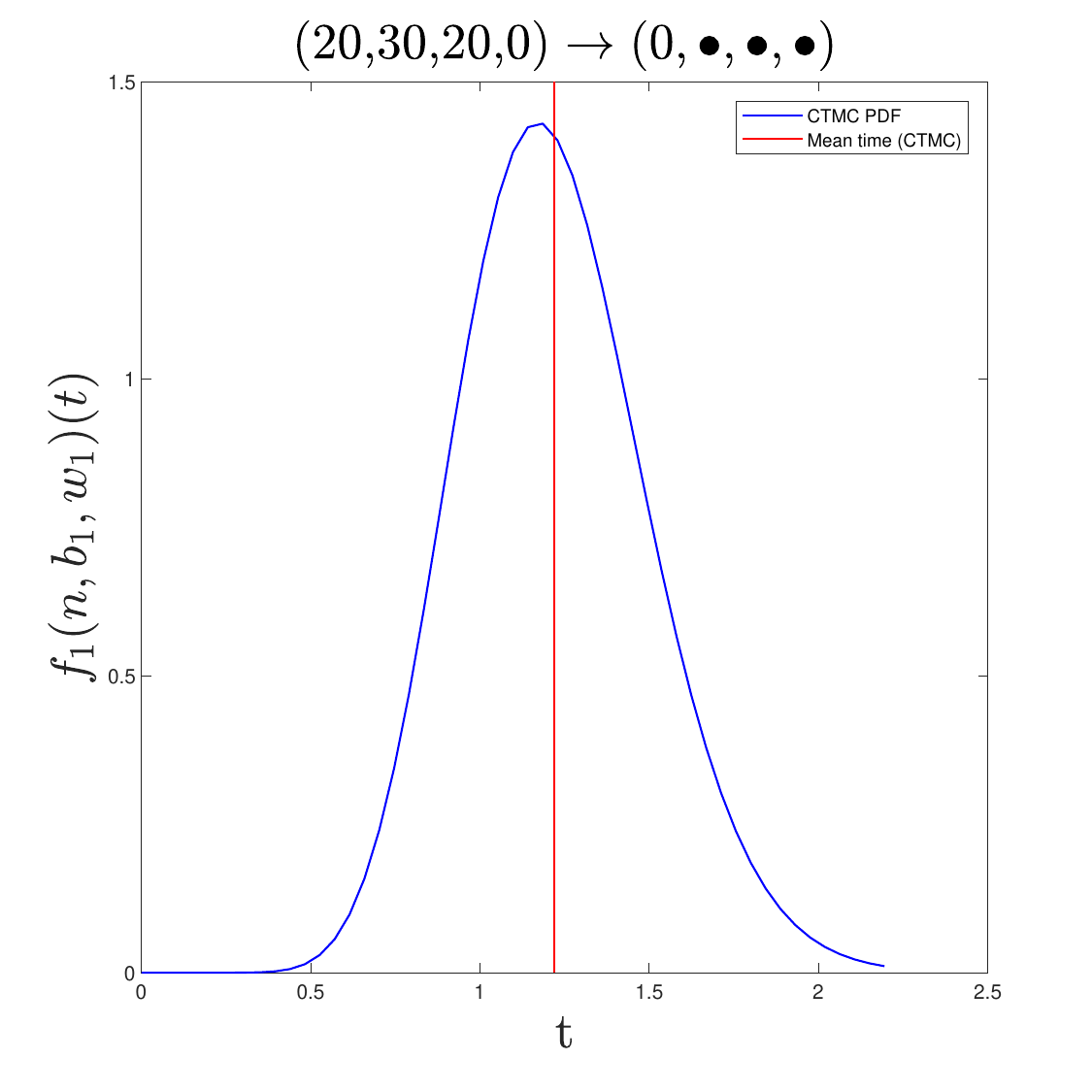}}    
    \quad
	{\includegraphics[width=5.1 cm,height=5.1 cm]{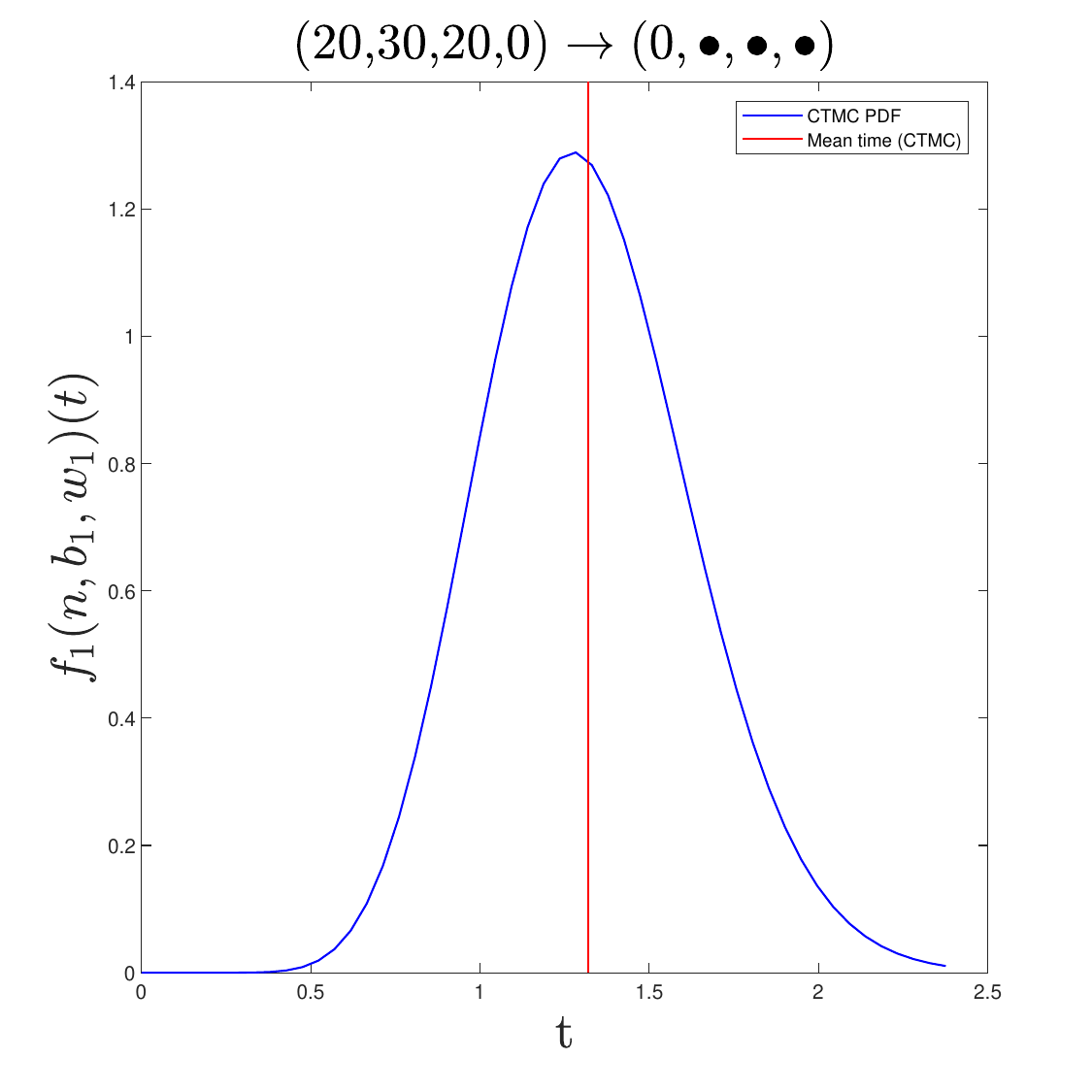}}    
	\caption{Probability densities $f_1(n, b_1, w_1)(t)$, with $n = B + w_1$, $b_1 = 30$, $w_1=20$, and $w_2 = 0$, of the waiting times for a Type~$1$ patient arriving in a congested system with $N = 100$, $B = 80$, so that they are the $20^{th}$ tagged Type~$1$ patient in the waiting area. These densities are computed assuming the parameters $\lambda_1 = 5.7961$, $\lambda_2 = 17.9039$, $\mu_1 = 0.1517$, $\mu_2 = 0.4113$, and $\beta_{2 \to 1} = 3$ (see Table~\ref{tab:QBDparameters}). The left, middle, and right figures show the density curves under priority policies $(r_1, r_2) = (0.8, 0.2)$, $(0.6, 0.4)$, and $(0.5, 0.5)$, respectively. Corresponding mean waiting times $\mathbb{E}(T_1(n, b_1, w_1))$, evaluated using the CTMC model, are $1.0267$, $1.2192$, and $1.3198$ days, respectively.}	\label{conditional_densities_T1_diff_r1_r2}
\end{figure}

\begin{figure}[!htbp]
	\centering
	{\includegraphics[width=5.1 cm,height=5.1 cm]{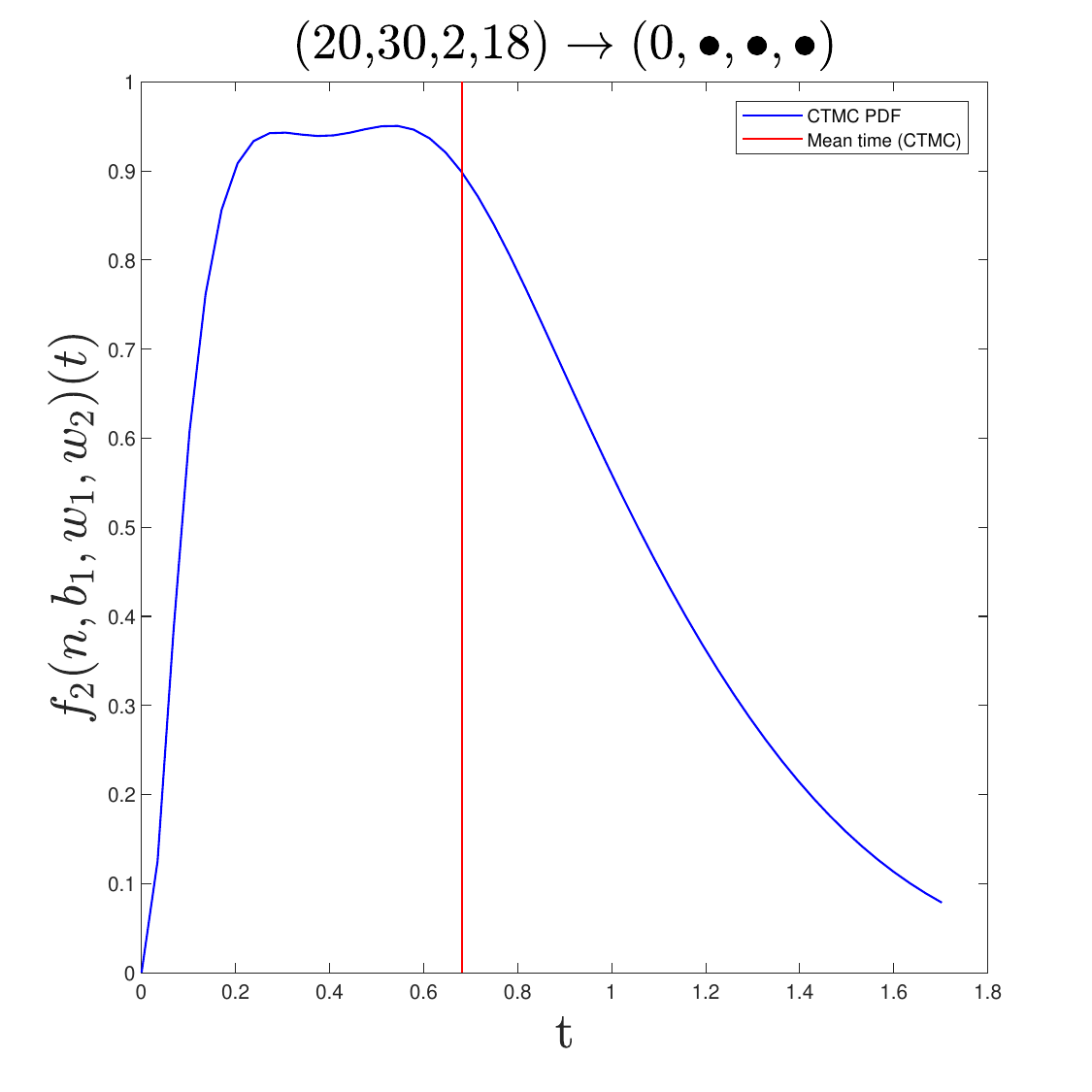}}
    \quad
	{\includegraphics[width=5.1 cm,height=5.1 cm]{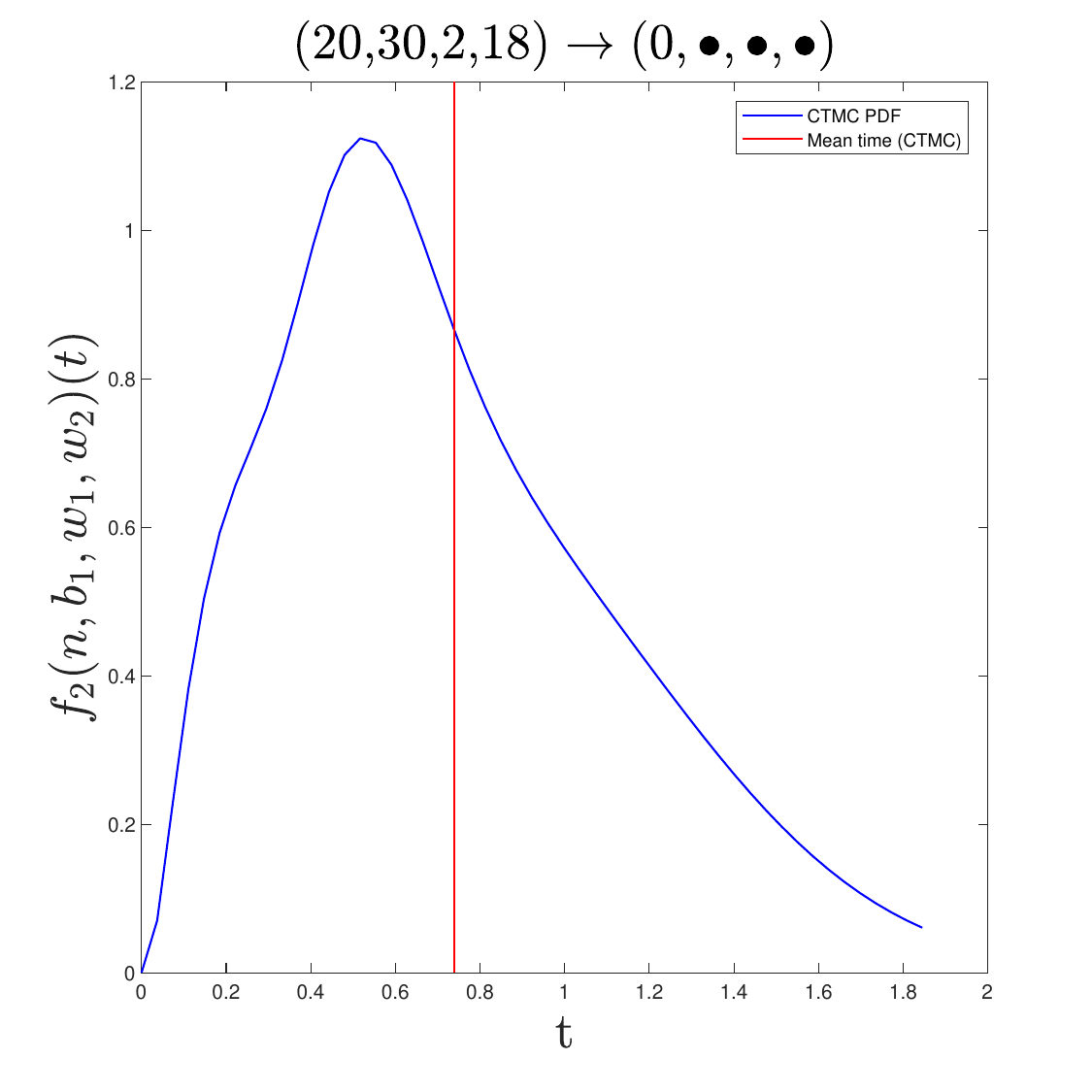}}    
    \quad
	{\includegraphics[width=5.1 cm,height=5.1 cm]{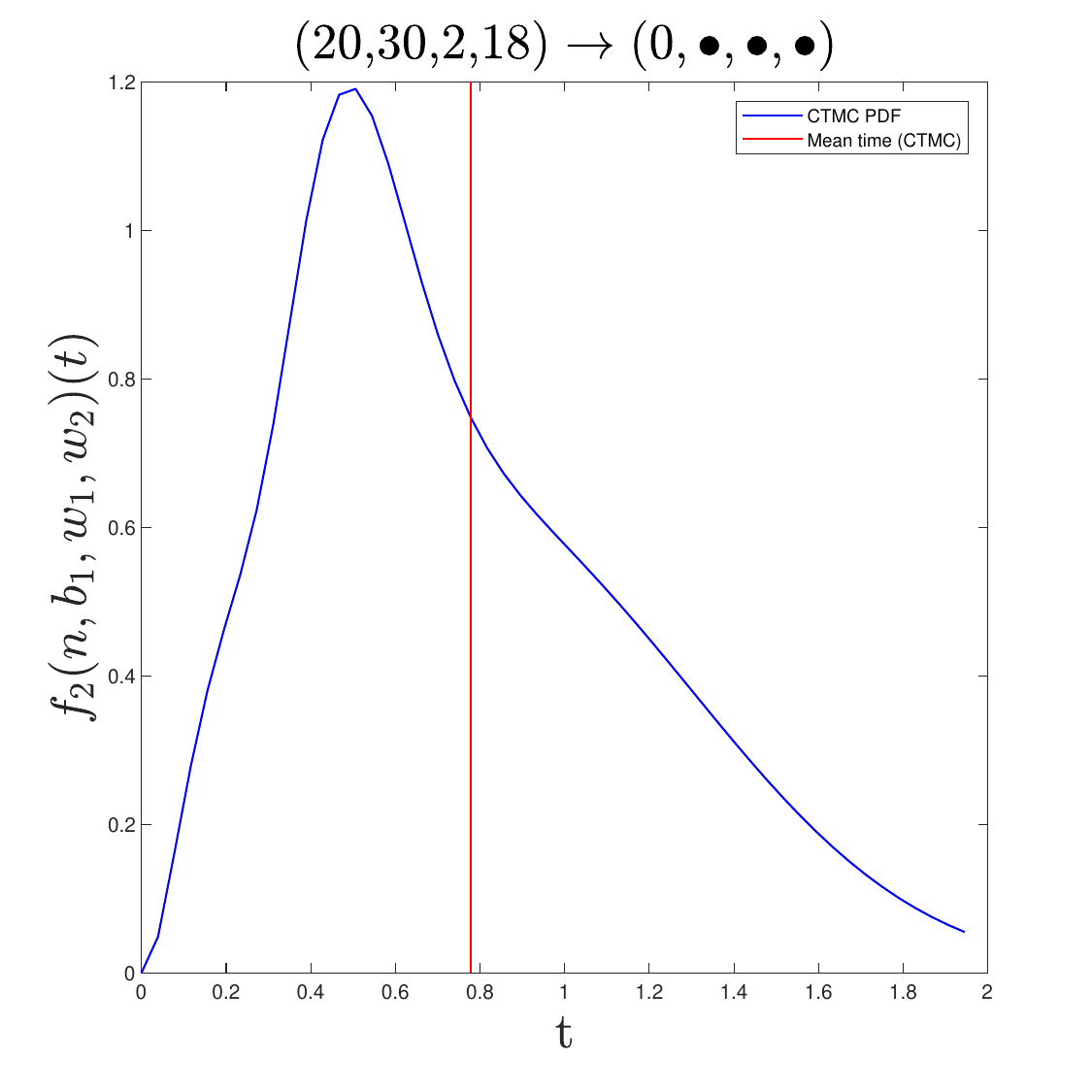}}    
	\caption{Probability densities $f_2(n, b_1, w_1, w_2)(t)$, with $n = B+w_1+w_2$, $b_1 = 30$, $w_1=2$, and $w_2 = 18$, of the waiting times for a Type~$2$ patient arriving in a congested system with $N = 100$, $B = 80$, so that they are the $20^{th}$ tagged Type~$2$ patient in the waiting area. These densities are computed assuming the parameters $\lambda_1 = 5.7961$, $\lambda_2 = 17.9039$, $\mu_1 = 0.1517$, $\mu_2 = 0.4113$, and $\beta_{2 \to 1} = 3$ (see Table~\ref{tab:QBDparameters}). The left, middle, and right figures show the density curves under priority policies $(r_1, r_2) = (0.8, 0.2)$, $(0.6, 0.4)$, and $(0.5, 0.5)$, respectively. Corresponding mean waiting times $\mathbb{E}(T_2(n,b_1,w_1,w_2))$, evaluated using the CTMC model, are $0.6808$, $0.7380$, and $0.7782$ days, respectively.}	\label{conditional_densities_T2_diff_r1_r2}
\end{figure}

\section{Long-run waiting times for Type~$1$ and Type~$2$ patients ($\beta_{2\to1}\geq 0$, $r_1,r_2\in[0,1]$)}
\label{longrun_waiting_times_beta_r1_r2}

We evaluate the distribution of the long-run waiting times for Type~$1$ (complex) and Type~$2$ (other) patients. Table~\ref{tab:long_run_waiting_times} provides long-run mean waiting times $\mathbb{E}(T_i)$ for Type~$i$, $i=1,2$, patients under various priority policies $(r_1,r_2)$ and type-change $\beta_{2\to 1}$, highlighting the sensitivity of waiting times to priority weights. As expected, the mean waiting time $\mathbb{E}(T_1)$ for Type~$1$ patients increases as their priority ($r_1$) decreases, for $\beta_{2 \to 1} = 0, 3$. On the other hand, the mean waiting time $\mathbb{E}(T_2)$ for Type~$2$ patients decreases as their priority ($r_2$) increases, for $\beta_{2 \to 1} = 0$. This is because the access to beds for Type~$2$ patients is minimised as their priority decreases. The mean waiting time $\mathbb{E}(T_2)$ for Type~$2$ patients increases as their priority ($r_2$) increases, for $\beta_{2 \to 1} = 3$. This is because the Type~$2$ patients may change to Type~$1$, and so the priority of the Type~$2$ patients ultimately decreases and the mean waiting time increases. These findings underscore the importance of carefully balancing priority weights to manage patient flow effectively and equitably.

The plots on the left in Figures~\ref{longrun_waittime_QBDCTMC_r1_1_r2_0_beta_0}~$\&$~\ref{longrun_waittime_QBDCTMC_r1_1_r2_0_beta_3} are identical reflecting that under the priority $r_1=1$, Type~$2$$\to$Type~$1$ change does not affect waiting times of Type~$1$ patients. From Figure~\ref{longrun_waittime_QBDCTMC_r1_1_r2_0_beta_0} (right) and Figure~\ref{longrun_waittime_QBDCTMC_r1_1_r2_0_beta_3} (right), we observe that the waiting times of Type~$2$ patients may be longer than the mean under the policy $(r_1,r_2)=(1,0)$ and $\beta_{2\to1}=0$, whereas the waiting times of Type~$2$ patients may be less than the mean under the policy $(r_1,r_2)=(1,0)$ and $\beta_{2\to1}=3$. This highlights the positive impact of considering type-change, which indicates a reduction in the waiting time for Type~$2$ patients.

\begin{table}[H]
\centering
\begin{tabular}{l|llll|llll} 
\toprule
Policy & $r_1=1$  & $r_1=0.8$ & $r_1=0.6$ & $r_1=0.5$ & $r_1=1$ & $r_1=0.8$ & $r_1=0.6$ & $r_1=0.5$ \\

& $\beta_{2\to1}=0$  & $\beta_{2\to1}=0$ &  $\beta_{2\to1}=0$ & $\beta_{2\to1}=0$ &  $\beta_{2\to1}=3$  &  $\beta_{2\to1}=3$ &  $\beta_{2\to1}=3$ & $\beta_{2\to1}=3$ \\ 

\midrule
$\mathbb{E}(T_1)$    & $0.0366$    & $0.0489$  & $0.0741$    & $0.0997$   & $0.0366$  & $0.0489$ & $0.0737$ & $0.0983$ \\

$\mathbb{E}(T_2)$    & $0.4191$    & $0.4151$  & $0.4072$    & $0.3993$   & $0.3152$  & $0.3125$ & $0.3273$ & $0.3419$ \\

\bottomrule
\end{tabular}
\caption{Long-run mean waiting times $\mathbb{E}(T_i)$ for Type~$i$, $i=1,2$, patients corresponding to different priority policies as defined by the pair ($r_1,r_2$). Note that $r_2=1-r_1$, and the parameters used are given by $\lambda_1=5.7961$, $\lambda_2=17.9039$, $\mu_1=0.1517$, $\mu_2=0.4113$.}
\label{tab:long_run_waiting_times}
\end{table}

\begin{figure}[!htbp]
    \centering

    \mbox{\includegraphics[scale=0.4]{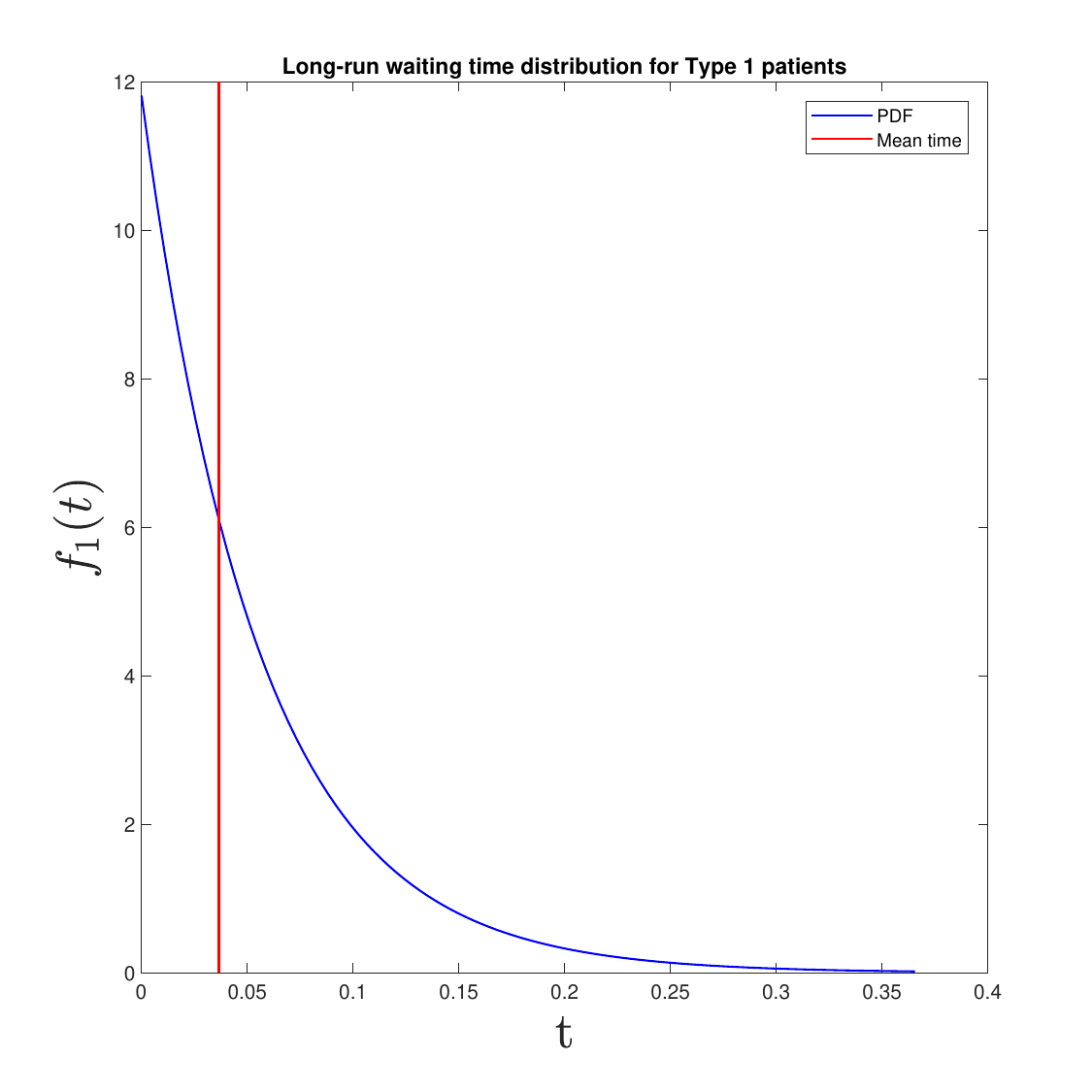}}
    \quad
    \mbox{\includegraphics[scale=0.4]{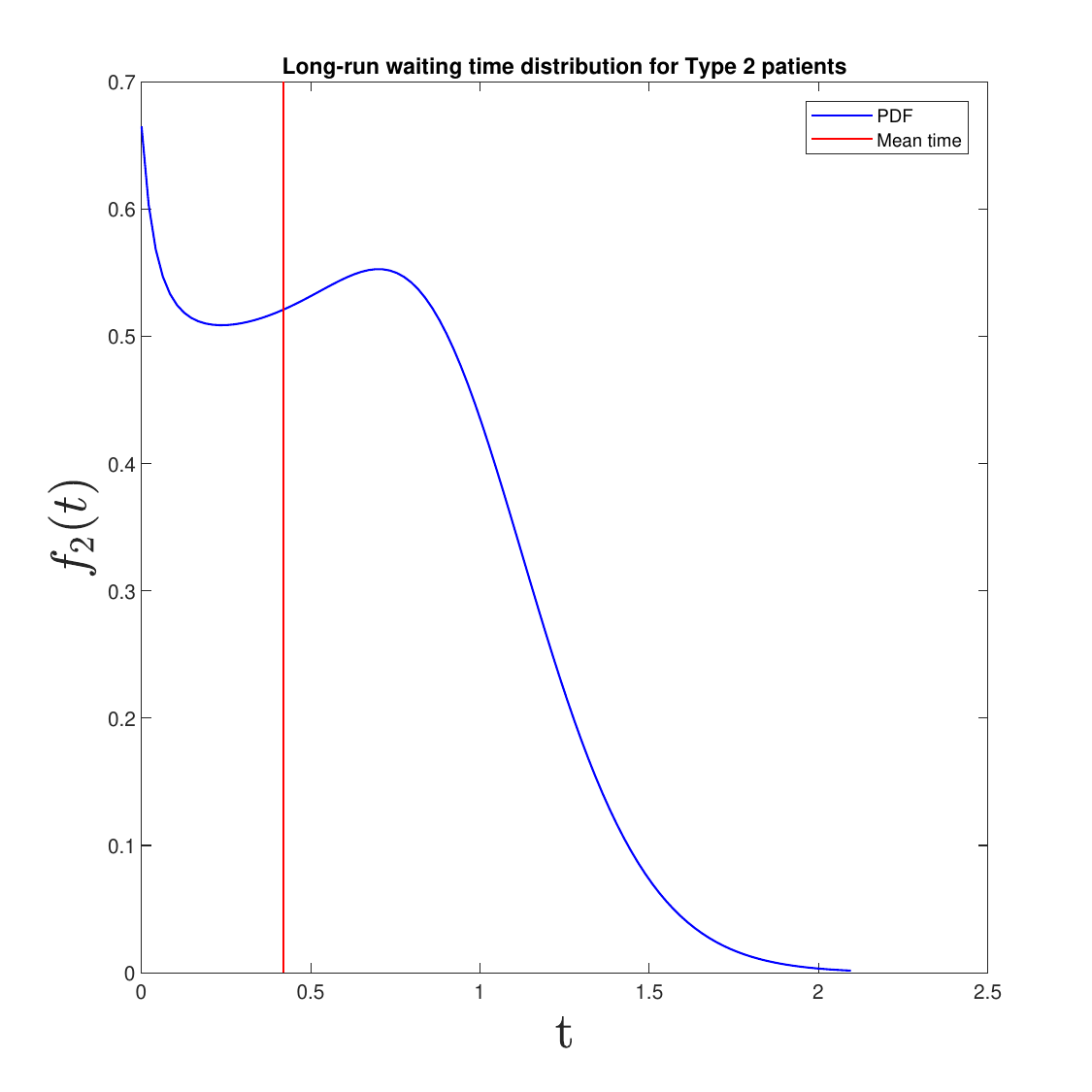}}

    \caption{Long-run probability densities $f_i(t)$ of the waiting time for Type~$1$ (left figure) and Type~$2$ (right figure) patients, evaluated using the CTMC, at the moment when they arrive in the system of capacity of $N=100$ and $B=80$, under the priority policy $r_1=1$, $r_2=0$. The parameters used are given by $\lambda_1=5.7961$, $\lambda_2=17.9039$, $\mu_1=0.1517$, $\mu_2=0.4113$, and $\beta_{2\to 1}=0$. The corresponding long-run mean waiting times for Type~$1$ and Type~$2$ patients are $\mathbb{E}(T_1)=0.0366$ and $\mathbb{E}(T_2)=0.4191$ days, respectively.}  
	\label{longrun_waittime_QBDCTMC_r1_1_r2_0_beta_0}
\end{figure}

\begin{figure}[H]
    \centering

    \mbox{\includegraphics[scale=0.4]{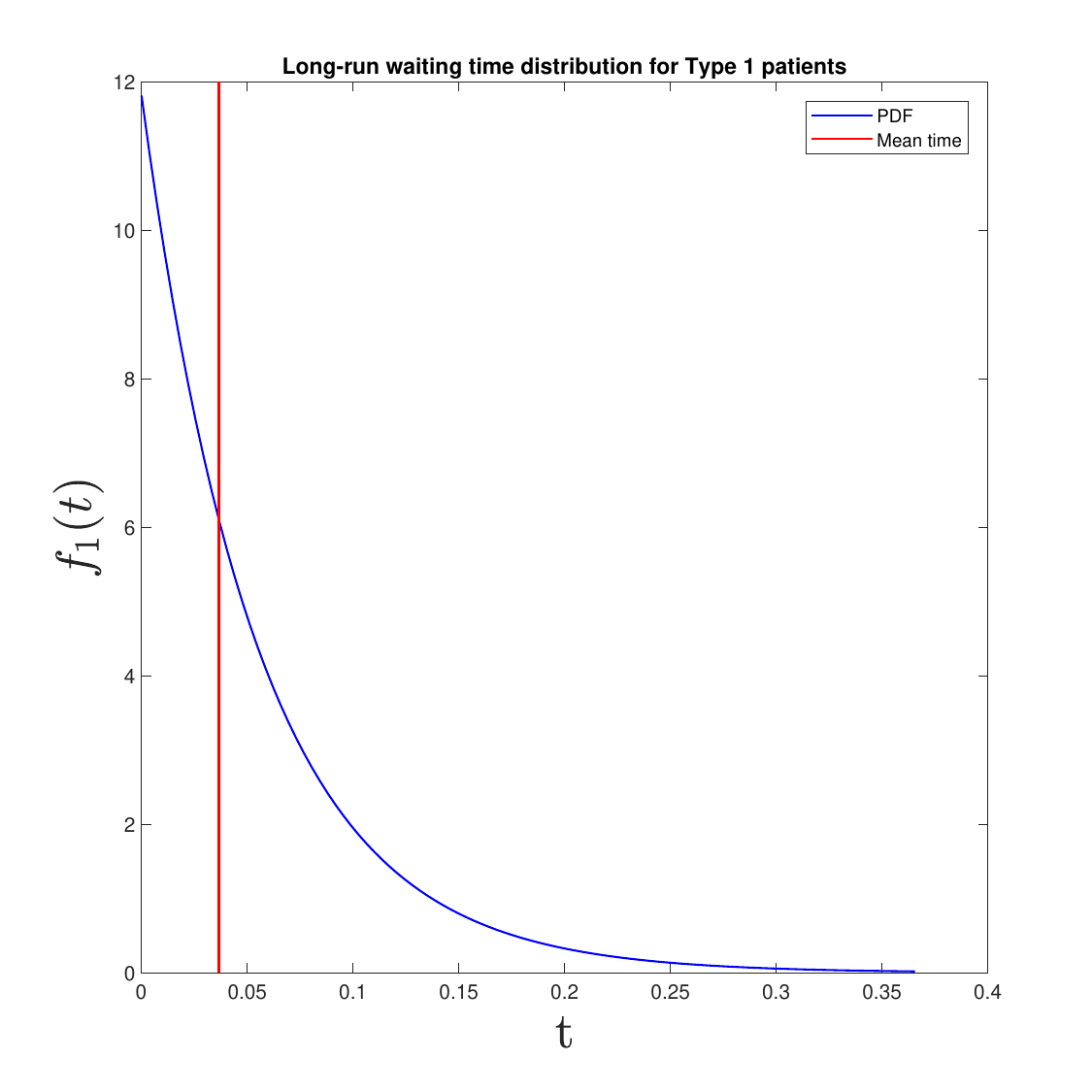}}
    \quad
    \mbox{\includegraphics[scale=0.4]{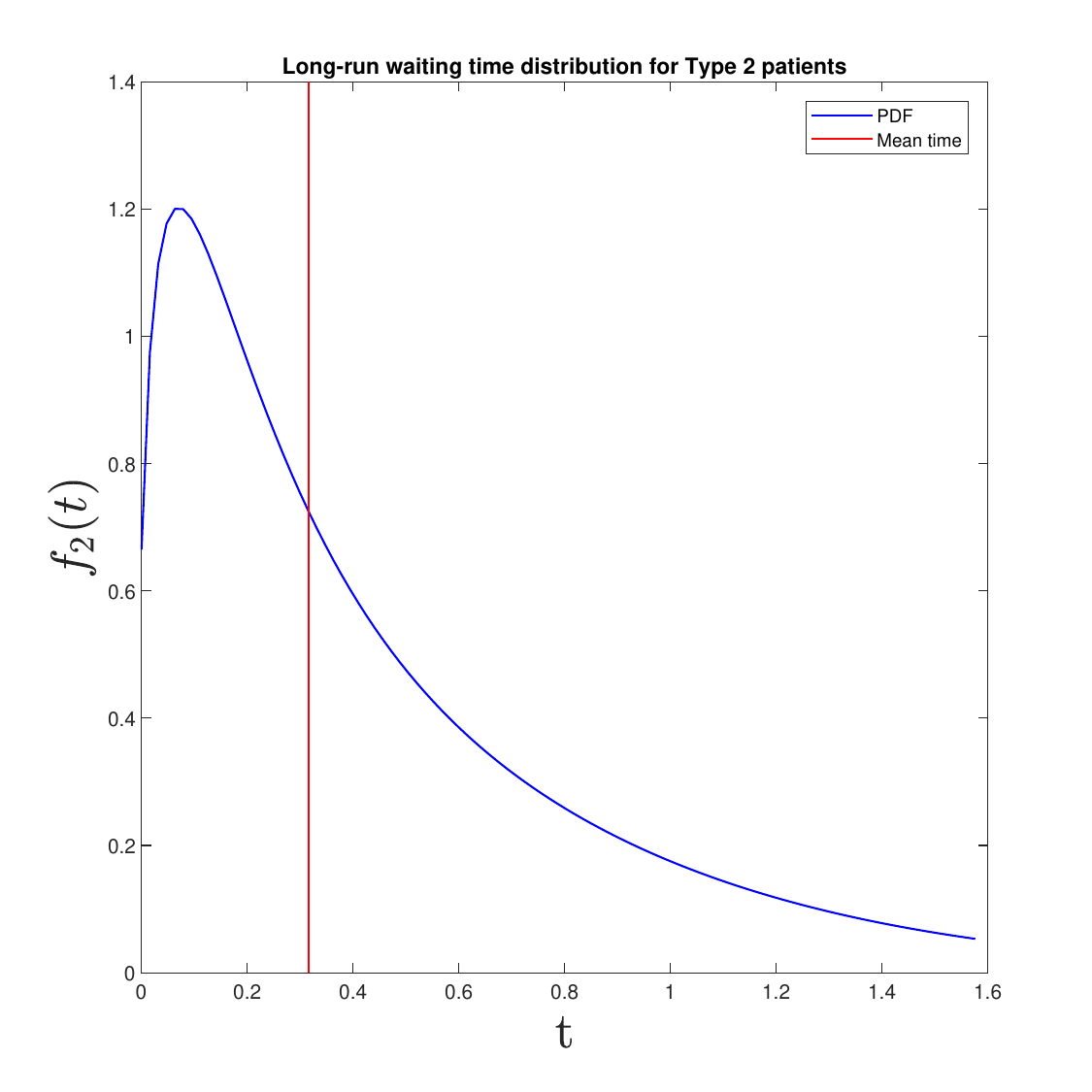}}

    \caption{Long-run probability densities $f_i(t)$ of the waiting time for Type~$1$ (left figure) and Type~$2$ (right figure) patients, evaluated using the CTMC, at the moment when they arrive in the system of capacity of $N=100$ and $B=80$, under the priority policy $r_1=1$, $r_2=0$. The parameters used are given by $\lambda_1=5.7961$, $\lambda_2=17.9039$, $\mu_1=0.1517$, $\mu_2=0.4113$, and $\beta_{2\to 1}=3$. The corresponding long-run mean waiting times for Type~$1$ and Type~$2$ patients are $\mathbb{E}(T_1)=0.0366$ and $\mathbb{E}(T_2)=0.3152$ days, respectively.}
    \label{longrun_waittime_QBDCTMC_r1_1_r2_0_beta_3}
\end{figure}

\section{Discussion}\label{discussion}

In the case when Type~$2$ patients change to Type~$1$ due to health deterioration, we found that increasing priority for Type~$1$ patients in the waiting area can reduce waiting times for both types of patients, because Type~$2$ patients eventually benefit from accelerated access to service due to type change. At the system level, however, increasing Type~$1$ priority has adverse congestion effects, that is, increasing the priority for Type~$1$ patients in the waiting area increases the proportion of time the system operates at full capacity, leading to higher redirection rates $\lambda_{\text{redirect}}$ for the new incoming patients. This gives decision makers important insights that while prioritising Type~$1$ patients improves the early access of both types of patients to the service, it does so at the cost of increased congestion and redirection.

Moreover, as the priority for Type~$1$ patients increases, beds are allocated more frequently to Type~$1$ patients, which leads to fewer Type~$1$ patients in the waiting area. Conversely, Type~$2$ patients are selected less frequently for service and therefore experience longer waiting times prior to service as compared to Type~$1$, resulting in an increased number of Type~$2$ patients in the waiting area, even though a subset of them may eventually convert to Type~$1$ and benefit from priority access to service. Consequently, simply increasing the priority for Type~$1$ patients is unlikely to be an appropriate policy choice in congested systems. Decision-makers must explicitly balance priority weights against the acceptability of operating the system at full capacity and redirecting incoming patients.

We further analyse long‑run waiting times of new arriving patients. When Type~$2$ patients' health deteriorate and they change to Type~$1$ ($\beta_{2\to1}\neq 0$), we observe that increasing the priority for Type~$1$ patients reduces the waiting times of new incoming patients of both types in the long‑run. This occurs because type-change enables deteriorating Type~$2$ patients to access priority service, thereby preventing prolonged waiting times. In contrast, when type-change is not considered, we found that increasing the priority for Type~$1$ patients reduces the waiting time of new incoming Type~$1$ patients but increases the waiting time of new incoming Type~$2$ patients in the long‑run. In this case, priority adjustments induce a direct trade‑off between the two patient classes: improving access for one type necessarily worsens access for the other. As a result, efforts to reduce waiting times for incoming Type~$1$ patients come at the expense of longer waiting times for incoming Type~$2$ patients, and vice versa.

These findings highlight a useful clinically relevant operational insight. If hospital managers do not update the type and so the priority of Type~$2$ patients after their health has sufficiently deteriorated, priority policies alone may not ensure timely access to beds for such patients. Conversely, updating the type of deteriorating Type~$2$ patients to Type~$1$ transforms priority adjustments from a zero‑sum trade‑off into a mechanism that can simultaneously improve waiting times across patient types. This highlights the importance of adjusting the priority weights with evolving patient types, rather than relying on static priority labels in congested healthcare systems.

\section{Conclusion}\label{sec:Conclusion}

In this paper, we studied customer's waiting times in two-class finite-capacity multi-server systems with dynamic admission priorities, dynamically evolving customer type, and abandonment. We constructed a continuous-time Markov chain (CTMC) model for this system and developed a methodology based on Krylov subspace approximation methods to evaluate conditional distributions of the waiting times. Next, we developed a quasi-birth-and-death (QBD) model for this system and derived theoretical expressions based on matrix-analytic methods to evaluate the conditional distributions of the waiting times. Further, we developed expressions for the long-run distributions of the waiting times, based on both models.

We illustrated the application potential of our methodology in multi-server systems, with numerical examples based on a large dataset obtained from a tertiary referral hospital in Australia. We considered two types of patients, complex and other, and compared the conditional waiting times obtained from CTMC and QBD models in the cases when complex patients have absolute admission priority. The results showed that the conditional distributions of the waiting times from the CTMC model approximates those obtained from the QBD model with an error smaller than $10^{-7}$. Further, we evaluated the long-run distributions of the waiting times as well. 

The insights from our analysis demonstrated how admission priorities and customer type evolution can influence both conditional and long-run waiting time distributions. The methodologies developed in this paper provide a practical framework for analysing waiting times and supporting operational decision-making in complex service systems such as hospitals.

\section{Limitations and future directions}\label{limitations}

We acknowledge that empirical validation of the model outputs can be carried out using the existing statistical validation techniques. Such validation would enhance the credibility and reliability of the proposed modeling approach. However, a direct validation of the waiting-time distributions against the empirical data was not feasible in this study, as the available hospital dataset did not include queue-position-dependent waiting-time information.

Several avenues for future research emerge from this study. One natural extension is to generalise the model to accommodate non-exponential service time and abandonment distributions, which would enhance its applicability in healthcare settings where  these distributions often deviate from the memoryless assumption. Expanding the framework to include more than two patient types or to model multi-stage treatment processes such as triage, diagnostics, and treatment would better reflect the complexity of real-world hospital workflows. Future work could also incorporate state-dependent and time-varying arrival patterns, such as arrivals driven by seasonal or epidemic trends, which would further enhance the robustness and realism of the model in practical applications.

\section{Statements and declarations}\label{Sec:Declarations}

\noindent{\bf Data}

Data used in the paper was obtained following ethical approval from the  University of Tasmania Human Research (approval number H23633) and site-specific approval from the Research Governance Office of the Tasmanian Health Service.

\noindent{\bf Authorship contribution statement}

This paper contributes to a chapter in the PhD project by~\textcite{KhokharPhD}. The following are the contributions of the authors, Muhammad Abdullah Khokhar (MAK), Ma\l gorzata M. O'Reilly (MMO), and Richard Turner (RT).
\begin{itemize}
\item Problem formulation and methodology development: MAK, MMO, and RT;
\item Derivation of the theoretical expressions and algorithms: MAK, MMO;
\item  Coding: MAK, MMO;
\item Data analysis: MAK;
\item Numerical analysis: MAK;
\item Conceptualisation, Practical background: MAK, RT;
\item Conceptualisation, Mathematical background: MAK, MMO;
\item Write-up and edits: MAK, MMO, and RT.
\end{itemize}

\bigskip\noindent{\bf Declaration of competing interests}\\
\noindent The authors have no competing interests to declare that are relevant to the content of this article.

\bigskip\noindent{\bf Funding}\\
\noindent This research did not receive any specific grant from funding agencies in the public, commercial, or not-for-profit sectors.

\printbibliography
\end{document}